\newtheorem{theorem}{Theorem}
\newtheorem{corollary}{Corollary}
\newtheorem{lemma}{Lemma}
\newtheorem{definition}{Definition}
\def\eps{\varepsilon}
\def\CR{{\mathcal R}}
\def\CG{{\mathcal G}}
\def\CX{{\mathcal X}}
\def\CY{{\mathcal Y}}
\def\E{{\bf E}}
\def\A{{\mathcal A}}
\def\S{{\mathcal S}}
\def\th{\theta}
\def\vpi{\mbox{\boldmath $\pi$}}
\def\br{{\rm r}}
\def\la{\langle}
\def\ra{\rangle}
\def\bx{{\bf x}}
\def\by{{\bf y}}
\def\bu{{\bf u}}
\def\bc{{\bf c}}
\def\bun{{\bf u}}
\def\fps#1{\frac{\partial}{\partial s_#1}}
\def\bP{{\bf P}}
\def\bQ{{\bf Q}}
\def\bI{{\bf I}}
\def\bM{{\bf M}}
\def\bC{{\bf C}}
\def\bpi{{\boldsymbol{\pi}}}
\def\bG{{\bf G}}
\def\br{{\bf r}}
\def\eps{\epsilon}
\def\CA{\A}
\def\CG{{\cal G}}
\def\binomleft{\left(\begin{array}{c}}
\def\binomright{\end{array}\right)}
\def\la{\langle}
\def\ra{\rangle}
\def\be{\begin{equation}}
\def\ee{\end{equation}}
\def\beq{\begin{eqnarray}}
\def\eeq{\end{eqnarray}}
\def\beqs{\begin{eqnarray*}}
\def\eeqs{\end{eqnarray*}}
\def\fp2u{\frac{\partial^2}{\partial u^2}}
\def\CU{{\cal U}}
\def\CK{{\cal K}}
\def\CR{{\cal R}}
\def\CI{{\cal I}}
\def\bc{{\bf c}}
\def\vpi{\mbox{\boldmath $\pi$}}
\def\bzeta{\mbox{\boldmath $\zeta$}}
\def\lb{\lambda}
\def\parsec{\par\noindent}
\def\med{\medskip\parsec}
\def\E{\mathbf{E}}
\def\CB{{\cal B}}
\def\bbe{{\bf e}}
\def\Res{\hbox{\rm Res}}
\def\td{\bar{d}}
\def\bone{{\bf 1}}
\def\CR{{\cal R}}
\def\bD{{\bf D}}
\def\tbC{\widetilde{\bC}}
\def\hbC{\widehat{\bC}}
\def\hC{\widehat{C}}
\def\CW{{\mathcal W}}
\def\trace{\hbox{\rm trace}}
\begin{document}
\title{Joint String Complexity for Markov Sources: \\
Small Data Matters}

\author{
\IEEEauthorblockN{Philippe Jacquet and Dimitris Milioris and
Wojciech Szpankowski,  \IEEEmembership{Fellow, IEEE,}}
\thanks{
Parts of this paper were presented
at the 2013 IEEE International Symposium on Information Theory, Istanbul,
Turkey.

Philippe Jacquet, Bell Labs - Nokia  91620 Nozay,
France (e-mail: Philippe.Jacquet@inria.fr).

Dimitris Milioris, Bell Labs - Nokia, 91620 Nozay,
France (e-mail: milioris@alcatel-lucent.com.

W. Szpankowski is with the Department of Computer Science, Purdue
University, IN 47907, USA (e-mail: spa@cs.purdue.edu);
also with the Faculty of Electronics, Telecommunications and
Informatics,  Gda\'{n}sk University of Technology, Poland.
This work was supported by NSF Center for Science of Information (CSoI) 
Grant CCF-0939370, and in addition by NSF Grants CCF-1524312, and
NIH Grant 1U01CA198941-01.

}}

\maketitle
\thispagestyle{plain}

\begin{abstract}
String complexity is defined as the cardinality of a set of all distinct words
(factors) of a given string.  For two strings, we introduce
the {\it joint string complexity} as the cardinality of a set of words that are 
common to both strings. String complexity finds a number of applications 
from capturing the richness of a language to finding similarities 
between two genome sequences.  In this paper we analyze the joint 
string complexity when both strings are generated by Markov sources. 
We prove that the joint string complexity grows linearly 
(in terms of the string lengths)
when both sources are statistically indistinguishable and 
sublinearly when sources are statistically not the same.
Precise analysis of the joint string complexity turns out to be quite challenging
requiring subtle singularity analysis and saddle point method 
over infinity many saddle points leading to novel oscillatory 
phenomena with single and double periodicities.
To overcome these challenges,
we apply powerful analytic techniques 
such as multivariate generating functions, multivariate depoissonization 
and Mellin transform, spectral matrix analysis,
and complex asymptotic methods.
\end{abstract}

\textbf{Index terms}: String complexity, joint string complexity, suffix trees,
Markov sources, source discrimination, generating functions, Mellin transform,
saddle point methods, analytic information theory.

\section{Introduction}

In the last decades, several attempts have been made to capture
mathematically the concept of ``complexity'' of a sequence.
The notion is connected with quite deep mathematical properties, including 
rather elusive concept of randomness in a string
(see e.g.,  \cite{IYZ02,Li93,niederreiter}), and 
the ``richness of the language''.
The \emph{string complexity} is defined
as the number of {\it distinct substrings} of the underlying string.
More precisely, if $X$ is a sequence and $I(X)$ is its set of factors
(distinct subwords), then the cardinality $|I(X)|$ is the complexity 
of the sequence.  For example, if $X=aabaa$ then 
$I(X)=\{\nu, a,b,aa,ab,ba,aab,aba,baa,aaba,abaa,aabaa\}$ 
and $|I(X)|=12$ ($\nu$ denotes the empty string).
Sometimes the complexity of a string is called the $I$-complexity~\cite{bh12}. 
This measure is simple but quite
intuitive. Sequences with low complexity contain a large number
of repeated substrings and they eventually become periodic.


In general, however, information contained in a string cannot be measured
in absolute and a reference string is required. 
To this end we introduced in~\cite{jacquet} the concept of the {\it joint} 
string complexity, or $J$-complexity, of two strings.
The $J$-complexity is the number of {\it common distinct factors} in two sequences. 
In other words, the $J$-complexity of sequences $X$ and $Y$ is equal to 
$J(X,Y)=|I(X)\cap I(Y)|$. 
We denote by $J_{n,m}$ the {\it average} value of $J(X,Y)$ when $X$ is 
of length $n$ and $Y$ is of length $m$. In this paper, we study 
the joint string complexity for Markov sources when $n=m$.  


The $J$-complexity is an efficient way of estimating similarity degree of
two strings. For example, genome sequences of two dogs will
contain more common words than genome sequences of a dog and
a cat. Similarly, two texts written in the same language  have 
more words in common than texts written in very different languages.  
Thus, the $J$-complexity is larger when languages are close ({\it e.g.} 
French and Italian), and smaller when languages are 
different ({\it e.g.} English and Polish).
In fact, texts in the same language but on different 
topics ({\it e.g.} law and cooking) have smaller $J$-complexity than 
texts on the same topic ({\it e.g.} medicine). 
Furthermore, string complexity has a variety of applications in detection
of similarity degree of two sequences, for example ``copy-paste" 
in texts or documents that will allow to detect plagiarism.
It could also be used in analysis of social networks
(e.g. tweets that are limited to 140 characters) and classification.
Therefore it could be a pertinent tool for automated monitoring
of social networks.  However, real time search in blogs, tweets and other social
media must balance quality and relevance of the content,
which -- due to short but frequent posts -- is still an unsolved problem.
However, for these short texts precise analysis is highly desirable.
We call it the "small data" problem and we hope our rigorous asymptotic analysis  
of the joint string complexity will shed some light on this problem.
In this paper we offer a precise analysis of the joint complexity  
together with some experimental results (cf. Figures~\ref{fig-eng} and \ref{fig-same})
confirming usefulness of the joint string complexity for text discrimination. 
To model real texts, we assume that both sequences are generated by Markov sources
making the analysis quite involved. To overcome these difficulties we shall
use powerful analytic techniques such as multivariate generating functions,
multivariate depoissonization and Mellin transform, spectral matrix analysis,
and saddle point methods.


String complexity was studied extensively in the past.
The literature is reviewed in \cite{jls04} where precise analysis of
string complexity is discussed for strings generated by
unbiased memoryless sources. Another analysis of the same situation
was also proposed in \cite{jacquet} where for the first time
the joint string complexity for memoryless sources was presented.
It was evident from \cite{jacquet} that precise analysis of
the joint complexity is quite challenging due to intricate
singularity analysis and infinite number of saddle points.
In this paper we deal with the joint string complexity
for Markov sources. To the best of our knowledge
this problem was never tackled before except in our recent conference paper
\cite{js12}. As expected, its
analysis is very sophisticated but at the same time quite rewarding.
It requires generalized (two-dimensional) dePoissonization and generalized
(two-dimensional) Mellin transforms.

In~\cite{jacquet} is proved that the $J$-complexity of two texts generated
by two {\it different} binary memoryless sources  grows as
$$\gamma\frac{n^\kappa}{\sqrt{\alpha\log n}}$$ 
for some $\kappa<1$ and $\gamma,\alpha>0$ depending 
on the parameters of the sources. 
When the sources are identical, then the $J$-complexity growth is $O(n)$, hence
$\kappa=1$. 
When the texts are identical (i.e, $X=Y$), then the $J$-complexity is 
identical to the $I$-complexity and it grows as $\frac{n^2}{2}$~\cite{jls04}. 
Indeed, the presence of a  common factor 
of length $O(n)$ inflates the $J$-complexity to $O(n^2)$. 

We should point out that our experiments indicate a very slow convergence
of the complexity estimates for memoryless sources. 
Furthermore, memoryless sources are not appropriate for modeling many sources, 
e.g., natural languages. In this paper and \cite{js12} 
we extend the $J$-complexity estimates 
to Markov sources of any order for a finite alphabet. Although Markov 
models are no more realistic in some applications than memoryless sources, 
they seem to be fairly good approximation for text generation. 
\begin{figure}
\centerline{\includegraphics[width=12cm]{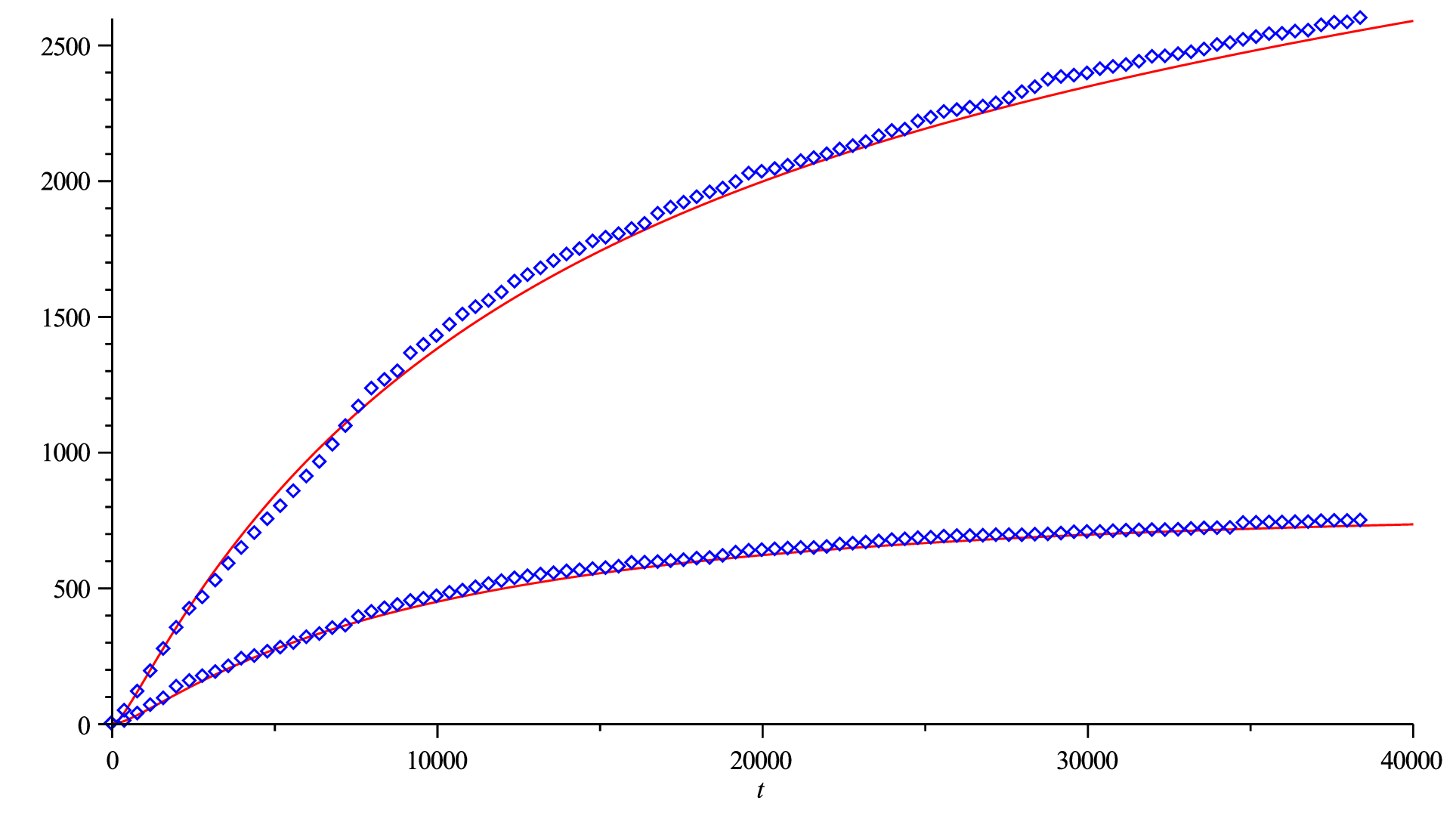}}
\caption{Joint complexity of actual simulated texts (3rd Markov order) of 
English, vs French (top), 
Polish (bottom) languages, versus average theoretical (plain).}
\label{fig-eng}
\end{figure}

Here, we derive a second order asymptotics for 
$J$-complexity for Markov sources of the following form  
$$
\gamma\frac{n^\kappa}{\sqrt{\alpha\log n+\beta}}
$$
for some $\beta>0$. This new estimate converges faster, although for small text lengths
of order $n\approx 10^2$ one needs to compute additional terms. 
In fact, for some Markov sources our analysis indicates 
that $J$-complexity oscillates with $n$.
This is manifested by appearing a periodic function  
in the leading term of our asymptotics. 
Surprisingly, this additional term even further 
improves the convergence for small values of $n$. 


Let us now summarize in full our main results Theorems~\ref{theoJC}--\ref{theogege}.
In our first main result Theorem~\ref{theoJC} we observe that the joint
string complexity $J_{n,m}$ can be asymptotically analyzed by considering a
simpler quantity called the {\it joint prefix complexity} denoted as $C_{n,m}$.  
It counts the number of common prefixes of two sets of
size $n$ and $m$, respectively,  of {\it independently} generated 
strings by two Markov sources. In the reminding part of the paper we only
deal with the joint prefix complexity $C_{n,m}$. First in 
Theorem~\ref{th-same} we considered two statistically identical sources
and prove that the joint string complexity grows linearly
with $n$: For certain sources called 
{\it noncommensurable} there is a constant in front of $n$ (that we
determine) while for {\it commensurable sources} the factor in front of $n$ is
a fluctuating periodic function of small amplitude. We shall see these two
cases permeate all our results. Then we deal 
in Theorem~\ref{th-nilpotent} with a special sources in which 
the underlying  Markov matrices
are nilpotent. After that we study general sources, however,
we split our presentation and proofs into two parts. First,
in Theorems~\ref{theogenmark0}--\ref{theo01} we 
assume that one of the source is uniform.
Under this assumption we develop techniques to prove our results.
Finally, in Theorem~\ref{th3} -- \ref{theogege} we discuss general case.

Let us now compare our theoretical results with experimental results on
real texts generated in different languages.
In Figure~\ref{fig-eng} we compare the joint complexity of a simulated
English text with the same length texts simulated in French and in Polish. 
In the simulation we use
a Markov model of order 3. It is easy to see that even for 
texts of lengths smaller than a
thousand one can discriminate between these languages. 
In fact, computations show that for English 
versus French we have $\kappa= 0.18$; 
and versus Polish: $\kappa=0.1$, 
Furthermore, for a Markov model of order 3 we find that  English text 
has entropy (per symbol): $0.944$; French: $0.934$; Polish: $0.665$.  
The theoretical curves shown in Figure~\ref{fig-eng}  are obtained through
Theorem~\ref{theogenmark}, however, for small values of the text length it
is computed via the iterative resolution of 
functional equations (\ref{jac1-bis}) and (\ref{jac2-bis}).
Figure~\ref{fig-same} shows the continuation of our theoretical estimates 
up to $n=10^{10}$ and compared with the theoretical estimate $O(n^\kappa)$
as presented in Theorems~\ref{th3} -- \ref{theogenmark}. 

\begin{figure}
\centerline{\includegraphics[width=12cm]{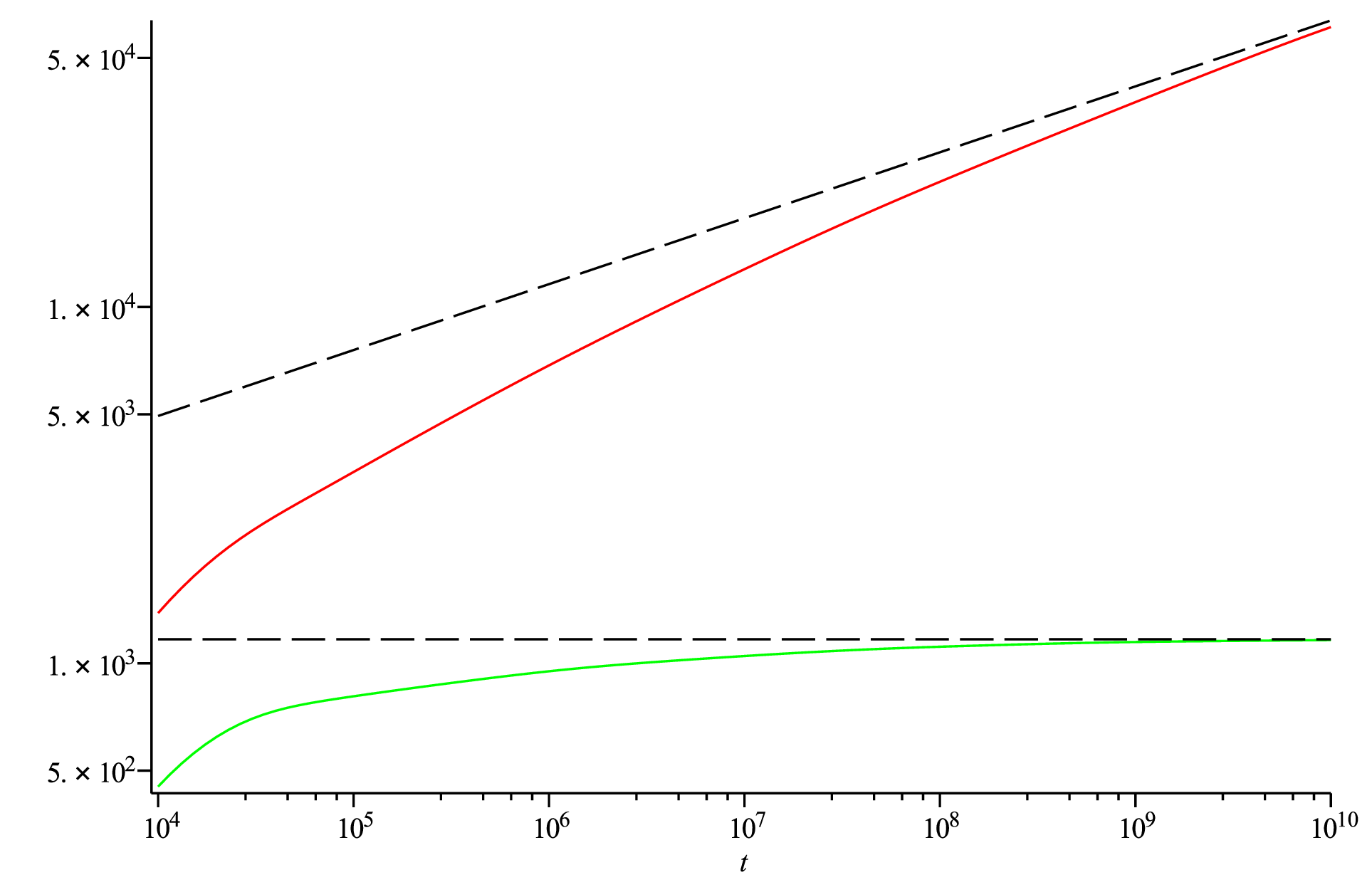}}
\caption{Average theoretical Joint complexity of 3rd Markov order text of English, vs French (top), and vs
Polish (bottom) languages, versus order estimate $O(n^\kappa)$.}
\label{fig-same}
\end{figure}

It didn't escape our attention that the joint string complexity can be
used to discriminate Markov sources \cite{ziv} since, as already observed, 
the growth of the joint string complexity is $O(n^{\kappa})$  with $
\kappa=1$ when sources are statistically indistinguishable and $\kappa<1$
otherwise. For example, we can use the joint string complexity to
verify authorship of an unknown manuscript by comparing it to a manuscript
of known authorship and checking whether $\kappa=1$ or not. 
More precisely, we propose to introduce the
following discriminant function
$$
d(X,Y)=1-\frac{1}{\log n} \log J(X,Y)
$$
for two sequences $X$ and $Y$ of length $n$. This discriminant
allows us to determine whether $X$ and $Y$ are generated by the
same Markov source or not by verifying whether
$d(X,Y) = O(1/\log n) \to 0$ or
$d(X,Y)=1-\kappa+O(\log\log n/\log n)>0$,
respectively. In fact, we used it with some success 
to classify twitter messages (see SNOW 2014 challenge of tweets detection).

The paper is organized as follows.
In the next section we present our main results Theorems~\ref{theoJC}--\ref{theogege}.
We prove Theorem~\ref{theoJC} in Section~\ref{sec-th1}. Then we present
some preliminary results in Section~\ref{sec-preliminary}. In particular, we derive
the functional equation for the joint prefix complexity $C_{n,m}$, establish some
depoissonization results, and derive double Mellin transform. 
We first prove the nilpotent case in Section~\ref{sec-nilpotent}. 
The proofs of Theorems~\ref{theogenmark0} -- \ref{theo01} are presented in
Section~\ref{sec-special}, and the proofs of Theorem~\ref{th3} -- \ref{theogege}
are discussed in Section~\ref{sec-last}.


\section{Main Results}
\label{sec-main}

In this section we define precisely our problem, introduce some
important notation, and present our main results 
Theorem~\ref{theoJC} -- Theorem~\ref{theogege}. The proofs are presented in
the remaining parts of the paper and Appendix.

\subsection{Models and notations}

We begin by introducing some general notation.
Let $X$ and $w$ be two strings over the alphabet $\CA$.
We denote by $|X|_w$ the number of times $w$ occurs in $X$
(e.g., $| abbba|_{bb}=2$ when $X=abbba$ and $w=bb$).  
By convention $|X|_\nu=|X|+1$,
where $\nu$ is the empty  string.

Throughout we denote by $X$ a string (text) whose complexity we plan to
study. We also assume that its length $|X|$ is equal to $n$.  
Then we define $I(X)=\{w: ~|X|_w\ge 1\}$, that is, $I(X)$ contains
all {\it distinct} subwords of $X$.
Observe that the string complexity $|I(X)|$ can be represented as
$$
|I(X)|=\sum_{w\in\CA^*}1_{|X|_w \ge 1}, 
$$
where $1_A$ is the indicator function of the event $A$.
Notice that $|I(X)|$ is equal to the number 
of nodes in the associated suffix tree of $X$ \cite{jls04,spa-book}
(see also \cite{js94}).

Now, let $X$ and $Y$ be two strings (not necessarily of the same length).
We define the {\it joint string complexity} as 
the cardinality of the set $J(X,Y)=I(X)\cap I(Y)$, that is,
$$
|J(X,Y)|=\sum_{w\in\CA^*}1_{|X|_w\ge 1}\times 1_{|Y|_w\ge 1}~.
$$
In other words, $J(X,Y)$ represents the number of {\it common} 
and distinct subwords of both $X$ and $Y$.
For example, if $X=aabaa$ and $Y=abbba$, then $J(X,Y)=\{\eps,a,b,ab,ba\}$.

In this paper throughout
we assume that both strings $X$ and $Y$ 
are generated by two {\it independent Markov sources} of order $r$ 
(we will only deal here with Markov of order 1, 
but extension to arbitrary order is straightforward).
We assume that source $i$, for $i\in\{1,2\}$ has the transition probabilities
$P_i(a|b)$ from state $b$ to state $a$, where $a,b\in\CA^r$. 
We denote by $\bP_1$ (resp. $\bP_2$) the transition matrix of Markov source 
1 (resp. source 2). The stationary distributions are respectively
denoted by $\vpi_1(a)$ and $\vpi_2(a)$ for $a\in \CA^r$.
Throughout, we consider general Markov sources with transition matrices $\bP_i$ 
that may contain zero  coefficients. This assumption leads to
interesting embellishment of our results. 

Let $X_n$ and $Y_m$ be two strings of respective lengths $n$ and $m$, 
generated by  Markov source 1 and Markov source 2, respectively. 
We write 
\begin{equation}
\label{eq-comp}
J_{n,m}=\E(|J(X_n,Y_m)|)-1 =
\sum_{w\in \CA^*\setminus \nu} P(|X|_w\ge 1) P(|Y|_w \ge 1)
\end{equation}
for the joint complexity, {\it i.e.} omitting the empty string. 
In this paper we study $J_{n,m}$ for $n=\Theta(m)$.

It turns out that analyzing $J_{n,m}$ is very challenging. 
It is related to the number of common nodes in two suffix trees, one
built for $X$ and the other built for $Y$. 
We know that analysis of a single suffix tree is quite challenging
\cite{js94,fw05}. Its analysis is reduced to study a simpler structure
known as {\it tries}, a digital tree built from prefixes of a set of {\it independent}
strings. We shall follow this approach here.
Therefore, we introduce another concept. Let $\CX$ be a set of infinite strings, 
and we define the prefix set $\CI(\CX)$ of $\CX$ as the set of prefixes of 
$\CX$. Let $\CX$ and $\CY$ be now two sets of strings and we 
define the {\it joint prefix complexity} as the number of common prefixes, 
{\it i.e.} $|\CI(\CX)\cap\CI(\CY)|$. When $\CX_n$ is a set of $n$ 
independent strings generated by source 1 and $\CY_m$ is a set of $m$ 
independent strings generated by source 2, then we define 
$C_{n,m}$ as 
$$
C_{n,m}=\E|\CI(\CX_n)\cap\CI(\CY_m)|)-1
$$ 
which represents the number of common prefixes between $\CX_n$ and $\CY_m$.

Observe that we can re-write $C_{n,m}$ in a different way. Define 
$\Omega^i(w)$ for $i=1,2$ as the number of strings in $\CX$ and $\CY$, respectively,
whose prefixes are equal to $w$ provided that strings in $\CX$ are generated by
source $1$ and strings in $\CY$ by source $2$. Then, it is easy to notice that
\begin{equation}
\label{eq-omega}
C_{n,m} =
\sum_{w\in \CA^*\setminus \nu} P(\Omega^1_n(w) \geq 1) P(\Omega^2_m(w) \ge 1)
\end{equation}
which should be compared to (\ref{eq-comp}).

The idea is that $C_{n,m}$ is a good approximation of $J_{n,m}$ as we present
in our first main result Theorem~\ref{theoJC}. We shall see in 
Sections~\ref{sec-preliminary} -- \ref{sec-last}
that $C_{n,m}$ are easier to analyze, however, far from simple.
In fact, $C_{n,m}$ has a nice interpretation. It corresponds to the number of
common nodes in two tries built from $\CX$ and $\CY$. We know
\cite{jst01,js-book,spa-book} that tries are easier to analyze than suffix trees.

\subsection{Summary of Main Results}

We now present our main theoretical results. 
In the first foundation result below we show that asymptotically we can analyze
$J_{n,m}$ through the quantity $C_{n,m}$ defined above in (\ref{eq-omega}).
The proof of the next result can be found in Section~\ref{sec-th1}.

\begin{theorem}\it
Let $n$ and $m$ be of the same order.
Then there exists $1/2\le\eps<1$ such that  
\begin{equation}
\label{eq-th1}
J_{n,m}=C_{n,m}+O(n^\eps+m^\eps)
\end{equation}
as $n\to\infty$.
\label{theoJC}
\end{theorem}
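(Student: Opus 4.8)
The plan is to compare the two sums in (\ref{eq-comp}) and (\ref{eq-omega}) term by term and to isolate the words $w$ that contribute a genuine discrepancy between $J_{n,m}$ and $C_{n,m}$. The key observation is that the quantity $P(|X_n|_w\ge 1)$, the probability that a string of length $n$ generated by source $1$ contains $w$ as a factor, differs from the ``independent-prefixes'' probability $P(\Omega^1_n(w)\ge 1)$ only because the $n-|w|+1$ occurrence positions in $X_n$ overlap and are correlated, whereas in the trie model the $n$ strings are independent. For a fixed word $w$, the expected number of occurrences of $w$ in $X_n$ is close to $n\,\pi_1(w)$, and similarly the expected number of strings in $\CX_n$ with prefix $w$ is $n\,\pi_1(w)$ up to boundary effects of order $|w|$. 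So for words with $n\pi_1(w)$ bounded away from $0$ and $\infty$ both probabilities are $\Theta(1)$ and we must control their difference more carefully; for words with $n\pi_1(w)$ very small both probabilities are $O(n\pi_1(w))$; and for words with $n\pi_1(w)$ large both probabilities are $1-o(1)$. A first-moment/second-moment (Chebyshev) argument on $|X_n|_w$, using the fact that the Markov source mixes geometrically, gives $|P(|X_n|_w\ge 1)-P(\Omega^1_n(w)\ge 1)|\le \min\{1, Cn\pi_1(w)\}$ minus the corresponding trie estimate, and a symmetric bound for source $2$.

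The next step is to bound $\sum_{w}\big[P(|X_n|_w\ge 1)P(|Y_m|_w\ge 1) - P(\Omega^1_n(w)\ge 1)P(\Omega^2_m(w)\ge 1)\big]$ by splitting the word length $|w|=k$ into ranges. For $k$ small (say $k\le (1-\delta)\log n/h$ where $h$ is the relevant Rényi-type exponent of the product source), essentially every such word occurs in both $X_n$ and $Y_m$ with probability close to $1$, so both products are $1-o(1)$ and the per-word discrepancy is exponentially small in a way that still sums to a polynomially small total because there are at most $|\CA|^k$ such words. For $k$ large (say $k\ge (1+\delta)\log n/h$) both products are $O(n m\,\pi_1(w)\pi_2(w))$, which sums over all $w$ of length $\ge k$ to something geometrically decaying, hence negligible. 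The delicate window is $k=\Theta(\log n)$, the same window that produces the main term of $C_{n,m}$ itself; there one uses the geometric mixing of the Markov chain to write $P(|X_n|_w\ge 1)=P(\Omega^1_n(w)\ge 1)+O(\,|w|\,\pi_1(w) + (n\pi_1(w))^2\,)$ — the first error term from the $|w|$ boundary positions where an occurrence straddles the start, the second from the inclusion–exclusion correction for two or more (possibly overlapping) occurrences. Multiplying out and summing, the dominant leftover is $\sum_{w}\min\{1,n\pi_1(w)\}\cdot\min\{1,m\pi_2(w)\}$ restricted to words whose relevant probabilities are small, which is exactly the regime where $C_{n,m}$ (and $J_{n,m}$) grows like $n^\kappa$ with $\kappa<1$, or like $n$ for indistinguishable sources; in either case the correction is of strictly smaller polynomial order, giving the exponent $\eps$ with $1/2\le\eps<1$. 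The lower bound $1/2$ on $\eps$ comes from the square-root fluctuation in the second-moment estimate for $|X_n|_w$.

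The main obstacle, I expect, is making the estimate in the critical window $k=\Theta(\log n)$ uniform over all words $w$, including words with highly autocorrelated structure (e.g. $w=a^k$) for which occurrences of $w$ in $X_n$ are strongly clustered and the naive second-moment bound is lossy; here one needs the correlation polynomial / autocorrelation set of $w$ and the fact that its contribution, summed over words, is still subdominant because highly self-overlapping words are rare and carry small stationary probability when the source is not degenerate. A secondary technical point is handling Markov sources with zero transition coefficients, where $\pi_i(w)=0$ for many $w$ and the relevant spectral radius governing $h$ must be taken over the admissible sub-alphabet; this is exactly the ``interesting embellishment'' flagged earlier, and one must check the splitting of word-length ranges still goes through with the correct exponent. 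Once these uniformity issues are settled, assembling the three range estimates and invoking the growth order of $C_{n,m}$ established in the later sections (Theorems~\ref{th-same}--\ref{theogenmark}) yields (\ref{eq-th1}).
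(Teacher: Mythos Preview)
Your proposal has a genuine gap in the treatment of the critical window $k=\Theta(\log n)$. The per-word error bound you write,
\[
P(|X_n|_w\ge 1)=P(\Omega^1_n(w)\ge 1)+O\bigl(|w|\,\pi_1(w)+(n\pi_1(w))^2\bigr),
\]
is too weak to close the argument: in the critical window one has $n\pi_1(w)=\Theta(1)$, so the inclusion--exclusion term is $\Theta(1)$ for every such word. Multiplying by the source-2 factor and summing over the (exponentially many) words of this length gives an expression of the same order as $C_{n,m}$ itself, not a lower-order correction. A bare second-moment/Chebyshev estimate on $|X_n|_w$ cannot do better here, because it does not distinguish the typical word (with negligible self-overlap) from the atypical word; the bound you obtain is governed by the worst case. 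You correctly identify autocorrelation as ``the main obstacle,'' but the proposal does not supply a mechanism to extract any saving from it.

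The paper's proof proceeds quite differently. It controls the per-word discrepancy $d_n(w)=P(|X_n|_w\ge 1)-(1-(1-P(w))^n)$ analytically, via the exact generating function $\sum_n P(|X_n|_w=0)z^n=S_w(z)/D_w(z)$ with $S_w$ the autocorrelation polynomial, locating the dominant root $A_w$ of $D_w$ and comparing $A_w^{-n}$ with $(1-P(w))^{-n}$ through a Mellin transform. The output is a bound of the form $d_n(w)=O\bigl((nP(w))^{\eps}\,k\,\delta^k\bigr)+R_n(w)$ for words $w\in\CB_k$ (those without long self-overlap), where the factor $k\delta^k$ with $\delta<1$ comes from $S_w(1)=1+O(\delta^k)$ and $S'_w(1)=O(k\delta^k)$. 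It is this exponential-in-$k$ saving, absent from your moment bound, that makes $\sum_{k}\sum_{w\in\CB_k}|d_n(w)|$ sum to $O(n^{\eps})$. Words in $\CA^k\setminus\CB_k$ lose the $\delta^k$ factor in $d_n(w)$ but carry total probability $O(\delta^k)$ themselves, so their contribution is handled the same way. Your three-range decomposition by word length is not the operative splitting; the operative splitting is $\CB_k$ versus $\CA^k\setminus\CB_k$, and the key technical input is the autocorrelation-polynomial expansion of $A_w$, not a variance inequality.
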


In the rest of the paper we shall analyze $C_{n,n}$. We should point out that
the error term could be as large as the leading term, but for  sources 
that are relatively close the error term will be negligible. 

Now we presents a series of results each
treating different cases of Markov sources. 
However, our results depend on whether the underlying Markov sources 
are commensurable or not so we define them next.

\begin{definition}[Rationally Related Matrix]
We say that a matrix $\bM=[m_{ab}]_{(a,b)\in\CA^2}$ is
{\it rationally related} if $\forall (a,b,c)\in\CA^3$ we have
$
m_{ab}+m_{ca}-m_{cb}\in\mathbb{Z},
$
where $\mathbb{Z}$ is the set of integers.
\end{definition}

\begin{definition}[Logarithmically rationally related matrix]
We say that a matrix $\bM=[m_{ab}]$ is {\it logarithmically
rationally related} if there exists a non zero real number $x$ such that 
the matrix $x\log^*(\bM)$ is rationally related, where the matrix 
$\log^*(\bM)$ is composed of 
$\log(m_{ab})$ when $m_{ab}>0$ and zero otherwise.
The smallest non negative value $\omega$ of the real $x$ defined above 
is called the root of $\bM$.
\end{definition}

The following matrix is an example of logarithmically rationally related matrix:
$$
\bP=\left[
\begin{array}{cccc}
\frac{1}{4}&\frac{1}{8}&\frac{1}{2}&\frac{1}{4}\\
\frac{1}{4}&\frac{1}{8}&\frac{1}{8}&\frac{1}{2}\\
\frac{1}{4}&\frac{1}{4}&\frac{1}{4}&\frac{1}{8}\\
\frac{1}{4}&\frac{1}{2}&\frac{1}{8}&\frac{1}{8}
\end{array}
\right] .
$$
Its root is $1/log 2$.

\begin{definition}[Logarithmically commensurable pair]
We say that a pair of two matrices $\bM=[m_{ab}]_{(a,b)\in\CA^2}$ and
$\bM'=[m'_{ab}]$ is {\it logarithmically commensurable} if there exist
a  pair of real numbers $(x,y)$ such that
$x\log^*(\bM)+y\log^*(\bM')$ is not null and is logarithmically rationally related.
\end{definition}

Notice that when $\bM$ and $\bM'$ are both rationally related, then  the pair is
logarithmically commensurable. Nevertheless it is possible to have
logarithmically commensurable pairs with the individual matrices not
logarithmically rationally related. For example when
$\log^*\bM'=2\pi\bQ+\log^*\bM$ with $\bQ$ an integer matrix.

We are now in the position to discuss our first main result 
for Markov sources that are statistically indistinguishable.
Throughout we  present results for $m=n$.

\begin{theorem}
\label{th-same}\it
Consider the average joint complexity of two texts of length $n$ 
generated by the same  general stationary Markov source, that is,
$\bP:=\bP_1=\bP_2$.
\parsec
{\rm (i)} [{\sl Noncommensurable Case}.]
Assume that $\bP$ is not logarithmically rationally related.
Then
\begin{equation}
\label{p1}
J_{n,n}=\frac{2n\log 2}{h}+o(n)
\end{equation}
where $h$ is the entropy rate of the source defined as
$h=\sum_{a,b \in \CA} \pi(a) P(a|b)$. 
\parsec
{\rm (ii)} [{\sl Commensurable Case}.]
Assume that $\bP$ is logarithmically rationally related.
Then there is $\eps<1$ such that: 
\begin{equation}
\label{p2}
J_{n,n}=\frac{2n\log 2}{h}(1+Q_0(\log n))+O(n^{\eps})
\end{equation}
where $Q_0(.)$ is a periodic function of small amplitude.
(In Section~\ref{sec-same} compute explicitly $Q$.)
\end{theorem}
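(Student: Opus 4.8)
The plan is to reduce everything to the joint prefix complexity $C_{n,n}$ via Theorem~\ref{theoJC}, so that it suffices to prove (\ref{p1}) and (\ref{p2}) with $J_{n,n}$ replaced by $C_{n,n}$, provided we can later certify that the error term $O(n^\eps)$ from Theorem~\ref{theoJC} is dominated by (or absorbed into) the stated error. Since both sources are the same, $\Omega^1$ and $\Omega^2$ in (\ref{eq-omega}) have the same law, so $C_{n,n}=\sum_{w\neq\nu}P(\Omega_n(w)\ge 1)^2$ where $\Omega_n(w)$ is the number of strings (out of $n$ i.i.d.\ ones from source $\bP$) having prefix $w$. The first step is to Poissonize: replace the fixed population $n$ by a Poisson($z$) number of strings, so that for a fixed word $w$ of ``Markov probability'' $p(w)$ (the product of transition probabilities along $w$ weighted by the stationary law of its first letter), the count $\Omega(w)$ becomes Poisson with mean $zp(w)$ and $P(\Omega(w)\ge 1)=1-e^{-zp(w)}$. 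Thus the Poisson transform of $C$ is
\begin{equation}
\widetilde C(z)=\sum_{w\neq\nu}\bigl(1-e^{-zp(w)}\bigr)^2,
\end{equation}
and we must extract its asymptotics as $z\to\infty$ and then dePoissonize (the two-dimensional dePoissonization reduces here to the one-variable version since $m=n$, and the analyticity/growth conditions are the routine ones for such trie functionals).

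The core analytic step is the Mellin transform. Write $(1-e^{-u})^2 = 1-2e^{-u}+e^{-2u}$; its Mellin transform in $u$ is $\Gamma(s)(2^{-s+1}-2)$ for $-2<\Re(s)<-1$ (using $\mathcal M[1-e^{-u}](s)=-\Gamma(s)$ on $-1<\Re s<0$ and shifting; one handles the window carefully). Hence
\begin{equation}
\widetilde C^*(s)=\Gamma(s)(2^{1-s}-2)\sum_{w\neq\nu}p(w)^{-s}=:\Gamma(s)(2^{1-s}-2)\,T(s),
\end{equation}
and everything is controlled by the Dirichlet-type series $T(s)=\sum_{w\ne\nu}p(w)^{-s}$. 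Because the words $w$ are exactly the paths in the de Bruijn graph of the Markov chain, $T(s)$ is a rational function of the entries of the matrix $\bP(s)$ obtained by raising each transition probability to the power $-s$: concretely, summing over all finite paths gives $T(s)=\bpi^{\mathsf T}_s(\I-\bP(s))^{-1}\mathbf 1$-type expression (with the appropriate bookkeeping of the first letter), so the singularities of $T(s)$ in $\Re(s)<0$ are the values of $s$ for which the Perron eigenvalue $\lambda(s)$ of $\bP(s)$ equals $1$. The dominant such singularity is $s=-1$: since $\bP(-1)=\bP$ is stochastic, $\lambda(-1)=1$, and $\lambda$ is strictly decreasing through $1$ there with $\lambda'(-1)$ expressible via the entropy $h$ (this is the standard relation $\lambda'(-1)/\lambda(-1)=-h$ coming from differentiating $\log\lambda$). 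So $T(s)$ has a simple pole at $s=-1$ with residue $-1/(h\lambda'(-1)/\lambda(-1))$-type constant equal to $1/h$ after the computation; meanwhile $\Gamma(s)(2^{1-s}-2)$ is regular and nonzero there with value $\Gamma(-1)\cdot 0$ — one must be careful: $2^{1-s}-2$ vanishes at $s=-1$, so actually the pole of $\widetilde C^*$ at $s=-1$ is cancelled and the dominant contribution to $\widetilde C(z)$ comes from the pole of $\Gamma(s)$ at $s=0$ combined with the pole of $T(s)$ at $s=-1$ — more precisely, the leading term $\sim c\,z$ arises from the simple pole of the integrand at $s=-1$ that survives because $T$ has a pole there; evaluating the residue yields $\widetilde C(z)\sim \frac{2\log 2}{h}\,z$. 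DePoissonizing gives $C_{n,n}\sim\frac{2n\log 2}{h}$, which is (\ref{p1}).

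The remaining work is the dichotomy between the two cases. Moving the Mellin inversion contour to the left past $\Re(s)=-1$, the next-order behaviour is governed by the \emph{other} zeros of $\lambda(s)-1$ on the line $\Re(s)=-1$ (equivalently, the poles of $T(s)$ with $\Re(s)=-1$ but $\Im(s)\ne 0$): if $\bP$ is \emph{not} logarithmically rationally related then $s=-1$ is the unique pole on that line, every other singularity has $\Re(s)<-1$ strictly, and one gets $C_{n,n}=\frac{2n\log 2}{h}+o(n)$ — this is exactly the claimed (\ref{p1}). If $\bP$ \emph{is} logarithmically rationally related with root $\omega$, then $\lambda(s)=1$ has a whole arithmetic family of solutions $s=-1+ \frac{2\pi i k}{\text{(period)}}$ on $\Re(s)=-1$ (the period being tied to $\omega$ and to the gcd structure of the $\log p_{ab}$), each contributing a term $z\cdot z^{i\theta_k}=z\,e^{i\theta_k\log z}$; summing these produces $\frac{2n\log 2}{h}\,(1+Q_0(\log n))$ with $Q_0$ a Fourier series whose coefficients are the residues at the subdominant poles (hence ``small amplitude''), and the genuinely lower-order singularities with $\Re(s)<-1$ give the $O(n^\eps)$ error — here one must also check that the dePoissonization error and the Theorem~\ref{theoJC} error $O(n^\eps)$ are of the same sub-linear order, which is where the $\eps<1$ in the statement comes from. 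I expect the main obstacle to be the precise location and residue bookkeeping of the infinitely many saddle points/poles of $T(s)$ coming from the spectral equation $\lambda(s)=1$: establishing (in the noncommensurable case) that $s=-1$ really is isolated on its vertical line requires a Diophantine/rational-independence argument on $\{\log p_{ab}\}$, and (in the commensurable case) identifying the exact period and proving the Fourier series converges to a bona fide periodic function of small amplitude is delicate, as is controlling the convergence of the inverse Mellin integral uniformly so the contour shift is justified.
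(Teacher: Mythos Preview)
Your overall strategy --- reduce to $C_{n,n}$ via Theorem~\ref{theoJC}, Poissonize, take the Mellin transform, and read off the leading term from the pole at $s=-1$ of $(\bI-\bP(s))^{-1}$ --- is exactly the paper's approach. The only structural difference is that the paper derives the Mellin transform from the functional equation for the conditional Poisson transforms $c_a(z)=C_a(z,z)$, obtaining $c^*(s)=(2^{-s}-2)\Gamma(s)\bigl(1+\la\bone(\bI-\bP(s))^{-1}|\vpi(s)\ra\bigr)$, whereas you go directly through the harmonic sum $\widetilde C(z)=\sum_{w}(1-e^{-zp(w)})^2$. The two routes yield the same Mellin transform, and your Dirichlet series $T(s)$ is precisely the bracket term above.

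What needs fixing is the local analysis at $s=-1$, where your write-up is internally inconsistent. First, the Mellin transform of $(1-e^{-u})^2$ is $(2^{-s}-2)\Gamma(s)$, not $(2^{1-s}-2)\Gamma(s)$; the zero of $2^{-s}-2$ sits at $s=-1$, not $s=0$. Second, there is no cancellation of the kind you describe: at $s=-1$ the simple zero of $2^{-s}-2$ exactly kills the simple pole of $\Gamma(s)$, leaving a \emph{finite nonzero} value,
\[
\lim_{s\to -1}(2^{-s}-2)\Gamma(s)=2\log 2,
\]
so the simple pole of $T(s)$ at $s=-1$ (coming from $\lambda(-1)=1$, with residue $1/h$ after the eigenvector computation) produces a genuine simple pole of $\widetilde C^*(s)$ there with residue $2\log 2/h$. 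That is the whole story for the main term; no interaction with the pole of $\Gamma$ at $s=0$ is needed. Once this is straightened out, your dichotomy (isolated pole versus arithmetic family of poles on $\Re(s)=-1$) matches the paper's argument verbatim.
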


Now we consider sources that are not the same and have respective 
transition matrices $\bP_1$ and $\bP_2$. The transition matrices 
are on $\CA^r\times\CA^r$.
If $(a,b)\in \CA^r\times\CA^r$, we denote by $P_i(a|b)$ the 
$(a,b)$-th coefficient of matrix $\bP_i$. For a tuple of 
complex numbers $(s_1,s_2)$ we write $\bP(s_1,s_2)$ for the following  matrix 
$$
\bP(s_1,s_2)=[\bP_1(a|b))^{-s_1}(\bP_2(a|b))^{-s_2}]_{a,b\in A}.
$$ 
In fact, we can write it as the Schur product, denoted as $\star$, of two matrices 
$\bP_1(s_1)=\bP_1(a|b))^{-s_1}$ and $\bP_2(s_2)=(\bP_2(a|b))^{-s_2}$, that is,
$\bP(s_1,s_2)=\bP_1(s_1) \star \bP_2(s_2)$.

To present succinctly our general results we need some more notation.
Let $\la {\bf x} | {\bf y} \ra$ be the scalar product of vector ${\bf x}$ and
vector ${\bf y}$. By $\lb(s_1,s_2)$ we denote the main eigenvalue of matrix 
$\bP(s_1,s_2)$, and $\bu(s_1,s_2)$ its corresponding right eigenvector 
(i.e, $\lb(s_1,s_2)\bu(s_1,s_2)=\bP(s_1,s_2)\bu(s_1,s_2)$), and
$\bzeta(s_1,s_2)$ its left eigenvector (i.e., $\lb(s_1,s_2)\bu(s_1,s_2)=
\bu(s_1,s_2)\bP(s_1,s_2)$). We assume that $\la\bzeta(s_1,s_2)|\bu(s_1,s_2)\ra=1$.
Furthermore, the vector $\bpi(s_1,s_2)$ is defined as the vector 
$(\pi_1(a)^{-s_1}\pi_2(a)^{-s_2})_{a\in\CA}$ where 
$(\pi_i(a))_{a\in\CA}$ is the left eigenvector of matrix 
$\bP_i$ for $i\in\{1,2\}$. In other words $(\pi_i(a))_{a\in\CA}$ 
is the stationary distribution of the Markov source $i$.

We start our presentation with the simplest case, namely
the case when the matrix $\bP(0,0)$  is nilpotent \cite{hj}, that is, 
for some $K$ the matrix $\bP^K(0,0)$ is the null matrix. 
Notice that for nilpotent matrices $\forall(s_1,s_2)$:~ $\bP^K(s_1,s_2)=0$. 

\begin{theorem}
\label{th-nilpotent}\it
If $\bP(s_1,s_2)$ is nilpotent, then there exists $\gamma_0$ 
such that 
\begin{equation}
\label{p3}
\lim_{n\to\infty}J_{n,n}=\gamma_0:=\la\bone_C(\bI-\bP(0,0))^{-1}|\bone\ra
\end{equation}
where
$\bone$ is the unit vector, $\bone_C$ the vector on $\CA$ with $\bone_C(a)=1$ 
when $a$ is common to both sources, and $\bone_C(a)=0$ otherwise. 
\end{theorem}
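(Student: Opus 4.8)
The plan is to start from the series representation of the joint prefix complexity, namely
\[
C_{n,n} = \sum_{w\in\CA^*\setminus\nu} P(\Omega^1_n(w)\ge 1)\, P(\Omega^2_n(w)\ge 1),
\]
and observe that because $\bP(0,0)$ is nilpotent, the sum is actually \emph{finite} up to a controllable tail. Indeed, for a word $w=w_1 w_2\cdots w_k$ the probability that a string generated by source $i$ has $w$ as a prefix is, up to the stationary factor, a product of transition probabilities along $w$; the key point is that the set of words $w$ for which \emph{both} sources assign positive probability to the prefix $w$ corresponds exactly to walks of length $k$ in the (common) support digraph, and the nilpotency of $\bP(0,0)$ (whose $(a,b)$-entry is $1_{P_1(a|b)>0}1_{P_2(a|b)>0}$) says precisely that there are no such walks of length $\ge K$. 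So only finitely many words contribute a nonvanishing product of indicator-type factors, and for those words the prefix-probabilities under each source are bounded away from $0$ uniformly, hence $P(\Omega^i_n(w)\ge 1)\to 1$ as $n\to\infty$ for each fixed such $w$.

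Concretely, first I would fix a word $w$ of length $k<K$ lying in the common support. Writing $p_i(w)$ for the probability that an infinite string from source $i$ begins with $w$ (so $p_i(w)=\pi_i(w_1)\prod_{j} P_i(w_{j+1}|w_j)>0$), we have $\Omega^i_n(w)\sim \mathrm{Binomial}(n,p_i(w))$, so $P(\Omega^i_n(w)\ge 1)=1-(1-p_i(w))^n\to 1$. Therefore termwise the summand tends to $1$ for each of the finitely many contributing words, and to handle the interchange of limit and (finite) sum there is nothing to do beyond dominated convergence over a finite index set. This gives
\[
\lim_{n\to\infty} C_{n,n} = \#\{w\in\CA^*\setminus\nu : w \text{ is a common-support word}\}.
\]
The remaining step is to identify this cardinality with $\la \bone_C(\bI-\bP(0,0))^{-1}\,|\,\bone\ra$: a word $w$ of length $k$ counted here is a sequence $w_1,\dots,w_k$ with $w_1$ a state common to both sources ($\bone_C(w_1)=1$) and each consecutive pair an edge of the common support digraph, i.e. a $1$-entry of $\bP(0,0)$; summing $1$ over all such walks of all lengths $k\ge 1$ gives $\sum_{k\ge 1}\la\bone_C \bP(0,0)^{k-1}\,|\,\bone\ra = \la\bone_C\sum_{k\ge 0}\bP(0,0)^k\,|\,\bone\ra = \la\bone_C(\bI-\bP(0,0))^{-1}\,|\,\bone\ra$, the geometric series being a finite sum by nilpotency, which also shows $\bI-\bP(0,0)$ is invertible. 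Finally, Theorem~\ref{theoJC} gives $J_{n,n}=C_{n,n}+O(n^\eps)$ with $\eps<1$; since $C_{n,n}$ converges to the finite constant $\gamma_0$, we would need to note that in the nilpotent regime the error term is in fact $o(1)$ rather than merely $O(n^\eps)$ — this is the one place requiring care, and it is the main obstacle.

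The way I would resolve it: re-examine the proof of Theorem~\ref{theoJC} in the nilpotent setting, where the discrepancy $J_{n,n}-C_{n,n}$ comes from correlations between overlapping occurrences of a word $w$ in a single text versus independent occurrences in the trie model; since only boundedly-long words $w$ (length $<K$) contribute, these correction terms are each $O(1/n)$ (the probability of a ``self-overlap'' or autocorrelation event for a fixed finite word decays like $1/n$ after the appropriate normalization), and there are only finitely many of them, so the total correction is $O(1/n)=o(1)$. Hence $J_{n,n}\to\gamma_0$, completing the proof. The only genuinely delicate bookkeeping is making the uniformity over the finite word-set explicit and confirming the $o(1)$ (rather than $O(n^\eps)$) bound on the suffix-tree-vs-trie discrepancy; everything else is the finite geometric-series identity and elementary binomial limits.
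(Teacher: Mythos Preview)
Your combinatorial argument is essentially the paper's ``simple derivation'' in Section~\ref{sec-nilpotent}: nilpotency of $\bP(0,0)$ forces the set $\CW$ of nonempty words with positive probability under both sources to be finite (all of length $<K$), and for each such $w$ the relevant probabilities tend to $1$, so the sum converges to the count $\gamma_0$ via the geometric-series identity you wrote.

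However, you manufacture an obstacle that the paper avoids. You prove the limit for $C_{n,n}$ and then try to transfer it to $J_{n,n}$ via Theorem~\ref{theoJC}, whose $O(n^\eps)$ error is too coarse here. The paper instead applies the \emph{same} finite-sum argument directly to
\[
J_{n,n}=\sum_{w\in\CA^+}P(O^1_n(w)\ge 1)\,P(O^2_n(w)\ge 1):
\]
a word $w$ containing a transition forbidden in source $i$ can never occur as a factor of a string from source $i$, so $P(O^i_n(w)\ge 1)=0$; hence only the finitely many $w\in\CW$ contribute, and for each of these $P(O^i_n(w)\ge 1)\to 1$ exponentially fast. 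No appeal to Theorem~\ref{theoJC} is needed. Your proposed fix---re-examining $d_n(w)$ in the nilpotent case---would also work (for each fixed $w$ the correction is in fact exponentially small, not merely $O(1/n)$), but it is an unnecessary detour. The paper additionally rederives the result via the inverse Mellin transform with an $O(n^{-M})$ error for every $M>0$, but presents this as a consistency check rather than the primary argument.
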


This result is not surprising and rather trivial since the common 
factors can only occur in a finite window at the beginning of the strings. 
It turns out that $\gamma_0 =1168$ for 3rd order Markov model of 
English versus Polish languages used in our experiments.

Throughout, now we assume that $\bP(s_1,s_2)$ is not nilpotent. 
We need to pay much closer attention to 
the structure of the set of roots of 
the {\it characteristic equation}
$$\lambda(s_1,s_2)=1$$
that will play a major role in the analysis.
We discuss in depth properties of these roots in 
Section~\ref{sectCK}. Here we introduce only a few important definitions.

\begin{definition}
The kernel $\overline{\CK}$ is the set of complex tuples $(s_1,s_2)$ 
such that $\bP(s_1,s_2)$ has its largest eigenvalue equal to 1.
The real kernel $\CK=\overline{\CK}\cap\mathbb{R}^2$, {\it i.e.} the set of 
real tuples $(s_1,s_2)$ such that the main eigenvalue $\lambda(s_1,s_2)=1$. 
\end{definition}

The following lemma is easy to prove.

\begin{lemma}
The real kernel forms a concave curve in $\mathbb{R}^2$.
\label{lemconc}\end{lemma}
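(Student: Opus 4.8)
\textbf{Proof plan for Lemma~\ref{lemconc}.}
The plan is to show that the function $(s_1,s_2)\mapsto \lambda(s_1,s_2)$, restricted to the reals, is log-convex, so that its unit level set $\CK=\{\lambda=1\}$ bounds a convex region and hence is a concave curve. First I would recall the variational (Perron--Frobenius) characterization of the main eigenvalue of the nonnegative matrix $\bP(s_1,s_2)=\bP_1(s_1)\star\bP_2(s_2)$ whose $(a,b)$ entry is $P_1(a|b)^{-s_1}P_2(a|b)^{-s_2}$. For a nonnegative irreducible matrix $\bM$ one has the Collatz--Wielandt / Donsker--Varadhan formula
\begin{equation*}
\log\lambda(\bM)=\sup_{\text{probability vectors }\bp}\ \inf_{\bx>0}\ \sum_a p_a\,\log\frac{(\bM\bx)_a}{x_a},
\end{equation*}
or, even more conveniently here, the fact that each entry $[\bP(s_1,s_2)]_{ab}=\exp(-s_1\log P_1(a|b)-s_2\log P_2(a|b))$ is a log-linear, hence log-convex, function of $(s_1,s_2)$; and any finite sum or product of log-convex functions is log-convex, so every entry of every power $\bP^k(s_1,s_2)$ is log-convex in $(s_1,s_2)$.

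Next I would pass from the powers to the eigenvalue. Since $\lambda(s_1,s_2)$ is the Perron root, $\log\lambda(s_1,s_2)=\lim_{k\to\infty}\frac1k\log\big(\trace\,\bP^k(s_1,s_2)\big)$ (or $\frac1k\log\la\bone|\bP^k(s_1,s_2)|\bone\ra$, which has the same exponential growth rate by irreducibility). Each $\trace\,\bP^k$ is a sum of log-convex functions, hence log-convex; $\frac1k\log$ of it is convex in $(s_1,s_2)$; and a pointwise limit of convex functions is convex. Therefore $(s_1,s_2)\mapsto\log\lambda(s_1,s_2)$ is convex on $\mathbb{R}^2$, hence so is $\lambda$ itself (it is an increasing convex function of $\log\lambda$). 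Consequently the sublevel set $\{(s_1,s_2):\lambda(s_1,s_2)\le 1\}$ is convex, and its boundary $\CK=\{\lambda=1\}$ is the graph of a concave function $s_2=\phi(s_1)$ (equivalently a convex region's boundary arc), which is exactly the claimed concave curve. Strict concavity, if desired, follows because $\log P_1(a|b)$ and $\log P_2(a|b)$ are not all proportional to a common linear functional unless the two sources are degenerate, which makes the log-convexity strict.

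The main obstacle is handling the non-strictly-positive case: the paper explicitly allows $\bP_i$ to have zero coefficients, so $\bP(s_1,s_2)$ is merely nonnegative, possibly reducible or periodic, and ``the'' largest eigenvalue needs care. I would deal with this by restricting attention to the irreducible recurrent block(s) that actually carry the Perron eigenvalue (as will be formalized in Section~\ref{sectCK}), or by an $\eps$-perturbation argument: replace $\bP(s_1,s_2)$ by $\bP(s_1,s_2)+\eps\bone\bone^{T}$, whose main eigenvalue $\lambda_\eps$ is convex by the argument above and depends continuously on $\eps$, then let $\eps\downarrow0$; convexity is preserved in the limit. One must also note that on the reals $\lambda(s_1,s_2)$ coincides with the spectral radius (Perron--Frobenius), so ``largest eigenvalue equal to $1$'' is unambiguous, and that $\CK$ is nonempty and a genuine curve rather than a single point, which follows from $\lambda(0,0)=$ (spectral radius of a substochastic-type matrix) together with the growth of $\lambda$ in the coordinate directions. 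None of these points is deep; they are just the routine bookkeeping needed to make the convexity argument rigorous in the presence of zeros.
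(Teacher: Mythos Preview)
Your argument is correct but proceeds by a different mechanism than the paper's. The paper proves Lemma~\ref{lemconc} jointly with Lemma~\ref{lemconv} by a direct H\"older argument on Perron eigenvectors: given $(x_1,x_2),(y_1,y_2)$ in the sublevel set $\tilde\CK=\{\lambda\le1\}$ and $\alpha+\beta=1$, it writes $\bP(\alpha x_1+\beta y_1,\alpha x_2+\beta y_2)=\bP(x_1,x_2)^{\star\alpha}\star\bP(y_1,y_2)^{\star\beta}$, applies this matrix to the test vector $\bun(x_1,x_2)^{\star\alpha}\star\bun(y_1,y_2)^{\star\beta}$, and bounds each coordinate by H\"older to obtain $\bP\,\buv\le\lambda(x_1,x_2)^\alpha\lambda(y_1,y_2)^\beta\,\buv\le\buv$, whence the Perron root is at most $1$. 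You instead deduce log-convexity of $\lambda$ from entrywise log-linearity of $\bP(s_1,s_2)$ and the spectral-radius formula $\log\lambda=\lim_k k^{-1}\log\trace\bP^k(s_1,s_2)$, using that sums of log-convex functions are log-convex and pointwise limits preserve convexity. Both routes are standard (they are the two textbook proofs of Kingman's theorem on the Perron root) and both handle zero entries the same way, so nothing is lost there. The one thing the paper's approach buys is the equality case: H\"older is tight iff the two families $\big(u_b(x_1,x_2)P_1(a|b)^{-x_1}P_2(a|b)^{-x_2}\big)_b$ and $\big(u_b(y_1,y_2)P_1(a|b)^{-y_1}P_2(a|b)^{-y_2}\big)_b$ are proportional for every $a$, which unwinds exactly to $\bP_1$ and $\bP_2$ being conjugate on their common support---this is precisely Lemma~\ref{lemconv}. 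Your closing remark about strictness (``not all proportional to a common linear functional'') gestures at this but would need more work to recover the conjugacy criterion; if you also need Lemma~\ref{lemconv}, the paper's route gets you both at once.
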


Furthermore, we introduce two important notations:
\begin{eqnarray}
\kappa&=&\min_{(s_1,s_2)\in\CK}\{-s_1-s_2\} \label{p4a}\\
(c_1,c_2)&=&\arg\min_{(s_1,s_2)\in\CK}\{-s_1-s_2\}.
\label{p4b}
\end{eqnarray}

This leads to a new concept $\partial\CK$, the border of the kernel
$\overline{\CK}$, defined as follows.

\begin{definition}
We denote $\partial\CK$ the subset of $\overline{\CK}$ made of the pairs 
$(s_1,s_2)$ such $\Re(s_1,s_2)=(c_1,c_2)$. 
\end{definition}

Easy algebra shows that $\kappa \leq 1$. Furthermore, in the Appendix
we prove the following property.

\begin{lemma}
Let $c_1$ and $c_2$ minimize $-s_1-s_2$ where real tuple $(s_1,s_2)\in\CK$. 
Assume $\forall(a,b)\in\CA^2$: $P_1(a|b)>0$ then $c_1\le 0$ and $c_2\ge -1$. 
\label{lemc2}
\end{lemma}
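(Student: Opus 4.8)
The plan is to exploit the probabilistic meaning of the point $(c_1,c_2)$ and the monotonicity of the curve $\CK$ established in Lemma~\ref{lemconc}. First I would recall that on the real kernel the map $(s_1,s_2)\mapsto\lambda(s_1,s_2)$ equals $1$, and that $\lambda$ is (by Perron--Frobenius applied to the nonnegative matrix $\bP(s_1,s_2)=\bP_1(s_1)\star\bP_2(s_2)$) strictly decreasing in each coordinate $s_i$ separately, since increasing $s_i$ multiplies every entry $P_i(a|b)^{-s_i}$ by a factor $\le 1$ (here we use $P_i(a|b)\le 1$). Hence $\CK$ is the graph of a strictly decreasing convex-type curve, and the minimizer $(c_1,c_2)$ of $-s_1-s_2$ on $\CK$ is characterized by the first-order condition that the tangent to $\CK$ there has slope $-1$, i.e. $\partial_{s_1}\lambda(c_1,c_2)=\partial_{s_2}\lambda(c_1,c_2)$.

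The key is then to locate two reference points on $\CK$ and use concavity/monotonicity to sandwich $(c_1,c_2)$. The natural reference point is $(s_1,s_2)=(-1,0)$: here $\bP_1(-1)=[P_1(a|b)]$ is the original stochastic matrix $\bP_1$ (with Perron eigenvalue $1$) and $\bP_2(0)$ is the all-ones matrix; I would check that $\bP(-1,0)=\bP_1\star\mathbf{1}=\bP_1$, so $\lambda(-1,0)=1$ and $(-1,0)\in\CK$. Similarly $(0,-1)\in\CK$ using $\bP_2$ stochastic. Now $\kappa=\min_{\CK}\{-s_1-s_2\}$ and both reference points give value $1$, so $\kappa\le 1$ (recovering the remark in the text) and, because the curve is concave and passes through $(-1,0)$ and $(0,-1)$, the minimizer lies ``between'' these two points in the appropriate sense. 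Concretely, along $\CK$ parametrized so that $s_1$ increases while $s_2$ decreases, the point $(-1,0)$ has $s_1=-1$ and the point $(0,-1)$ has $s_2=-1$; the arc of $\CK$ joining them, on which $-s_1-s_2\le 1$, has $s_1\in[-1,0]$ and $s_2\in[-1,0]$, and $(c_1,c_2)$ lies on this arc. This already gives $c_1\le 0$ and $c_2\ge -1$ — but I must be careful that $(c_1,c_2)$ really lands on this particular arc and not on an outer branch.

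That last point is where I expect the main obstacle, and it is exactly where the hypothesis ``$P_1(a|b)>0$ for all $a,b$'' (no zero entries in source $1$) enters. Without it, $\lambda(s_1,s_2)$ need not blow up as $s_1\to -\infty$, the kernel can have several components or unbounded branches, and the minimizer of $-s_1-s_2$ could escape the region described above. With $\bP_1$ strictly positive, $\lambda(s_1,s_2)\ge \la\bpi\,|\,(\text{entrywise lower bound})\ra$-type estimates force $\lambda(s_1,s_2)\to\infty$ as $s_1\to -\infty$ (uniformly on $s_2$ in compacts), so the real kernel is the graph of a single decreasing function $s_2=\phi(s_1)$ with $\phi$ convex and $\phi(-1)=0$; then differentiating the identity $\lambda(s_1,\phi(s_1))=1$ and using strict monotonicity of $\lambda$ in each variable shows $\phi'<0$, and evaluating at the slope-$(-1)$ point together with $\phi(-1)=0$, $\phi^{-1}(-1)=0$ and convexity pins $c_1\in[-1,0]$, hence in particular $c_1\le 0$, and symmetrically $\phi(c_1)=c_2\ge-1$. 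I would write out the convexity bookkeeping carefully — that $\phi$ convex with $\phi(-1)=0$ and $\phi(0)\le -1$ forces the unique point of slope $-1$ to have abscissa in $[-1,0]$ — since this is the heart of the argument; the Perron--Frobenius monotonicity facts and the identification of the reference points are routine. The asymmetry in the conclusion ($c_1\le0$ but $c_2\ge-1$, rather than both two-sided bounds) reflects that we only assumed positivity for source $1$, so only the ``$s_1\to-\infty$ blow-up'' direction is controlled; this is consistent with the way the lemma is stated.
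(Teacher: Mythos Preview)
Your overall strategy---locate reference points on $\CK$ and use convexity of $s_1\mapsto -s_1-s_2$ along $\CK$---is exactly the paper's. The gap is in your identification of the reference point $(-1,0)$. You assert $\bP(-1,0)=\bP_1\star\mathbf{1}=\bP_1$ by reading $\bP_2(0)$ as the all-ones matrix, i.e.\ interpreting $0^0=1$. The paper's convention is the opposite: $\bP(s_1,s_2)$ retains only those entries that are nonzero in \emph{both} $\bP_1$ and $\bP_2$ (this is stated explicitly in the general-case section and is already implicit in the nilpotent discussion). The hypothesis gives $P_1(a|b)>0$ for all $(a,b)$ but says nothing about $P_2$, so $\bP(-1,0)$ may be a strict submatrix of $\bP_1$ and $\lambda(-1,0)<1$ is possible; the paper says this explicitly. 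Consequently your claim $\phi(-1)=0$ fails, and the stronger conclusion $c_1\in[-1,0]$ that you derive is unwarranted (note the lemma does not assert $c_1\ge -1$).

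The repair is short and close to what you wrote. One only needs the \emph{inequality} $\phi(-1)\ge 0$: for $s_2<0$ one has $\bP(-1,s_2)\le\bP_1$ entrywise with a strict inequality somewhere, hence $\lambda(-1,s_2)<1$ and $(-1,s_2)\notin\CK$. Then the convex function $a(s_1)=-s_1-\phi(s_1)$ satisfies $a(0)=1$ (from $(0,-1)\in\CK$, which \emph{is} correct under $P_1>0$ since then $\bP(0,-1)=\bP_2$) and $a(-1)=1-\phi(-1)\le 1$; convexity alone forces the minimizer $c_1\le 0$. The symmetric function $b(s_2)$ with $b(-1)=1$ and $b(0)\le 1$ gives $c_2\ge -1$. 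Your tangent-slope characterization and the blow-up argument as $s_1\to-\infty$ are not needed.
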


The case when both matrices $\bP_1$ and $\bP_2$ have some
zero coefficients is the most intricate part. Therefore, 
to present our strongest results, we start with a special case when one of 
the source is uniform. Later we generalize it.

We first consider a special case 
when source 1 is uniform memoryless, {\it i.e.} 
$\bP_1=\frac{1}{|\CA|}\bone\otimes\bone$ and the other matrix 
$\bP_2$ is not nilpotent and general (that is, it may have some
zero coefficients). In this case we always have 
$c_1<0$ and $c_2<0$. 
This case we have the following theorem. 


\begin{theorem}
\label{theogenmark0}
Let $\bP_1=\frac{1}{|\CA|}\bone\otimes\bone$ and $\bP_2\neq\bP_1$ is a general
transition matrix. Thus both $c_1$ and $c_2$ are between $-1$ and $0$.
\parsec
{\rm (i)}  [{\sl Mono periodic case}.]
If $\bP_2$ is not logarithmically rationally related,
then there exists a periodic function $Q_1(x)$ of small amplitude such that 
\begin{equation}
\label{p7.5}
C_{n,n}=\frac{\gamma_2 n^\kappa}{\sqrt{\alpha_2\log n+\beta_2}}
(1+Q_1(\log n) +o(1)).
\end{equation}
\parsec
{\rm (ii)} [{\sl Double periodic case}.]
If $\bP_2$ is 
logarithmically rationally related,
then there exists a double periodic function\footnote{
We recall that a double periodic function is a function on real numbers 
that is a sum of two periodic functions of non commensurable periods.}
$Q_2(.)$ of small amplitude such that 
\begin{equation}
\label{p8}
C_{n,n}=\frac{\gamma_2 n^\kappa}{\sqrt{\alpha_2\log n+\beta_2}}
(1+Q_2(\log n) +o(1)).
\end{equation}
\end{theorem}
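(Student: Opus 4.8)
\textbf{Proof proposal for Theorem~\ref{theogenmark0}.}

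The plan is to start from the functional--equation characterization of $C_{n,n}$, pass to the bivariate Poisson transform, apply the double Mellin transform developed in Section~\ref{sec-preliminary}, and then extract asymptotics by a saddle point / residue analysis of the resulting contour integral. Concretely, let $\tilde C(z_1,z_2)=\sum_{n,m}C_{n,m}e^{-z_1-z_2}z_1^nz_2^m/(n!m!)$ be the Poisson transform; from (\ref{eq-omega}) and the recursive structure of tries one obtains a linear functional equation for $\tilde C$, and after taking the double Mellin transform in both variables one is led to an integrand built from $\Gamma$-factors and the resolvent $(\bI-\bP(s_1,s_2))^{-1}$ of the Schur-product matrix, whose poles are exactly the kernel $\overline{\CK}$ where $\lambda(s_1,s_2)=1$. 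Because $\bP_1$ is uniform memoryless, the dependence on $s_1$ simplifies dramatically: $\bP_1(s_1)=|\CA|^{s_1}\bone\otimes\bone$ up to normalization, so one of the two Mellin integrations can be carried out explicitly (it collapses to a single residue in $s_1$), reducing the problem to a one-dimensional inverse Mellin/saddle point integral in the remaining variable along the border $\partial\CK$.

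The next step is the local analysis near the minimizing point $(c_1,c_2)$ of $-s_1-s_2$ on $\CK$, which by Lemma~\ref{lemc2} and the paragraph preceding the theorem lies strictly inside $(-1,0)^2$; this is what guarantees the exponent $\kappa<1$ and keeps the relevant $\Gamma$-factors analytic at the saddle. Using Lemma~\ref{lemconc}, the real kernel is a concave curve, so near $(c_1,c_2)$ the constraint $\lambda(s_1,s_2)=1$ can be solved as $s_2=\varphi(s_1)$ with $\varphi$ analytic and strictly convex, and $-s_1-\varphi(s_1)$ has a nondegenerate minimum at $s_1=c_1$. Plugging $z_1=z_2=n$ and applying the saddle point method to the remaining integral $\frac{1}{2i\pi}\int n^{-s_1-\varphi(s_1)}g(s_1)\,ds_1$ yields the factor $n^\kappa/\sqrt{\alpha_2\log n}$; the refinement to $n^\kappa/\sqrt{\alpha_2\log n+\beta_2}$ comes from keeping the next order term in the expansion of the phase (the second derivative of $-s_1-\varphi(s_1)$ combined with the $\Gamma$-factor contributions produces an effective quadratic coefficient of the form $\alpha_2\log n+\beta_2$), and the constant $\gamma_2$ is read off from $g(c_1)$, i.e.\ from the eigenvector normalization $\la\bzeta|\bu\ra=1$, the vector $\bpi(c_1,c_2)$, and the residue of the $\Gamma$-factors at the dominant pole. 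Finally one must depoissonize: by the depoissonization lemmas of Section~\ref{sec-preliminary}, since $\kappa<1$ and the Poisson transform is of moderate growth in a cone, $C_{n,n}=\tilde C(n,n)(1+o(1))$, which is enough for the stated $(1+o(1))$ precision.

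The periodic fluctuations in parts (i) and (ii) arise from the fact that $\partial\CK$ generically contains infinitely many complex points $(c_1+it,\,c_2+it')$ with the same real part $(c_1,c_2)$ — these are additional poles of the resolvent lined up on the vertical line $\Re s_1=c_1$, and each contributes a term $n^\kappa n^{-i(\text{imaginary part})}$, i.e.\ an oscillation in $\log n$. When $\bP_2$ is \emph{not} logarithmically rationally related, the $s_1$-integration already forces a lattice of imaginary parts coming only from the $\Gamma$-function poles (period $2\pi$-type in an appropriate scaling), giving a single periodic function $Q_1$. When $\bP_2$ \emph{is} logarithmically rationally related, the spectrum of $\bP(s_1,s_2)$ on the kernel is itself periodic in the imaginary direction with a period governed by the root $\omega$ of $\bP_2$, adding a second, generically incommensurable, family of poles; superposing the two families yields the double periodic $Q_2$. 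That all these fluctuating contributions are of small amplitude follows from bounding $|\lambda(s_1,s_2)|<1$ strictly off the real axis except on this discrete set, a spectral fact established in Section~\ref{sectCK}.

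The main obstacle I expect is controlling the \emph{global} geometry of the complex kernel $\overline{\CK}$ and justifying the contour shift past it: one must show that, apart from the discrete family of dominant poles on $\Re(s_1,s_2)=(c_1,c_2)$, every other singularity of the Mellin integrand lies strictly to the right (contributing only the $o(1)$ error and, after depoissonization, the lower-order $O(n^{\kappa'})$ terms absorbed into $o(1)$), and that the integrand decays fast enough along horizontal segments to close the contour — this requires uniform spectral-gap estimates for the matrix family $\bP(s_1,s_2)$, which is precisely the technically heavy part and the reason the case with zero coefficients in $\bP_2$ (the "general" case) is harder than the memoryless one and is postponed to Theorems~\ref{th3}--\ref{theogege}.
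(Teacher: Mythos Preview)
Your high-level architecture is right and matches the paper's: Poissonize, take the double Mellin transform to obtain an integrand with the resolvent $(\bI-\bP(s_1,s_2))^{-1}$, exploit the factorization $\bP(s_1,s_2)=|\CA|^{s_1}\bP(s_2)$ coming from the uniform source to do the $s_1$-integration by residues, then evaluate the remaining one-dimensional integral in $s_2$ by the saddle point method at $s_2=c_2$, and depoissonize. The identification of the obstacle (uniform spectral-gap control for $\bP(s_1,s_2)$ off the real axis) is also on target.

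However, there is a concrete error in your account of where the oscillations come from, and it matters. The $s_1$-integration does \emph{not} ``collapse to a single residue'': for each $s_2$ the equation $|\CA|^{s_1}\lambda_j(s_2)=1$ has the infinite family of solutions $s_1=-L_{j,k}(s_2):=-\tfrac{1}{\log|\CA|}(\log\lambda_j(s_2)+2ik\pi)$, $k\in\mathbb{Z}$, and the residue theorem produces a sum over all of them (this is exactly the function $H_1(s,z)$ in the paper). That $k$-sum, not the $\Gamma$-function poles, is the source of the periodicity that is \emph{always} present and gives $Q_1$ in case~(i); the period is $2\pi/\log|\CA|$ in $\log n$. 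The $\Gamma$-function poles sit on the real axis at $s_1=0,-1,\dots$ and play no role in the oscillation --- they are dealt with during the contour shift and mostly cancel.

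Case~(ii) then arises because, when $\bP_2$ is logarithmically rationally related, the \emph{saddle point integral in $s_2$} itself acquires an infinite lattice of genuine saddle points $s_\ell=c_2+2i\pi\ell\omega$ on the line $\Re(s_2)=c_2$ (since $L(s+2i\pi\omega)-L(s)$ is constant), each contributing at the same order; summing over both $k$ and $\ell$ gives the double periodicity. In the irrational case those extra ``near-saddles'' on $\Re(s_2)=c_2$ still exist but with $\Re L(s_\ell)<L(c_2)$ strictly, and the delicate part of the proof (which you correctly flag as the obstacle) is to show their total contribution is $o(n^\kappa/\sqrt{\log n})$.

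A minor inconsistency: you first say you integrate out $s_1$, but then write the remaining saddle integral as $\int n^{-s_1-\varphi(s_1)}g(s_1)\,ds_1$. After the $s_1$-residues the surviving variable is $s_2$, and the phase is $L(s_2)-s_2$ with $L(s)=\log_{|\CA|}\lambda(s)$; the saddle equation is $L'(c_2)=1$.
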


The constants  $\gamma_2$, $\alpha_2$ and $\beta_2$ in the Theorem~\ref{theogenmark0} 
are explicitly computable as presented next. 
To simplify our notation for all $(a,b)\in\CA^2$ we shall write
$P_2(a|b)=P(a|b)$ and $\bP_2=\bP$. Therefore
\be
\bP(s_1,s_2)=|\CA|^{s_1}\bP(s)
\ee
with $\bP(s)=\bP(0,s)$.
We also write $\pi(a)=\pi_2(a)$ and
$\bpi(s)=[\pi(a)^{-s}]_{a\in \A}$, thus
\be
\bpi(s_1,s_2)=|\CA|^{s_1}\bpi(s_2)~.
\ee

Let $\lb(s_1,s_2)$ be again the main (largest) eigenvalue of $\bP(s_1,s_2)$. We have
\be
\lambda(s_1,s_2))=|\CA|^{s_1}\lambda(s_2)
\ee
where $\lb(s)$ is the main eigenvalue of matrix $\bP(s)$. We also define
$\bun(s)$ as the right eigenvector of $\bP(s)$ and
$\bzeta(s)$ as the left eigenvector
provided $\la\bzeta(s)|\bun(s)\ra=1$. It is easy to see that
$$
\lb(s)=\la\bzeta(s)|\bP(s)\bun(s)\ra.
$$

Now we can express $c_1$ and $c_2$ defined in (\ref{p4b}) in another way,
Notice that if $\lb(s_1,s_2)=1$, then in this case 
$$
s_1=- \log_{|\CA|} \lambda(s_2).
$$
Define $L(s)= \log_{|\CA|} \lambda(s)$. Then
$c_2$ is the value that minimizes $L(s)-s$, that is,
\be
\label{sw3}
\frac{\lb'(c_2)}{\lb(c_2)}=\log|\CA|.
\ee
Also $c_1=-L(c_2)$ and $\kappa=-c_1-c_2=\min_s\{L(s)-s\}$.
We have $\kappa\le1$ since $L(0)=1$.

We now can  presents explicit expression for the
constants in Theorem~\ref{theogenmark0}.

\begin{theorem}
We consider the case $\bP_1=\frac{1}{|\CA|}\bone\otimes\bone$ and $\bP_2\neq\bP_1$ has all non negative coefficients. 
Let $f(s)=\la\bpi(s)|\bun(s)\ra$ and
$g(s)=\la\bzeta(s)|\bone\ra$. Furthermore, with $\Psi(s)$ being the Euler
psi function, define  $\alpha_2=L''(c_2)$ where $L(s) = \log_{|\CA|} \lambda(s)$, and
$$
\beta_2(s_1,s_2)=-\alpha_2\left(\Psi(s_1)+\frac{1}{1+s_1}+\log|\CA|\right) 
$$
$$
+\Psi'(s_1)-\frac{1}{(s_1+1)^2}+\Psi'(s_2)-\frac{1}{(s_2+1)^2} 
$$
$$
+\frac{f''(s_2)}{f(s_2)}-\left(\frac{f'(s_2)}{f(s_2)}\right)^2+\frac{g''(s_2)}
{g(s_2)}-\left(\frac{g'(s_2)}{g(s_2)}\right)^2
$$
as well as
$$
\gamma(s_1,s_2)=\frac{f(s_1)g(s_2)(s_1+1)\Gamma(s_1)(s_2+1)
\Gamma(s_2)}{\lb(s_2)\log|\CA|\sqrt{2\pi}}.
$$
We have $c_1=-\log_{|\CA|}\lb(c_2)$, and 
then
\begin{eqnarray}
\nonumber
C_{n,n}&=&n^\kappa\frac{\gamma(c_1,c_2)}{\sqrt{\alpha_2 \log n+\beta_2(c_1,c_2)}} \\
&&+n^\kappa Q(\log n)+o(\frac{n^\kappa}{\sqrt{\log n}}). 
\label{sw4}
\end{eqnarray}
The function $Q(x)$ can be expressed as
$$
Q(x)=\sum_{(s_1,s_2)\in\partial\CK^*}e^{ix\Im(s_1+s_2)} 
\frac{\gamma(s_1,s_2)}{\sqrt{\alpha_2 x+ \beta_2(s_1,s_2)}}.
$$
If the matrix $\bP_2$ is logarithmically rationally related, 
then $\partial\CK$ is a lattice.
Let $\omega$ the root of $\bP_2$
then
$$
\partial\CK=\left\{\left(c_1+\frac{2ik\pi}{\log|\CA|},
c_2+2i\pi\ell\omega\right), k,\ell)\in\mathbb{Z}^2\right\},
$$
and $\sqrt{x}Q(x)$ is asymptotically double periodic.
Otherwise (i.e., irrational case),
$$
\partial\CK=\left\{\left(c_1+\frac{2ik\pi}{\log 2},c_2\right),~ 
k\in\mathbb{Z}\right\}
$$
and $\sqrt{x}Q(x)$ is asymptotically single periodic.
The amplitude of $Q$ is of order $10^{-6}$.
\label{theo01}
\end{theorem}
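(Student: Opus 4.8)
The plan is to start from the integral representation of $C_{n,n}$ obtained via double Poissonization and the double Mellin transform developed in Section~\ref{sec-preliminary}. Under the uniform assumption $\bP_1=\frac{1}{|\CA|}\bone\otimes\bone$ the Poisson transform of the prefix-complexity generating function factorizes nicely: one of the two Mellin variables, say $s_1$, decouples and contributes the elementary factor $(s_1+1)\Gamma(s_1)|\CA|^{s_1}$ coming from a geometric/uniform branching, while the $\bP_2$-dependence is carried entirely by the spectral data $\lambda(s_2)$, $\bun(s_2)$, $\bzeta(s_2)$ of $\bP(s)=\bP(0,s)$. First I would write $C_{n,n}$ as a double inverse-Mellin (Cauchy) integral over vertical lines $\Re(s_i)=\rho_i$, with integrand of the schematic form
\begin{equation}
\Gamma(s_1)\Gamma(s_2)(s_1+1)(s_2+1)\,f(s_1)g(s_2)\,n^{-s_1-s_2}\,\frac{1}{1-|\CA|^{s_1}\lambda(s_2)}\,+\,(\text{analytic correction}),
\end{equation}
where $f(s)=\la\bpi(s)|\bun(s)\ra$ and $g(s)=\la\bzeta(s)|\bone\ra$ record the projection of the initial distribution onto the dominant eigenvector and of that eigenvector onto the counting functional $\bone$. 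The relation $|\CA|^{s_1}\lambda(s_2)=1$ on the pole set is exactly $s_1=-\log_{|\CA|}\lambda(s_2)=-L(s_2)$, which is the curve $\CK$ rewritten as in (\ref{sw3}).

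The second step is to evaluate the $s_1$-integral by residues, picking up the simple pole at $s_1=-L(s_2)$; the residue of $1/(1-|\CA|^{s_1}\lambda(s_2))$ in $s_1$ produces a factor $1/(\log|\CA|)$ and leaves a one-dimensional integral in $s_2$ with integrand proportional to
\begin{equation}
\frac{f(-L(s_2))g(s_2)(1-L(s_2))(s_2+1)\Gamma(-L(s_2))\Gamma(s_2)}{\lambda(s_2)\log|\CA|}\,n^{L(s_2)-s_2}.
\end{equation}
Third, I would apply the saddle-point method to this $s_2$-integral. The exponent $n^{L(s_2)-s_2}$ has its saddle exactly where $\frac{d}{ds_2}(L(s_2)-s_2)=0$, i.e. $L'(c_2)=1$, which is precisely (\ref{sw3}) since $L(s)=\log_{|\CA|}\lambda(s)$ gives $L'(s)=\lambda'(s)/(\lambda(s)\log|\CA|)$. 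At the saddle the exponent equals $-\kappa\log n$ with $\kappa=\min_s\{L(s)-s\}$, the Gaussian width is $1/\sqrt{L''(c_2)\log n}=1/\sqrt{\alpha_2\log n}$, and expanding the slowly varying amplitude together with the $\Gamma$-functions to one more order in $1/\log n$ turns $\alpha_2\log n$ into $\alpha_2\log n+\beta_2(c_1,c_2)$; this is where the $\Psi$ and $\Psi'$ terms and the $f''/f-(f'/f)^2$, $g''/g-(g'/g)^2$ terms in $\beta_2$ come from — they are the second-derivative corrections of $\log[(s_1+1)\Gamma(s_1)]$, $\log[(s_2+1)\Gamma(s_2)]$, $\log f$, $\log g$ evaluated along the steepest-descent contour, together with the curvature of the pole curve $s_1=-L(s_2)$ feeding back through $\alpha_2(\Psi(s_1)+1/(1+s_1)+\log|\CA|)$. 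Collecting the prefactor gives exactly $\gamma(c_1,c_2)$ as displayed, with the $\sqrt{2\pi}$ from the Gaussian integral and $\lambda(c_2)^{-1}$ from the residue computation.

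The fourth step is the oscillatory term $Q$. When one shifts the $s_1$-contour, one meets not only the saddle-related pole but the entire family of poles of $1/(1-|\CA|^{s_1}\lambda(s_2))$ at $s_1=-L(s_2)+2ik\pi/\log|\CA|$, and, if $\bP_2$ is logarithmically rationally related with root $\omega$, a further periodic family of saddles in the $s_2$-direction at $s_2=c_2+2i\pi\ell\omega$ (these are the other minima of $\Re(L(s_2)-s_2)$ on the shifted lines, a standard consequence of $\log\lambda$ being rationally related — cf. the treatment of $\partial\CK$ in Section~\ref{sectCK}). Each contributes a term $n^{-(s_1+s_2)}=n^{-\kappa}e^{-i\log n\,\Im(s_1+s_2)}$ times the same saddle-point machinery evaluated at $(s_1,s_2)\in\partial\CK$, giving the stated Fourier-type sum for $Q(x)$; the phases $e^{ix\Im(s_1+s_2)}$ with the two incommensurable period lattices $2\pi/\log|\CA|$ and $2\pi\omega$ make $\sqrt{x}Q(x)$ double periodic in the rational case and single periodic otherwise. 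The amplitude estimate $10^{-6}$ follows from bounding $|\Gamma(s_1)\Gamma(s_2)|$ along vertical lines by its exponential decay $e^{-\pi|\Im s|/2}$ for the smallest nonzero $k,\ell$ with the relevant $\Im(s_1)=2\pi/\log|\CA|$, so the first harmonic is already tiny; I would state this as a crude but sufficient bound rather than compute it precisely.

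The main obstacle I expect is not the formal residue/saddle calculus but justifying it rigorously: one must verify that the double Mellin integral converges absolutely on the chosen vertical lines (controlling the tails in $\Im s_1,\Im s_2$ where $\Gamma$ decays but $|\CA|^{s_1}\lambda(s_2)$ need not be bounded away from $1$), that the contour shift past the pole curve $\CK$ is legitimate and the shifted remainder is genuinely $o(n^\kappa/\sqrt{\log n})$ (this is where the $O(n^\eps)$/$o(\cdot)$ error in (\ref{sw4}) is born, and it needs Lemma~\ref{lemc2} to know $c_1\le 0,\ c_2\ge -1$ so that the $\Gamma$-factors are finite and the contour can be placed), and that along the steepest-descent path the eigenvalue $\lambda(s)$ stays analytic and dominant — i.e. the spectral gap of $\bP(s_1,s_2)$ does not close — which is exactly why the non-nilpotent hypothesis is needed and where the analysis of $\overline{\CK}$ from Section~\ref{sectCK} is invoked. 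A secondary technical point is handling the possible coincidence of the pole $s_1=-L(s_2)$ with poles of $\Gamma(s_1)$ at nonpositive integers when $c_1$ is close to $0$: since $c_1\le 0$ strictly here ("we always have $c_1<0$ and $c_2<0$"), $\Gamma(c_1)$ is finite and no merging occurs, but this must be checked to rule out spurious double poles.
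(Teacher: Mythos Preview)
Your proposal is correct and follows essentially the same route as the paper: residue in $s_1$ at the poles $s_1=-L_{1,k}(s_2)$ to reduce to a one-dimensional integral (the paper's Lemma~\ref{lem-rho}), then a saddle-point evaluation in $s_2$ at $c_2$, with the lattice $\partial\CK$ generating the fluctuation $Q$. The one place the paper does substantially more work than your sketch anticipates is the irrational case: there is no discrete family of exact saddles in $s_2$, only a sequence of near-maxima $s_\ell$ of $\Re L(s)$ on $\Re(s)=c_2$ with $\Re L(s_\ell)\to L(c_2)$, and the paper shows these contribute only $o(z^\kappa/\sqrt{\log z})$ via a separation lemma for the $s_\ell$, the uniform convergence $L''(s_\ell)\to\alpha_2$ (Lemma~\ref{lem8bis}), and the summability of $|\Gamma(s_\ell)|$ along the vertical line --- this is the step your ``shifted remainder is genuinely $o(n^\kappa/\sqrt{\log n})$'' flag points to, and it is where most of the effort lies.
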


Now we consider the case when the matrices $\bP_1$ and $\bP_2$ are general 
and $\bP_1\neq\bP_2$. If they contain some zero coefficients, then
we may have $\bP(-1,0)\neq\bP_1$ and/or $\bP(0,-1)\neq\bP_2$. 
In this (very unlikely) case we may have $\bP(-1,0)=\bP(0,-1)$ or more
generally they are conjugate.
For example for matrices:
$$
\bP_1=\left[
\begin{array}{cccc}
0&0&\frac{5}{8}&\frac{1}{2}\\
0&\frac{1}{8}&\frac{1}{8}&\frac{1}{4}\\
\frac{1}{2}&\frac{1}{4}&\frac{1}{4}&\frac{1}{8}\\
\frac{1}{2}&\frac{5}{8}&0&\frac{1}{8}
\end{array}
\right]
\bP_2=\left[
\begin{array}{cccc}
\frac{1}{2}&\frac{5}{8}&0&\frac{1}{2}\\
\frac{1}{2}&\frac{1}{8}&\frac{1}{8}&\frac{1}{4}\\
0&\frac{1}{4}&\frac{1}{4}&\frac{1}{8}\\
0&0&\frac{5}{8}&\frac{1}{8}
\end{array}
\right]
$$
we have
$$
\bP(-1,0)=\bP(0,-1)=\left[
\begin{array}{cccc}
0&0&0&\frac{1}{2}\\
0&\frac{1}{8}&\frac{1}{8}&\frac{1}{4}\\
0&\frac{1}{4}&\frac{1}{4}&\frac{1}{8}\\
0&0&0&\frac{1}{8}
\end{array}
\right]
$$
If $\bP(-1,0)=\bP(0,-1)$, then there is no unique solution $(c_1,c_2)$ 
of the characteristic equation, 
and therefore there is no saddle point. 
As a consequence, we have the following theorem:

\begin{theorem}
When $\bP(-1,0)$ and $\bP(0,-1)$ are conjugate matrices we have
\be
C_{n,n}=\gamma_0(-\kappa)n^\kappa(1+o(1))
\ee
where $\kappa<1$ is such that $(-\kappa,0)\in\CK$ and 
$\gamma_0(-\kappa)$ are explicitly computable.
When both matrices are logarithmically rationally related, then
\be
C_{n,n}=\gamma(-\kappa)n^\kappa\left(1+Q_0(\log n)+O(n^{-\eps}\right)
\ee
where $Q_0(.)$ is small periodic function and $\eps>0$.
\label{theo11}
\end{theorem}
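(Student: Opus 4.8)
The plan is to follow the by-now standard analytic machinery developed in the earlier sections but adapted to the degenerate situation where the saddle point disappears. First I would recall from Section~\ref{sec-preliminary} the functional equation for $C_{n,m}$ and its Poisson transform $\tilde C(z_1,z_2)=\sum_{n,m}C_{n,m}\frac{z_1^n}{n!}\frac{z_2^m}{m!}e^{-z_1-z_2}$, together with the double Mellin representation of $\tilde C$. The Mellin image has poles governed by the characteristic equation $\lambda(s_1,s_2)=1$, i.e.\ by the kernel $\overline{\CK}$, and the contribution of each branch of $\overline{\CK}$ to the asymptotics is extracted by pushing the Mellin contour and collecting residues, then depoissonizing via Theorem-type results of Section~\ref{sec-preliminary}. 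In the generic case one gets a saddle point at $(c_1,c_2)$ because $\overline{\CK}$ is a genuine concave curve (Lemma~\ref{lemconc}) with a strictly interior minimizer of $-s_1-s_2$; the $\sqrt{\alpha\log n}$ in the denominator comes from the Gaussian (second-order) behaviour of $\lambda$ near that saddle. The key structural observation here is that $\lambda(s_1,s_2)=1$ with $(s_1,s_2)=(-\kappa,0)$ or $(0,-\kappa)$ forces $\bP(-1,0)$ and $\bP(0,-1)$ to meet (or be conjugate), which collapses the relevant portion of $\overline{\CK}$ onto the coordinate axes and makes the minimizer of $-s_1-s_2$ non-unique: the ``curve'' degenerates to a corner/segment, so there is no nondegenerate saddle and the Gaussian correction is absent.

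Concretely, the steps I would carry out are: (1) show that when $\bP(-1,0)$ and $\bP(0,-1)$ are conjugate, the real kernel $\CK$ passes through both $(-\kappa,0)$ and $(0,-\kappa)$ and that along $\CK$ the linear functional $-s_1-s_2$ attains its minimum $\kappa$ on a whole one-dimensional face rather than at an isolated point; in particular $(c_1,c_2)$ in \eqref{p4b} is not unique. (2) In the Mellin inversion, because there is no isolated saddle, the dominant contribution is a simple polar contribution from the line $\Re(s_1+s_2)=-\kappa$: shifting the contour past this line and applying the residue theorem in one variable (after fixing the other), one picks up a term proportional to $n^\kappa$ with a \emph{constant} coefficient $\gamma_0(-\kappa)$, expressible through the left/right eigenvectors $\bzeta,\bun$ of $\bP(-\kappa,0)$ and the stationary-distribution vectors $\bpi$, exactly as the residue of $\tilde C$ at that pole — no $1/\sqrt{\log n}$ appears because there is no quadratic flattening to integrate over. (3) Depoissonize: verify the analyticity/growth conditions in a cone so that the Poisson asymptotics $\tilde C(z,z)\sim \gamma_0(-\kappa)z^\kappa$ transfers to $C_{n,n}\sim\gamma_0(-\kappa)n^\kappa$, with error $o(n^\kappa)$. (4) For the commensurable (logarithmically rationally related) subcase, the poles on the line $\Re(s_1+s_2)=-\kappa$ form a lattice (a vertical arithmetic progression, spacing $2i\pi/\log 2$ or $2i\pi\omega$), and summing their residues produces the periodic fluctuation $Q_0(\log n)$ of small amplitude; a standard effective bound on the next contour gives the $O(n^{-\eps})$ remainder, and one also checks convergence and boundedness of the Fourier series defining $Q_0$.

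The main obstacle I expect is step (1)–(2): precisely characterizing the degenerate geometry of $\overline{\CK}$ near the axes when $\bP(-1,0)\simeq\bP(0,-1)$, and showing rigorously that the Mellin integrand really has only a \emph{simple} pole there (rather than a double pole or a branch point) so that the asymptotics is clean $n^\kappa$ with constant prefactor. Conjugacy of $\bP(-1,0)$ and $\bP(0,-1)$ should force the two eigenvalue branches $\lambda(s_1,0)$ and $\lambda(0,s_2)$ to coincide after relabeling, and I would use this to identify the residue $\gamma_0(-\kappa)$ and to rule out the confluence-of-singularities scenario; making the eigenvector normalizations and the Schur-product structure $\bP(s_1,s_2)=\bP_1(s_1)\star\bP_2(s_2)$ interact correctly is the delicate bookkeeping. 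The remaining pieces (depoissonization, lattice-sum for $Q_0$, error bounds) are routine given the apparatus already built for Theorems~\ref{theogenmark0}--\ref{theo01}.
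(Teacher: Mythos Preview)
Your diagnosis of the geometry is right: conjugacy of $\bP(-1,0)$ and $\bP(0,-1)$ forces $\lambda(s_1,s_2)$ to depend only on $s_1+s_2$ (indeed $\bP(c-s,s)$ is conjugate to $\bP(c,0)$ for all $s$), so $\CK$ is the straight line $s_1+s_2=-\kappa$, the minimizer of $-s_1-s_2$ is not isolated, and the saddle point disappears. That much matches the paper.

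Where you diverge is in the technical route. You propose to stay with the double Mellin representation and extract a residue ``along the line'' by moving one contour across the degenerate locus. The paper does something cleaner: it exploits the very fact that $\lambda$ depends only on $s_1+s_2$ to \emph{reduce to a single-variable problem before taking any Mellin transform}. Concretely, because (in the normalized case $x_a\equiv1$) the functional equation for $C_a(z_1,z_2)$ has the same multiplier $P(a|c)$ on both arguments, one introduces the matrix $\bG(z)$ with entries $g_{a,b}(z)=C_a(\pi_1(b)z,\pi_2(b)z)$, obtains a one-variable functional equation for $g_{a,b}$, and observes $C(z,z)=(1-e^{-z})^2+\trace(\bG(z))$. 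A \emph{single} Mellin transform then gives $\bG^*(s)=(\bI-\bP(s,0))^{-1}\Gamma(s)\br(s)\otimes\bone$, whose only relevant pole is the simple one at $s=-\kappa$ from $(\bI-\bP(s,0))^{-1}$; the residue is read off directly and yields $\gamma_0(-\kappa)$ in closed form. The periodic case follows from the usual lattice of poles on $\Re(s)=-\kappa$.

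The payoff of the paper's reduction is precisely that it sidesteps the obstacle you yourself flag: you never have to prove that the double Mellin integrand has ``only a simple pole along a line'' or that the remaining $s_2$-integral after one residue converges to a finite constant. In your approach, after picking up the residue at $s_1=-\kappa-s_2$ you are left with $\displaystyle z^\kappa\int_{\Re(s_2)=\rho}\Gamma(-\kappa-s_2)\Gamma(s_2)\,(\text{eigenvector data})\,ds_2$, and showing this integral is finite and equals the right constant is extra work (and the $\Gamma$-product does not obviously decay without further cancellation). It can likely be pushed through, but the single-variable reduction makes the argument essentially a one-line residue computation.
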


In general, however, $\bP(-1,0)$ and $\bP(0,-1)$ are {\it not} conjugate, and
therefore, there is a unique $(c_1,c_2)$ of the characteristic equation
$\lambda(c_1,c_2)=1$. 
As in the special case discussed above, $c_1>-1$ and $c_2>-1$, however,
our results are quantitatively different when
$c_1>0$ or $c_2>0$. We consider it first in Theorem~\ref{th3} below.
Since both cases cannot occur simultaneously, 
we dwell only on the case $c_2>0$; the case $c_1>0$ 
can be handled in a similar manner.

\begin{theorem}
\label{th3}\it
Assume $\bP(0,0)$ is not nilpotent and $c_2>0$.
\parsec
{\rm (i)} [{\sl Noncommensurable Case}.]
We assume that $\bP(-1,0)$ is not logarithmically related. Let $-1<c_0<0$ such 
that $(c_0,0)\in\CK$. 
There exist $\gamma_1$ such that
\begin{equation}
\label{p5}
C_{n,n}=\gamma_1 n^{-c_0}(1+o(1))
\end{equation}
\parsec
{\rm (ii)} [{\sl Commensurable Case}.]
Let now $\bP(-1,0)$ be logarithmically rationally related.
There exists a periodic function $Q_1(.)$ of small amplitude such that  
\begin{equation}
\label{p61}
C_{n,n}=\gamma_1 n^{-c_0}(1+Q_1(\log n)+O(n^{-\eps}).
\end{equation}
\end{theorem}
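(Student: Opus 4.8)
\textbf{Proof proposal for Theorem~\ref{th3}.}

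The plan is to analyze $C_{n,n}$ through its Poissonized version and a double Mellin transform, following the scheme developed in Sections~\ref{sec-preliminary}--\ref{sec-special}, but now tracking the new feature that the relevant saddle point $(c_1,c_2)$ has $c_2>0$ while $-1<c_1<0$. First I would recall the functional equation for $C_{n,m}$ (equations~(\ref{jac1-bis})--(\ref{jac2-bis})) and pass to the Poisson transform $\tilde C(z_1,z_2)=\sum_{n,m}C_{n,m}\frac{z_1^n}{n!}\frac{z_2^m}{m!}e^{-z_1-z_2}$; the recursive structure over the alphabet turns the functional equation into one solvable by iteration, and a double Mellin transform in the two Poisson variables produces a closed form whose integrand involves $\Gamma(s_1)\Gamma(s_2)$, the spectral data $\lb(s_1,s_2)$, $\bun(s_1,s_2)$, $\bzeta(s_1,s_2)$, $\bpi(s_1,s_2)$, and the geometric factor $(\bI-\bP(s_1,s_2))^{-1}$ coming from summing over prefix lengths. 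The key analytic fact is that this inverse has a pole exactly on the kernel $\overline{\CK}$ where $\lb(s_1,s_2)=1$, and Lemma~\ref{lemconc} tells us $\CK$ is a concave curve, so that $-s_1-s_2$ attains its minimum $\kappa$ at a unique point.

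Next I would carry out the inverse Mellin / depoissonization. Because $c_2>0$, the line of integration in the $s_2$ variable that picks up the dominant contribution cannot be pushed to the minimizing point $(c_1,c_2)$ in the same symmetric way as in Theorem~\ref{theo01}; instead, one of the two Gamma factors, $\Gamma(s_2)$, is analytic and bounded along the relevant contour rather than contributing a half-line of saddle points, and the dominant singularity is the simple pole of $(\bI-\bP(s_1,s_2))^{-1}$ hit when we move the $s_1$-contour to the point $c_0$ with $(c_0,0)\in\CK$. Concretely, I would fix $s_2$ near $0$, shift the $s_1$-contour leftward past $\Re(s_1)=c_0$, collect the residue at the simple pole, and then perform the (ordinary, one-dimensional) asymptotic evaluation in $s_2$ around $s_2=0$; here $\Gamma(s_2)$ has its pole at $s_2=0$ which interacts with the $(s_2+1)$-type zeros and with the behavior of the eigen-data, but because $c_2>0$ the saddle in $s_2$ is non-degenerate and yields a clean power $n^{-c_0}$ with no extra $1/\sqrt{\log n}$ factor — this is the qualitative difference from Theorem~\ref{theo01} and explains the form (\ref{p5}). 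The constant $\gamma_1$ is then read off as the residue times the Gamma/eigenvector factors evaluated at $(c_0,0)$, exactly as in the earlier theorems. Standard analytic depoissonization (the two-dimensional version established in Section~\ref{sec-preliminary}) transfers the asymptotics from $\tilde C(n,n)$ to $C_{n,n}$ with an error absorbed into the $o(1)$.

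For part (ii), when $\bP(-1,0)$ is logarithmically rationally related, the poles of $(\bI-\bP(s_1,s_2))^{-1}$ on the border are no longer isolated but form a lattice in the imaginary direction: there is a family $(c_0+\tfrac{2\pi i k}{\log|\CA|},0)$, $k\in\mathbb{Z}$, all with the same real part $c_0$. Summing the residues over this family produces, after the inverse Mellin transform, the Fourier series of a periodic function $Q_1(\log n)$ of small amplitude multiplying $n^{-c_0}$; the smallness of the amplitude follows from the rapid decay of $\Gamma(c_0+2\pi i k/\log|\CA|)$ in $k$, just as in Theorem~\ref{th-same}(ii) and Theorem~\ref{theo01}. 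The remaining non-dominant singularities lie strictly to the left and contribute the $O(n^{-\eps})$ term for some $\eps>0$.

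\textbf{Main obstacle.} The hard part will be justifying the contour shifts and the uniform bounds needed for depoissonization in the regime $c_2>0$: one must show that $(\bI-\bP(s_1,s_2))^{-1}$ has no unexpected singularities between the integration line and $\Re(s_1)=c_0$ (using concavity of $\CK$ from Lemma~\ref{lemconc} together with Lemma~\ref{lemc2} to locate $(c_1,c_2)$ and rule out competing real parts), and that the integrand decays fast enough in the imaginary directions so that the residue sum converges and the tail integrals are genuinely lower order. Establishing that the $s_2$-integral near $s_2=0$ is a non-degenerate saddle (so no logarithmic correction appears) — equivalently, that the second-order term in the local expansion of $L(s_2)-s_2$ is strictly positive there and the Gamma pole does not cancel the contribution — is the delicate point that distinguishes this case from Theorem~\ref{theo01}.
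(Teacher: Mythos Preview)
Your overall framework is right --- double Poisson, double Mellin, reduce the $s_1$-integral by the kernel residue to a one-dimensional integral in $s_2$ --- but the mechanism you describe at $s_2=0$ is wrong, and this is the heart of the theorem. There is no saddle point at $s_2=0$. The saddle of $z^{L(s_2)-s_2}$ sits at $s_2=c_2$, and the whole point of the hypothesis $c_2>0$ is that when you move the $s_2$-contour from $\Re(s_2)=\rho<-1$ rightward toward that saddle, you must first cross the \emph{pole} of $\Gamma(s_2)$ at $s_2=0$. The residue there evaluates the integrand at $s_2=0$, producing $H_1(0,z)$, which is of order $z^{L(0)}=z^{-c_0}$. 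The saddle contribution $I(z,c_2)$ is only $O(z^{\kappa}/\sqrt{\log z})$, and since $c_2\neq 0$ is the strict minimizer of $L(s)-s$ you have $-c_0=L(0)>\kappa$, so the residue dominates. That is why no $1/\sqrt{\log n}$ appears: it is a residue, not a Laplace/saddle evaluation. Your explanation (``the saddle in $s_2$ is non-degenerate'') would not produce this; a non-degenerate saddle still gives a $1/\sqrt{\log n}$ factor.

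Two smaller points. First, the contour shift in $s_1$ is rightward (from $\rho\in(-2,-1)$ to large positive real part), not leftward. Second, for part~(ii) the periodic structure does not come from a lattice with spacing $2\pi/\log|\CA|$ in general --- that is specific to the uniform-source case of Theorem~\ref{theo01}. Here the fluctuations arise from the sum over $k$ inside $H_1(0,z)$, i.e.\ over the countably many $s_1$ with $(s_1,0)\in\overline{\CK}$ and $\Re(s_1)=c_0$; when $\bP(-1,0)$ is logarithmically rationally related these are spaced by $2\pi i\nu$ with $\nu$ the root of $\bP(-1,0)$, and the resulting Fourier sum gives $Q_1(\log n)$.
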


Finally, we handle the most intricate case when
 both $c_1$ and $c_2$ are between $-1$ and $0$.  
Recall that when both matrix $\bP_1$ and $\bP_2$ have all positive
coefficients, then  $\bP(-1,0)=\bP_1$ and $\bP(0,-1)=\bP_2$.

\begin{theorem}
\label{theogenmark}\it
Assume that both $c_1$ and $c_2$ are between $-1$ and $0$ and 
$\bP(0,0)$ is not nilpotent.
\parsec
{\rm (i)} [{\sl Non periodic Case}.]
If $\bP(-1,0)$ and $\bP(0,-1)$ are not logarithmically commensurable matrices, 
then there exist 
$\alpha_2$, $\beta_2$ and $\gamma_2$ such that
\begin{equation}
\label{p7}
C_{n,n}=\frac{\gamma_2 n^\kappa}{\sqrt{\alpha_2\log n+\beta_2}}(1+o(1)).
\end{equation}
\parsec
{\rm (ii)}  [{\sl Mono periodic case}.]
If only one of the matrices $\bP(-1,0)$ and $\bP(0,-1)$ is logarithmically rationally related,
then there exists a periodic function $Q_2(x)$ of small amplitude such that 
\begin{equation}
\label{p7.51}
C_{n,n}=\frac{\gamma_2 n^\kappa}{\sqrt{\alpha_2\log n+\beta_2}}
(1+Q_1(\log n) +o(1)).
\end{equation}
\parsec
{\rm (iii)} [{\sl Double periodic case}.]
If both matrices $\bP_1$ and $\bP_2$ are 
logarithmically rationally related,
then function $Q_2(.)$ is double periodic with 
small amplitude such that 
\begin{equation}
\label{p81}
C_{n,n}=\frac{\gamma_2 n^\kappa}{\sqrt{\alpha_2\log n+\beta_2}}
(1+Q_2(\log n) +o(1)).
\end{equation}
In the three cases the constants  $\gamma_2$, $\alpha_2$ and $\beta_2$ are explicitly computable. 
\end{theorem}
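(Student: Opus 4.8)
\emph{Overview.} The plan is to run, for a genuinely two‑dimensional kernel, the same analytic pipeline that underlies Theorem~\ref{theo01}: functional equation $\to$ Poissonization $\to$ double Mellin transform $\to$ spectral analysis of the kernel $\to$ multidimensional saddle point $\to$ depoissonization, the only difference being that the product simplification $\bP(s_1,s_2)=|\CA|^{s_1}\bP(s)$ available when $\bP_1$ is uniform is now replaced by the full Schur‑product structure $\bP(s_1,s_2)=\bP_1(s_1)\star\bP_2(s_2)$ and its spectral decomposition. \emph{Step 1 (functional equation and Poissonization).} Starting from (\ref{eq-omega}) and splitting a common prefix by its first symbol yields a recurrence for $C_{n,m}$; Poissonizing the two sample sizes independently (means $z_1,z_2$) gives a function $\tilde C(z_1,z_2)$ solving a trie‑type fixed‑point equation indexed by $\CA$ with weights $P_1(\cdot|\cdot)$ and $P_2(\cdot|\cdot)$. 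Only the diagonal $z_1=z_2=z$ matters at the end since $m=n$, but the two variables must be carried separately through the transform. \emph{Step 2 (double Mellin transform).} Taking the Mellin transform in $(z_1,z_2)$ linearizes the fixed‑point equation and produces a closed form $C^*(s_1,s_2)$ whose matrix part is the resolvent $(\bI-\bP(s_1,s_2))^{-1}$ sandwiched between $\bpi(s_1,s_2)$ and the common‑symbol indicator $\bone_C$, times the Gamma factors $\Gamma(s_1)\Gamma(s_2)$ coming from Poissonization of the two indicator events. The singularities in the fundamental strip are exactly the kernel $\overline{\CK}$ (where $\lambda(s_1,s_2)=1$) together with the lines $s_i\in\{-1,-2,\dots\}$; the hypothesis $c_1,c_2\in(-1,0)$ together with Lemma~\ref{lemc2} guarantees that the dominant singularity is the kernel, not a Gamma pole.

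\emph{Step 3 (contour shift, reduction to a curve integral).} Inverting, $\tilde C(z,z)=\frac{1}{(2i\pi)^2}\iint C^*(s_1,s_2)\,z^{-(s_1+s_2)}\,ds_1\,ds_2$, and I would push the double contour past $\overline{\CK}$. Because $\overline{\CK}$ is a complex curve rather than an isolated point, the residue is itself a one‑dimensional integral along $\overline{\CK}$, of the form $\int e^{-(s_1+s_2)\log z}\,\phi(s_1,s_2)\,d\ell$, where $\phi$ is assembled from the eigenvectors $\bu(s_1,s_2)$, $\bzeta(s_1,s_2)$ (through $f$‑ and $g$‑type scalar products as in Theorem~\ref{theo01}) and the Gamma factors. \emph{Step 4 (saddle point giving $\gamma_2,\alpha_2,\beta_2$).} On the real branch $\CK$, concave by Lemma~\ref{lemconc}, the exponent $-(s_1+s_2)\log z$ is maximized at the point $(c_1,c_2)$ of (\ref{p4a})--(\ref{p4b}), which yields the $n^\kappa$ growth with $\kappa=-c_1-c_2$. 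A Laplace expansion transverse to $\CK$ at $(c_1,c_2)$ then produces the $1/\sqrt{\alpha_2\log n+\beta_2}$ factor: $\alpha_2$ is the curvature of $\CK$ at $(c_1,c_2)$ (the analogue of $L''(c_2)$), $\gamma_2$ collects the value of $\phi$ and the Gaussian constant, and $\beta_2$ absorbs the second‑order corrections of $\phi$ — the $\Psi,\Psi'$ contributions of the Gamma factors and the eigenvector corrections $f''/f,(f'/f)^2,g''/g,(g'/g)^2$ — exactly in parallel with Theorem~\ref{theo01}.

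\emph{Step 5 (oscillations and depoissonization).} The remaining complex points of $\overline{\CK}$ lying above $(c_1,c_2)$ — namely $\partial\CK$ — contribute the fluctuating part. In the noncommensurable case (i), $\partial\CK=\{(c_1,c_2)\}$ and there is no oscillation. If exactly one of $\bP(-1,0)$, $\bP(0,-1)$ is logarithmically rationally related, $\partial\CK$ is a one‑dimensional lattice and summing the (absolutely convergent, by the Gamma decay) saddle contributions over it gives a periodic $Q_1(\log n)$ of amplitude controlled by $|\Gamma(c_1+it)\Gamma(c_2+it')|$; if both are, $\partial\CK$ is a lattice with two incommensurable generators and the sum is double‑periodic $Q_2(\log n)$. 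Finally I would invoke the two‑dimensional analytic depoissonization lemma to pass from $\tilde C(z,z)$ to $C_{n,n}$, verifying the cone growth bounds on $\tilde C(z_1,z_2)$ (equivalently, uniform estimates on the Mellin integrand off the saddle, which also bound the $o(1)$ term), and then Theorem~\ref{theoJC} translates the result to $J_{n,n}$ if needed.

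\emph{Main obstacle.} Two points are genuinely delicate. First, the residue of the double Mellin inversion is a contour integral \emph{along a curve of poles}, so one must carry out a uniformly valid saddle‑point expansion on that curve — simultaneously at the real saddle $(c_1,c_2)$ and at the infinitely many imaginary translates forming $\partial\CK$ — while extracting $\beta_2$ correctly and proving that the series over $\partial\CK$ converges to a bona fide (double‑)periodic function rather than a merely formal sum; this is the technical heart of the argument. Second, depoissonizing in two variables in the presence of these $n$‑dependent oscillating terms requires care to ensure the fluctuations survive depoissonization unchanged, which forces the cone estimates of Step 5 to be sharp enough to dominate the oscillation amplitude.
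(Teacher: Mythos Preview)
Your proposal is correct and follows essentially the paper's approach: double Mellin transform of the Poissonized functional equation, spectral reduction to the principal eigenvalue of $\bP(s_1,s_2)$, saddle point on $\CK$ at $(c_1,c_2)$, and summation over $\partial\CK$ for the (double) periodic terms, with two-variable depoissonization at the end. The paper makes your Step~3 concrete not by ``pushing a double contour past a curve'' but by integrating out $s_1$ first via residues at the implicitly defined roots $s_1=-L_{j,k}(s_2)$ of $\lambda_j(-L_{j,k}(s_2),s_2)=1$ (the general-case analogue of the explicit $L(s)=\log_{|\CA|}\lambda(s)$), thereby reducing to a one-variable saddle-point integral in $s_2$ on $\Re(s_2)=c_2$; the only genuinely new ingredient beyond Theorem~\ref{theo01} is a lemma establishing uniformly that the non-principal branches $j>1$ contribute $O(z^{\kappa-\eps})$, after which the rational/irrational saddle-point dichotomy of Section~\ref{sec-special} carries over verbatim. (Minor slip: the resolvent is applied to $\bone$, not $\bone_C$, in the Mellin formula~(\ref{eth1}).)
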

\parsec
{\bf Remark}
In the rational case, the root can be such that 
$\omega\log|\CA|$ is rational. In this case $\sqrt{x}Q(x)$ tends to a simple 
periodic function instead of a double periodic function. 

At last, we provide explicit expression for some of constants in previously
stated results, in particular in the most interesting Theorem~\ref{theogenmark}. We denote $H(s_1,s_2)=1-\lb(s_1,s_2)$. Let $H_1(s_1,s_2)=\frac{\partial}{\partial s_1}\lb(s_1,s_2)$, $f(s_1,s_2)=\la\bpi(s_1,s_2)|\bun(s_1,s_2)\ra$ and $g(s_1,s_2)=\la\bzeta(s_1,s_2)|\bone\ra$.

\begin{theorem}
Let $c_1$ and $c_2$ be between $-1$ and $0$.
In the general case we have
\be
C_{n,n}=n^\kappa\sum_{(s_1,s_2)\in\partial\CK}\frac{\gamma(s_1,s_2)n^{-i\Im(s_1+s_2)}}{\sqrt{\alpha_2\log n+\beta(s_1,s_2)}}\left(1+O(\frac{1}{\log n})\right).
\ee
With $H_1$, $\alpha_2$,  $\beta_2(s_1,s_2)$ and $\gamma(s_1,s_2)$ such
\begin{eqnarray*}
H_1&=&\frac{\partial}{\partial s_1}H(c_1,c_2)\\
\alpha_2&=&\frac{\Delta H-2\frac{\partial^2}{\partial s_1\partial s_2}H}{H_1}|_{(s_1,s_2)=(c_1,c_2)}\\
\beta_2(s_1,s_2)&=&-\frac{\alpha_2}{2}\left(\Psi(s_1)+\Psi(s_2)+\frac{\fps1f +\fps2 f}{f}+\frac{\fps1 g+\fps2 g}{g}\right)\\
&&+\Psi'(s_1)+\Psi'(s_2)+\frac{\Delta f-2\frac{\partial^2}{\partial s_1\partial s_2}f}{f}+\frac{\Delta g -2\frac{\partial^2}{\partial s_1\partial s_2}g}{g}\\
&&-\left(\frac{\fps1 f-\fps2 f}{f}\right)^2-\left(\frac{\fps1 g-\fps2 g}{g}\right)^2\\
&&+\frac{\alpha_2^2}{2}-\frac{\frac{\partial^3}{\partial s_1^3}H-\frac{\partial^3}{\partial s_1^2\partial s_2}H-\frac{\partial^3}{\partial s_1\partial s_2^2}H+\frac{\partial^3}{\partial s_2^3}H}{2H_1}
+\left(\frac{\frac{\partial^2}{\partial s_1^2}H-\frac{\partial^2}{\partial s_2^2}H}{2H_1}\right)^2
\end{eqnarray*}
and
$$
\gamma(s_1,s_2)=\frac{f(s_1)g(s_2)(s_1+1)\Gamma(s_1)(s_2+1)
\Gamma(s_2)}{\lb(s_2)\log|\CA|\sqrt{2\pi}}.
$$

\label{theogege}\end{theorem}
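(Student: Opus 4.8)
The plan is to extract the stated constants from the double-Mellin/saddle-point analysis that underlies Theorem~\ref{theogenmark}, specializing the generic singularity picture to the geometry of the kernel $\overline{\CK}$ near its border $\partial\CK$. First I would recall from Section~\ref{sec-preliminary} the functional equation for $C_{n,m}$, its poissonization, and the double Mellin transform; this produces an integral representation in which $C_{n,n}$ is recovered by inverting two Mellin variables $(s_1,s_2)$ along vertical lines, with the integrand carrying the factor $\Gamma(s_1)\Gamma(s_2)$ (hence the psi functions $\Psi$ and $\Psi'$ in $\beta_2$) together with a geometric sum whose denominator is $H(s_1,s_2)=1-\lb(s_1,s_2)$. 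The dominant contribution comes from the poles where $H(s_1,s_2)=0$, i.e. from points of the complex kernel $\overline{\CK}$; collecting residues in $s_1$ converts the double integral into a single contour integral over a curve tracing $\overline{\CK}$, with a residue weight $1/H_1(s_1,s_2)$ where $H_1=\frac{\partial}{\partial s_1}H$.

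The second step is the saddle-point evaluation of that remaining single integral. Parametrizing $\overline{\CK}$ near the minimizer $(c_1,c_2)$ of $-s_1-s_2$ (which exists and is unique by Lemma~\ref{lemconc} and the non-nilpotence/non-conjugacy hypothesis), the exponential factor is $n^{-s_1-s_2}$, so $n^\kappa$ factors out with $\kappa=-c_1-c_2$, and the quadratic behaviour of $-s_1-s_2$ along the concave kernel curve is governed by the Hessian of $\lb$; this is exactly where $\alpha_2=\big(\Delta H-2\partial^2_{s_1 s_2}H\big)/H_1$ at $(c_1,c_2)$ enters, as the second derivative of $-(s_1+s_2)$ restricted to $\{H=0\}$ expressed via implicit differentiation of the constraint. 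The Gaussian integral over the saddle then yields the $1/\sqrt{\alpha_2\log n+\beta_2}$ shape, with $\beta_2$ absorbing (a) the smooth prefactors $f(s_1,s_2)=\la\bpi(s_1,s_2)|\bun(s_1,s_2)\ra$ and $g(s_1,s_2)=\la\bzeta(s_1,s_2)|\bone\ra$ through their logarithmic second derivatives along the curve, (b) the $\Gamma$-factor contributions $\Psi,\Psi'$, and (c) the third-order correction to the saddle, which is the source of the $\partial^3 H$ and $(\partial^2 H)^2/H_1^2$ terms. The constant $\gamma(s_1,s_2)$ is read off as the product of these smooth prefactors evaluated at the saddle together with the Jacobian of the change of variables and the $\sqrt{2\pi}$ from the Gaussian, with the $\bP_1=\tfrac1{|\CA|}\bone\otimes\bone$ normalization giving the $(s_i+1)\Gamma(s_i)$ and $\log|\CA|$ factors as in Theorem~\ref{theo01}. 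Summing over all pre-images $(s_1,s_2)\in\partial\CK$ of the real saddle---a single point in the generic case, a lattice in the logarithmically commensurable case---produces the oscillating sum displayed in the statement; the $O(1/\log n)$ error is the next term in the saddle expansion.

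The main obstacle I expect is bookkeeping the saddle-point expansion to the order needed for $\beta_2$: one must carry the cubic term in the Taylor expansion of $H$ along the (curved, non-axis-aligned) kernel, simultaneously expand $f$ and $g$ to second order and $\Gamma(s_1)\Gamma(s_2)$ to second order, and then verify that the messy combination collapses to the stated formula---in particular that the mixed third derivatives of $H$ combine into the antisymmetric pattern $\partial^3_{s_1^3}H-\partial^3_{s_1^2 s_2}H-\partial^3_{s_1 s_2^2}H+\partial^3_{s_2^3}H$ and that the ``square'' correction $\big((\partial^2_{s_1^2}H-\partial^2_{s_2^2}H)/2H_1\big)^2$ appears with the right sign. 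A secondary technical point is justifying the interchange of residue summation and integration, and controlling the tails of the Mellin inversion away from the saddle so that only $\partial\CK$ contributes to the stated order; this is handled exactly as in the proof of Theorem~\ref{theogenmark} in Section~\ref{sec-last}, to which the argument reduces once the constants are identified.
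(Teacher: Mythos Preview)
Your proposal is correct and follows essentially the same route as the paper: the paper's proof of Theorem~\ref{theogege} in Section~\ref{sec-last} consists of a single lemma isolating the main eigenvalue (showing the $j>1$ branches contribute $O(z^{\kappa-\eps})$) and then the sentence ``follow the footsteps of the proofs of Theorems~\ref{theogenmark0} and~\ref{theo01}'', which is precisely the residue-in-$s_1$ plus saddle-point-in-$s_2$ scheme you describe, with the constants $\alpha_2$, $\beta_2$, $\gamma$ read off from the implicit parametrization of $\{H=0\}$. Your identification of the tail control with the secondary-eigenvalue lemma is exactly right, and your anticipated bookkeeping for $\beta_2$ (carrying the cubic term of $H$ and the second-order expansion of $f$, $g$, $\Gamma$) is the computation the paper leaves implicit.
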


The expression for $\alpha_2$ seems to be asymmetric in $(s_1,s_2)$. 
In fact, it is not since the maximum of $s_1+s_2$ for $(s_1,s_2)\in\CK$ 
attained on $(c_1,c_2)$ necessarily implies that 
$\frac{\partial}{\partial s_1}H(c_1,c_2)=\frac{\partial}{\partial s_2}H(c_1,c_2)$.
Formally the constant $\alpha_2$ is equal to $\Delta\lb_1(c_1,c_2)
\langle\bone|\nabla\lb_1(c_1,c_2)\rangle$ where $\nabla$ is the 
gradient operator, $\Delta$ the Laplacian operator $\frac{\partial^2}{\partial 
s_1^2}+\frac{\partial^2}{\partial s_2^2}$.

\def\gutter{2cm}

\begin{figure}
\centerline{\includegraphics[width=12cm]{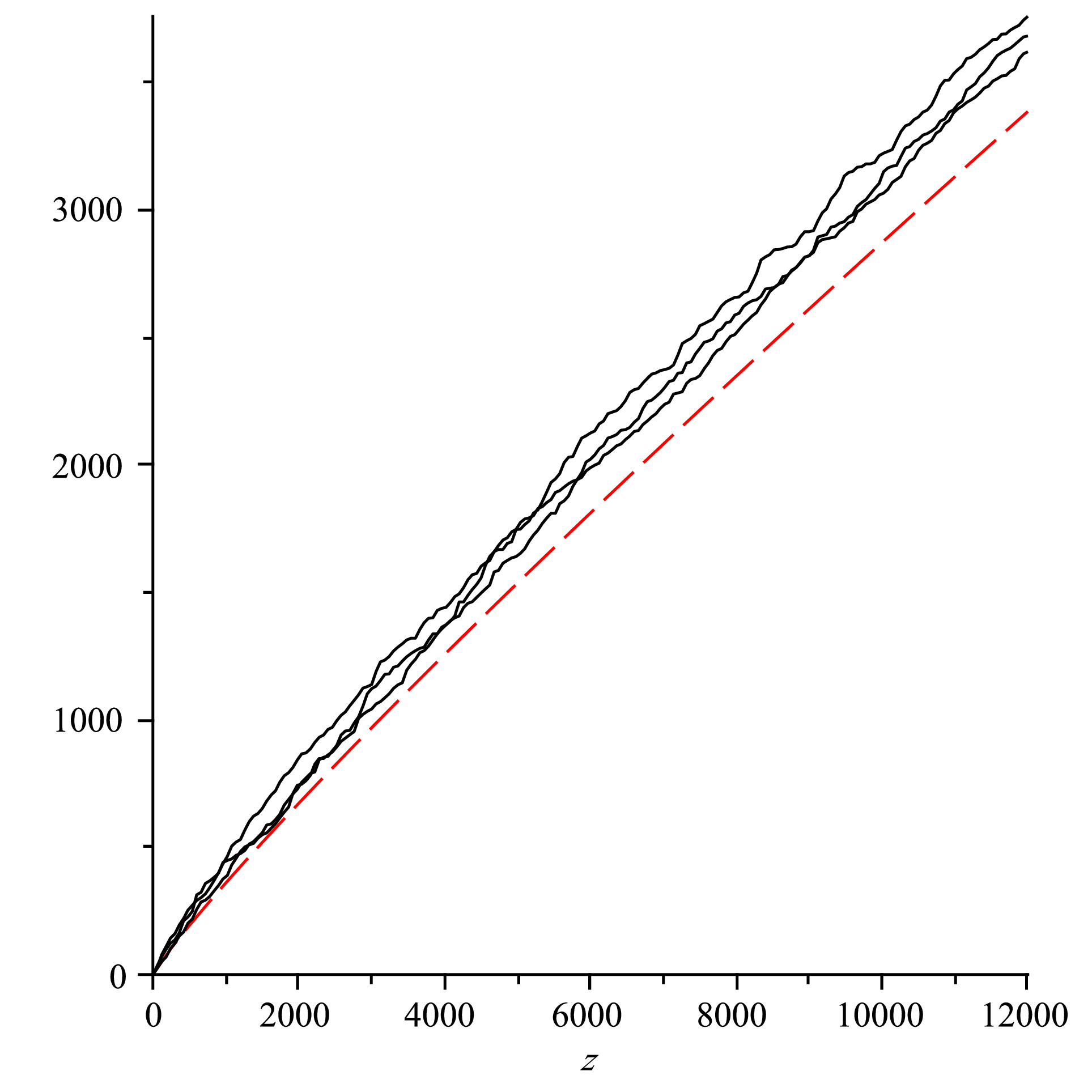}}
\caption{Joint complexity: three simulated trajectories (black)
versus asymptotic average (dashed red) for the case $c_2>0$.}
\label{fig1a}
\end{figure}

\begin{figure}
\centerline{\includegraphics[width=12cm]{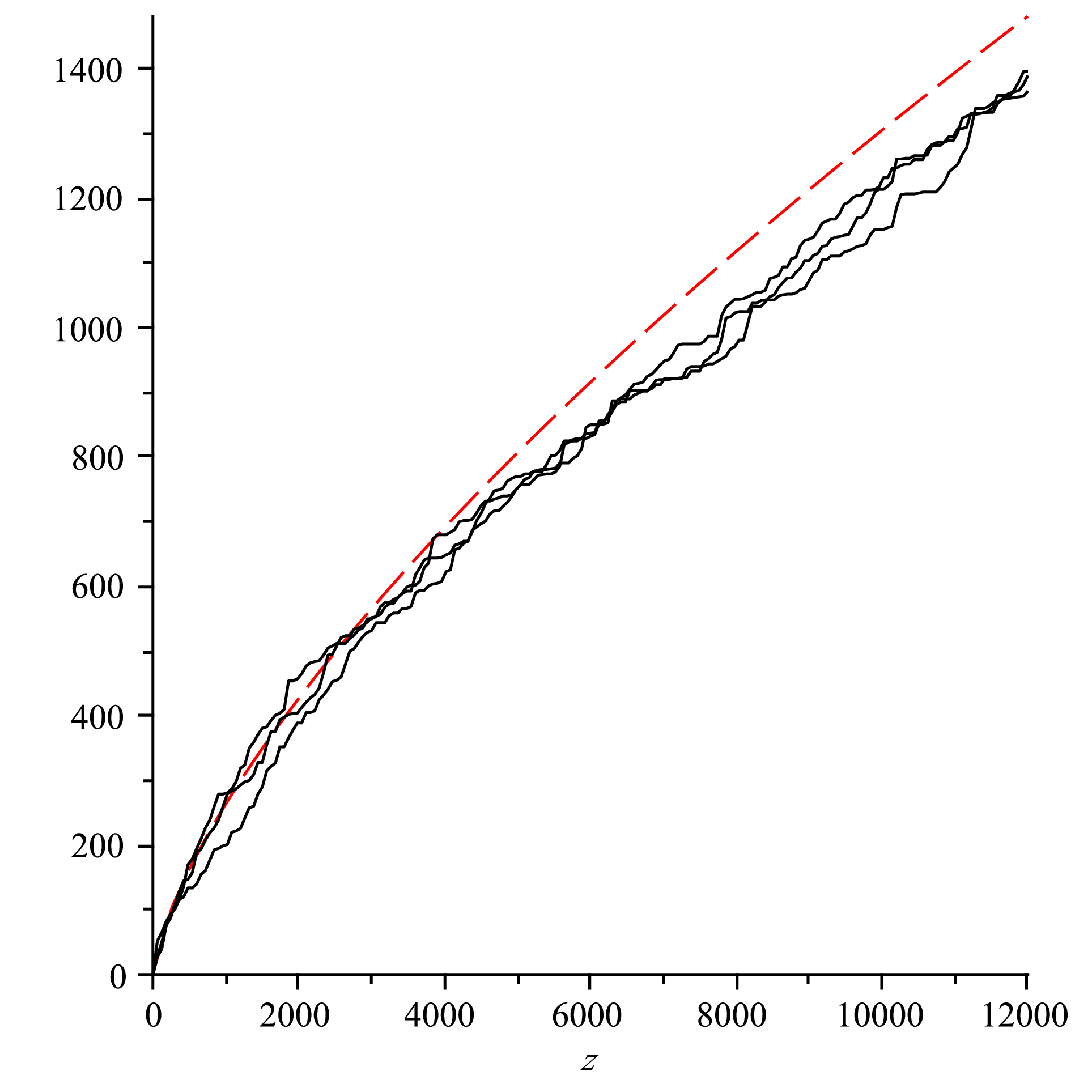}}
\caption{Joint semi-complexity: three simulated trajectories (black)
versus asymptotic average (dashed red)for the case $c_2<0$.}
\label{fig1b}
\end{figure}

Finally, we illustrate our results on two examples.
\med
{\bf Example}.
In Figures~\ref{fig1a} and \ref{fig1b} we plot the joint complexity for several
pairs of strings $X$ and $Y$.
String $X$ is generated by a Markov source with
the transition matrix $\bP$, and string $Y$ is generated by
a uniform memoryless source.
We consider two Markov sources for $X$ with
the following transition matrix:

\be
\bP=\left[\begin{array}{cc}
 0&0.5\\
 1&0.5
 \end{array}\right]
\hspace{\gutter}
\bP=\left[\begin{array}{cc}
 0.2&0.8\\
 0.8&0.2
 \end{array}\right] .
\ee
For the first $\bP$ in Figure~\ref{fig1a} we have $c_2>0$ (see Theorem~\ref{th3})
while for the second $\bP$ in Figure~\ref{fig1b} we have $c_2<0$
(cf. Theorem~\ref{theogenmark0}(i)).

\section{Proof of Theorem~\ref{theoJC}}
\label{sec-th1}

We recall that $X$ and $Y$ are two independent strings of length $n$ and $m$,
generated by two Markov sources characterized by transition
matrices $\bP_1$ and $\bP_2$, respectively. In the previous
section we write $|X|_w$ and $|Y|_w$ for the number of $w\in \CA$ occurrences
in $X$ and $Y$. But it will be convenient to use another notation for
these quantities, namely
$$
O^1_n(w):= |X|_w, \ \ \ \ O^2_m(w):=|Y|_w.
$$
We shall use this notation interchangeably.
Finally, we write $\A^+=\A^*-\{\nu\}$, that is, 
for the set of all nonempty words.
As observed in (\ref{eq-comp}) we have
\be
J_{n,m}=\sum_{w\in\CA^+}P(O_n^1(w)\ge 1) P(O_m^2(w) \ge 1).
\ee

In \cite{rs98,js-book} the generating function of $P(O_n(w)\ge 1)$ for
a Markov source is derived. It involves the 
{\it autocorrelation} polynomial of $w$, as discussed below. 
However, to make our analysis tractable we notice that $O_n^i(w)\ge 1$,
$i=1,2$, is equivalent to $w$ being a prefix of at least one of 
the $n$ suffixes of $X$. But this is not sufficient to push forward
our analysis. We need a second much deeper observation that replaces
{\it dependent suffixes} with {\it independent strings} 
to shift analysis from suffix trees to tries, as 
already observed in \cite{js94} and briefly discussed in 
Section~\ref{sec-main}.
In order to accomplish it, we consider two {\it sets} of 
strings of size $n$ and $m$, respectively, generated 
{\it independently} by Markov sources $\bP_1$ and $\bP_2$.
As in Section~\ref{sec-main} we denote by $\Omega_n^i(w)$ the number 
of strings for which $w$ is a prefix
when there are $n$ strings generated by source
$i$, for $i\in\{1,2\}$. The average {\it joint prefix complexity} 
satisfies (\ref{eq-omega}) that we repeat below
\begin{equation}
\label{chap10-purdue1}
C_{n,m}=\sum_{w\in\CA^+} 
P(\Omega^{1}_n(w)\ge 1)P(\Omega^2_m(w)\ge1).
\end{equation}


Before we prove our first main result Theorem~\ref{theoJC} 
we need some preliminary work.
First, observe that it is relatively easy to compute the probability
$P(\Omega^i_n(w)\ge 1)$. Indeed,
$$
P(\Omega^i_n(w)\ge 1)=1-(1-P_i(w))^n.
$$
Notice that the quantity $1-(1-P(w))^n$ is the probability 
that $|\CX_n|_w>0$ where $\CX_n$ is a set of $n$ independently generated 
strings.  To prove Theorem~\ref{theoJC} we must show that when $n\to\infty$
$$
P(O_n^i(w) \ge 1) \sim P(\Omega^i_n(w)\ge 1)
$$
which we do in the next key lemma.

We denote by $\CB_k$  the set of words of length $k$ such that a word 
$w\in \CB_k$ does not overlap with itself over more than $k/2$ characters 
(see~\cite{js-book,js94,fw05} for more precise definition).
It is proven in~\cite{fw05} that 
$$
\sum_{w\in\CA^k-\CB_k}P(w)=O(\delta_1^k)
$$ 
where $\delta_1<1$ is the largest element of the Markovian transition 
matrix $\bP$. In order to allow some transition probabilities to be equal to $0$  
we define 
\begin{eqnarray*}
p&=&\exp\left(\lim\sup_{k,w\in\CA^k}\frac{\log P(w)}{k}\right)\\
q&=& \exp\left(\lim\inf_{k,w\in\CA^k,P(w)\neq 0}\frac{\log P(w)}{k}\right). 
\end{eqnarray*}
These quantities exist and are smaller than 1 since $\CA$ is a 
finite alphabet \cite{pittel85,spa-book}. In fact, they are related to Renyi's entropy
of order $\pm \infty$, respectively \cite{spa-book}. 
We write $\delta=\sqrt{p}<1.$

Let $X_n$ be a string of length $n$ generated by a Markov source. 
For $w \in \CA^*$, define  
\begin{equation}
\label{eq-d}
d_n(w)=P(O_n(w)>0)-(1-(1-P(w))^n).
\end{equation}
We prove the following lemma

\begin{lemma}\it
Let $w\in\CA^k$ be of length $k$. There exists $\rho>1$ and a 
sequence $R_n(w)=O(P(w)\rho^{-n})$ 
such that for all $1>\eps>0$ we have:
\parsec
{\rm (i)} for $w\in\CB_k$: $d_n(w)=O((nP(w))^\eps k\delta^k)+R_n(w)$;
\parsec
{\rm(ii)} for $w\in\CA^k-\CB^k$: $d_n(w)=O((nP(w))^\eps\delta^k)+R_n(w)$.
\end{lemma}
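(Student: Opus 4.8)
The plan is to express $d_n(w)$ through generating functions and isolate the discrepancy between the ``suffix'' model (dependent occurrences inside a single string $X_n$) and the ``trie'' model (independent strings). Recall from \cite{rs98,js-book} that for a Markov source the bivariate generating function of $P(O_n(w)=j)$ is governed by the autocorrelation polynomial $S_w(z)$ of $w$: the generating function $\sum_n P(O_n(w)>0)z^n$ has the form $\frac{z^k P(w) D_w(z)}{(1-z)(\text{denominator involving } S_w(z))}$, where the denominator encodes the self-overlap structure of $w$. The key point is that $1-(1-P(w))^n$ is exactly what one gets from the \emph{same} formula after replacing $S_w(z)$ by its ``leading'' term, i.e. by pretending $w$ has no nontrivial self-overlap. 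Hence $d_n(w)$ is, up to the transient term $R_n(w)$ coming from poles strictly inside the unit disk of modulus $1/\rho<1$ (these contribute $O(P(w)\rho^{-n})$ after singularity analysis, with the $P(w)$ factor because the numerator carries $P(w)$), controlled by the size of the correlation corrections, which are $O(P(w'))$ for the nonempty proper borders $w'$ of $w$.

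First I would write $d_n(w)$ via Cauchy's formula as a contour integral of the difference of the two generating functions, and push the contour out to radius $\rho>1$ chosen smaller than the modulus of the nearest ``spurious'' singularity; the residue at the pole near $z=1$ produces the main term and the shifted contour produces $R_n(w)=O(P(w)\rho^{-n})$. Then I would bound the main term using the overlap structure. For $w\in\CB_k$ (no self-overlap longer than $k/2$), every nontrivial border $w'$ of $w$ has length $>k/2$, so $P(w')\le \delta^{k} \cdot(\text{poly})$ in the worst case after using $P(w)\le p^{|w|}$ and $\delta=\sqrt p$; combined with the standard $(nP(w))^\eps$ factor from depoissonization/singularity analysis near $z=1$ (the exponent $\eps$ absorbs logarithmic and boundary effects uniformly in $w$), this yields $d_n(w)=O((nP(w))^\eps k\delta^k)+R_n(w)$, the extra $k$ accounting for the at most $k$ candidate border lengths. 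For $w\in\CA^k\setminus\CB_k$, $w$ may have a short border, but then one uses the complementary estimate $\sum_{w\in\CA^k\setminus\CB_k}P(w)=O(\delta_1^k)$ together with a cruder per-word bound, and the $k$ factor disappears because one no longer needs to sum over all border lengths --- a single dominant short period governs the correction --- giving $d_n(w)=O((nP(w))^\eps\delta^k)+R_n(w)$.

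The main obstacle I anticipate is making the $O((nP(w))^\eps \cdots)$ bound genuinely \emph{uniform} in $w$: the generating function's pole near $z=1$ and its residue depend on $w$ through $P(w)$ and through $S_w(1)$, and one must show the implied constants do not blow up as $P(w)\to 0$ or as the overlap polynomial degenerates. I would handle this by factoring out $P(w)$ explicitly from the numerator, showing $1\le S_w(1)\le 1/(1-q)$ uniformly (using the lower Renyi bound $q<1$), so that the pole of the difference stays at distance $\Theta(P(w))$ from $z=1$ in the relevant sense and the residue is $O(P(w)\cdot\text{overlap correction})$; the factor $(nP(w))^\eps$ then comes from trading the $1/(1-z)$ behavior for a slightly indented contour, exactly as in the classical analysis of $1-(1-P(w))^n$. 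A secondary technical point is choosing $\rho>1$ uniformly: since all transition probabilities are bounded below by a positive constant on their support, the spurious singularities are bounded away from the unit circle uniformly over $w$, so a single $\rho$ works. Once uniformity is secured, parts (i) and (ii) follow by the case split on whether $w\in\CB_k$, and this lemma feeds directly into the proof of Theorem~\ref{theoJC} by summing over $w\in\CA^+$ and bounding $\sum_w d_n(w)$ against $n^\eps$.
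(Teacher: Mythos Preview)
Your high-level architecture is right and matches the paper: write $d_n(w)$ via the generating function $\Delta_w(z)$, apply Cauchy, push the contour to $|z|=\rho>1$, collect residues from the dominant pole $A_w$ of $D_w(z)$ and from the pole $1/(1-P(w))$ of the ``independent'' part, and set $R_n(w)=d_w(0)e^{-n}+d_n(w,\rho)=O(P(w)\rho^{-n})$. Where your sketch diverges from the paper, and where it becomes a genuine gap, is the step that produces the uniform factor $(nP(w))^\eps$. You attribute it to ``trading the $1/(1-z)$ behavior for a slightly indented contour,'' but no indentation in the $z$-plane will give you that form: after the residue extraction you are left with a difference of two geometric terms with bases $A_w^{-1}$ and $1-P(w)$ that are both $1-\Theta(P(w))$, and the whole point is their near-cancellation. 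The paper handles this by passing to a \emph{Mellin transform} of the continuous interpolant $\tilde d_w(x)=d_w(x)-d_w(0)e^{-x}$, obtaining $d_w^*(s)\Gamma(s)$ with $d_w^*(s)=\bigl((\log A_w)^{-s}-1\bigr)\cdot(\text{res.})+1-(-\log(1-P(w)))^{-s}$. In that representation the cancellation is visible analytically: using the expansions of $A_w,C_w$ and, crucially, $S_w(1)=1+O(\delta^k)$, $S_w'(1)=O(k\delta^k)$ for $w\in\CB_k$, one gets $d_w^*(s)=O(|s|k\delta^k)P(w)^{1-s}$, and the inverse Mellin on $\Re(s)=-\eps$ delivers exactly $O((nP(w))^\eps k\delta^k)$, uniformly in $w$. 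Your proposal skips this Mellin step, and without it the uniformity you flag as the ``main obstacle'' is not actually resolved.

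Two smaller points. First, your description of $\CB_k$ is inverted: words in $\CB_k$ have \emph{no} self-overlap of length exceeding $k/2$, so nontrivial borders have length at most $k/2$; the $\delta^k$ arises because the corresponding terms in $S_w(1)-1$ involve extending by at least $k/2$ symbols, each contributing at most $p^{k/2}=\delta^k$. Second, for part (ii) you invoke the summed estimate $\sum_{w\in\CA^k\setminus\CB_k}P(w)=O(\delta_1^k)$, but the lemma is a per-word statement; the paper instead uses the uniform lower bound $S_w(z)>\alpha>0$ from \cite{fw05} to get $\tilde d_w(n)=O((nP(w))^\eps)$ for such $w$, and the summed estimate is only used later when proving Theorem~\ref{theoJC}.
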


\begin{proof}
Let $N_0(z)=\sum_{n\ge 0}P(O_n(w)=0)z^n$. We know from~\cite{js-book} that 
$$N_0(z)=\frac{S_w(z)}{D_w(z)}$$ where 
$S_w(z)$ is the autocorrelation polynomial of word $w$
and $D_w(z)$ is defined as follows
\be
\label{eq-D}
D_w(z)=S_w(z)(1-z)+z^kP(w)\left(1+F_{w_1,w_k}(z)(1-z)\right),
\ee
where $k=|w|$ is the length of word $w$ with first symbol $w_1$  
and the last symbol $w_k$. Here  $F_{a,b}(z)$ for $(a,b)\in\CA^2$ is 
a generating function that depends on the Markov sources, 
as describe below. We also write $F_w(z)$ when $w_1=a$ and $w_k=b$. 

Let $\bP$ be the transition matrix of the Markov source.  Let $\bpi$ be its 
stationary vector and for $a\in\CA$ let $\pi_a$ be its coefficient at symbol $a$. 
The vector $\bone$ is the vector with all coefficients equal to 1 and $\bI$ 
is the identity matrix. Assuming that the symbol $a$ (resp. $b$) is the first 
(resp. last) character of $w$, we have \cite{rs98}
\be
F_{w}(z):=F_{a,b(z)}=\frac{1}{\pi_a}\left[(\bP-\bpi\otimes\bone)
\left(\bI-z(\bP+\bpi\otimes\bone)\right)^{-1}\right]_{a,b}
\ee
where $[A]_{a,b}$ denotes the $(a,b)$th coefficient of matrix $A$. 
An alternative way to express $F_w(z)$ is
\be
F_{w}(z)=\frac{1}{\pi_a}\langle\bbe_a (\bP-\bpi\otimes\bone)
\left(\bI-z(\bP+\bpi\otimes\bone)\right)^{-1}\bbe_b\rangle
\ee
where $\bbe_c$ for $c\in\CA$ is the vector with a 1 at the position corresponding
to symbol $c$ and all other coefficients are 0.

By the spectral representation of matrix $\bP$ we have \cite{hj}
$$
\bP=\bpi\otimes\bone+\sum_{i>1}\lambda_i\bu_i\otimes\bzeta_i
$$ 
where $\lambda_i$ for $i\ge 1$ is the $i$th eigenvalue 
(in decreasing order) of matrix $\bP$ (with $\lambda_1=1$),  
and $\bu_i$ (resp. $\bzeta_i$) are their corresponding right (resp. left) 
eigenvectors. Thus
\be
(\bP-\bpi\otimes\bone)\left(\bI-z(\bP+\bpi\otimes\bone)\right)^{-1}=
\sum_{i>1}\frac{\lambda_i}{1-\lambda_i z}\bu_i\otimes\bzeta_i
\ee
and therefore the function $F_{w}(z)$ is defined for all $z$ such that 
$|z|<\frac{1}{|\lambda_2|}$ and is uniformly $O(\frac{1}{1-|\lambda_2 z|})$.

We now follow the approach from~\cite{fw05} that extends to Markovian sources 
the analysis presented in~\cite{js94} for memoryless sources 
(see also \cite{js-book}).
Let
$$\Delta_w(z)=\sum_{n\ge 0}d_n(w)z^n$$
be the generating function of 
$d_n(w)$ defined in (\ref{eq-d}). After some algebra we arrive at
\begin{eqnarray}
\Delta_w(z)&=&\frac{P(w)z}{1-z}\left(\frac{1+(1-z)F_{w}(z)}{D_w(z)} 
-\frac{1}{1-z+P(w)z}\right).
\end{eqnarray}
We have 
$$d_n(w)=\frac{1}{2i\pi}\oint \Delta_w(z)\frac{dz}{z^{n+1}},$$ 
integrated on any loop encircling the origin in the definition 
domain of $d_w(z)$. Extending the result from~\cite{js94}, 
the authors of~\cite{fw05} show that there exists $\rho>1$ 
such that the function $D_w(z)$ defined in (\ref{eq-D}) has a single root in the disk of 
radius $\rho$. Let $A_w$ be this root. We have via the residue formula
\be
d_n(w)=\Res(\Delta_w(z),A_w)A_w^{-n}-(1-P(w))^n+d_n(w,\rho)
\ee
where $\Res(f(z),A)$ denotes the residue of function $f(z)$ on 
complex number $A$. Thus
\be
d_n(w,\rho)=\frac{1}{2i\pi}\oint_{|z|=\rho}\Delta_w(z)\frac{dz}{z^{n+1}}.
\ee

We have
\be
\Res(\Delta_w(z),A_w)=\frac{P(w)\left(1+(1-A_w)F_{w}(A_w)\right)}{(1-A_w)C_w}
\ee
where $C_w=D'_w(A_w)$. But since $D_w(A_w)=0$ we can write
\be
\Res(\Delta_w(z),A_w)=-\frac{A_w^{-k}S_w(A_w)}{C_w}.
\ee

We can take the asymptotic expansion of $A_w$ and $C_w$ as it is 
described in~\cite{js-book}, in Lemma 8.1.8 and Theorem 8.2.2. 
Anyhow the expansions were done in the
memoryless case. But an extension to Markov sources simply consists in replacing $S_
w(1)$ into $S_w(1)+P(w)F_w(1)$, so we find
\be
\left\{
\begin{array}{rcl}
A_w&=&1+\frac{P(w)}{S_w(1)}\\
&&+P^2(w)\left(\frac{k-F_w(1)}{S^2_w(1)}-\frac{S_w'(1)}{S^3_w(1)} \right)+O(P(w)^3),\\
C_w&=&-S_w(1)+P(w)\left(k-F_w(1)-2\frac{S_w'(1)}{S_w(1)}\right)\\
&&+O(P(w)^2).
\end{array}
\right.
\ee
These expansions also appear in~\cite{fw05}.

>From now the proof takes the same path as the proof of Theorem  8.2.2 
in~\cite{js-book}. We define the function
\be
d_w(x)=\frac{A_w^{-k}S_w(A_w)}{C_w}A_w^{-x}-(1-P(w))^{x}.
\ee
More precisely, we define the function $\td_w(x)=d_w(x)-d_w(0)e^{-x}$. 
Its Mellin transform \cite{spa-book} is 
$$
d^*_w(s)\Gamma(s)=\int_0^\infty\td_w(x)x^{s-1}dx
$$ 
defined for all $\Re(s)\in(-1,0)$ with
\begin{eqnarray}
d_w^*(s)&=&\frac{A_w^{-k}S_w(A_w}{C_w}((\log A_w)^{-s}-1)\nonumber\\
&&+1-\left(-\log(1-P(w)))\right)^{-s}
\end{eqnarray}
where $\Gamma(s)$ is the Euler gamma function.
When $w\in\CB_k$ with the expansion of $A_w$ and since $S_w(1)=1+O(\delta^k)$ and
$S'_w(1)=O(k\delta^k)$, we find that similarly as in~\cite{js-book}
\be
d^*_w(s)=O(|s|k\delta^k)P(w)^{1-s}
\ee
and therefore by the reverse Mellin transform, for all $1>\eps>0$:
\begin{eqnarray}
\td_w(n)&=&\frac{1}{2i\pi}\int_{-\eps-i\infty}^{-\eps+i\infty}d^*_w(s)\Gamma(s)
s^{-n}ds\nonumber\\
&=&O(n^{\eps}P(w)^{\eps}k\delta^k).
\end{eqnarray}
When $w\in\CA^k-\CB_k$ it is not true that $S_w(1)=1+O(\delta^k)$, thus it is shown 
in~\cite{fw05} that there exists $\alpha>0$ such that for all 
$w\in\CA^*$: $S_w(z)> \alpha$ for all $z$ such that $|z|\le\rho$. 
Therefore we get $\td_w(n)=O(n^{\eps}P(w)^{\eps})$.

We set
\be
R_n(w)=d_w(0)e^{-n}+d_n(w,\rho).
\ee
We first investigate the quantity $d_w(0)$. We prove that $d_w(0)=O(P(w))$. 
Noticing that 
$$ S_w(A_w)=S_w(1)+\frac{P(w)}{S_w(1)}S_w'(1) +O(P(w)^2)$$
we have the expansion
\be
-\frac{A_w^{-k}S_w(A_w)}{C_w}=1-\frac{P(w)}{S_w(1)}\left(F_w(1)+\frac{S_w'(1)}
{S_w(1)}\right)+O(P(w)^2).
\ee
Thus
\be
d_w(0)=-\frac{P(w)}{S_w(1)}\left(F_w(1)+\frac{S_w'(1)}{S_w(1)}\right)+O(P(w))^2).
\ee
Thus $d_w(0)=O(P(w))$.

Now we need to consider $d_n(w,\rho)$. 
Since $\Delta_n(z)$ is clearly $O(P(w))$ and the integral 
$\oint\Delta(z)\frac{dz}{z^{n+1}}$ 
is over the circle of radius, the result is $O(P(w)\rho^{-n})$.
\end{proof}

Now we are ready to prove Theorem~\ref{theoJC}.
\begin{proof}
Again staring with
\be
J_{n,m}=\sum_{w\in\CA^*}P(O^1_n(w)>0)P(O^2_m(w)>0)
\ee
we note that
\begin{eqnarray*}
P(O^1_n(w)>0))&=&1-(1-P_1(w))^n+d^1_n(w),\\
P(O^2_m(w)>0)&=&1-(1-P_2(w))^m+d^2_m(w).
\end{eqnarray*}
Thus
\begin{eqnarray}
J_{n,m}&=&C_{n,m}+\sum_{w\in\CA}d_n^1(w) P(O^2_m>0)\nonumber\\
&+&\sum_{w\in\CA}(1-(1-P_1(w))^n)d_m^2(w).
\end{eqnarray}
We will develop the proof for the first sum, since the proof for the second proof being
somewhat similar. When $w\in\CB_k$, we have for all $\eps>0$
\be
d^1_n(w)=O(n^{\eps} P_1(w)^{\eps}k\delta_1^k)+R^1_n(w)
\ee
and the $R_n^1(w)$ terms are all $O(P_1(w)\rho^{-n})$. We look at the sum 
$$\sum_k\sum_{w\in\CB_k}n^\eps P(w)^\eps k\delta_1^k.$$ 
It is smaller than 
$$\sum_k\sum_{w\in\CA^k}n^\eps P_1(w)k(q^{\eps-1}\delta_1)^k$$ 
which is equal to $n^\eps\sum_k k(q^{\eps-1}\delta_1)^k$. By choosing a value 
$\eps_1$ of $\eps$ enough close to $1$ so that $q^{\eps_1-1}\delta_1<1$ 
we have the $n^{\eps_1}$ order. Notice that with $\delta_1= \sqrt{p}$ we must
conclude that $1/2< \eps_1<1$.

When $w\in\CA^k-\CB_k$ the $\delta_1^k$ factor disappears in the
right-hand side expression for $d^1_n(w)$. But in this case 
$$
\sum_{w\in\CA^k-\CB_k}n^\eps P(w)(q^{\eps-1})^k =
O(n^\eps\delta_1^k(q^{\eps-1})^k,
$$
and we conclude similarly.

It remains the sum $\sum_{w\in\CA^*}R_n^1(w)P_2(O_m(w)>0)$. For this we remark that 
$P_2(O_m(w)>0)=O(mP_2(w))$. Therefore the sum is of order 
$\rho^{-n}m\sum_{w\in\CA^*}P_1(w)P_2(w)$. It turns out that 
$$\sum_{w\in\CA^k}P_1(w)P_2(w)=O(\lambda_{12} ^k)$$ 
where $\lambda_{12}$ is the main eigenvalue of the Shur product matrix 
$\bP_1\star\bP_2$ (also known denoted $\bP(-1,-1)$. 
Since $\lambda_{12}<1$ the sum converges and is $O(\rho^{-n}m)$.
\end{proof}

\section{Some Preliminary Results}
\label{sec-preliminary}

In this section we first derive the recurrence on $C_{n,m}$ which will lead to
the functional equation on the double Poisson transform $C(z_1,z_2)$ of $C_{n,m}$, 
that in turn allows us to  find the
double Mellin transform $C^*(s_1,s_2)$ of $C(z_1,z_2)$. 
Finally applying a double depoissonization
we first recover the original $C_{n,n}$ and ultimately the joint 
string complexity $J_{n,n}$ through Theorem~\ref{theoJC}.

\subsection{Functional Equations}

Let $a\in\CA$ and define 
$$
C_{a,m,n}=\sum_{w\in a\CA*}P(\Omega^1_n(w)\ge 1 )P(\Omega^2_m(w)\ge 1)
$$ 
where $w\in a\CA^*$ means that $w$ starts with an $a\in \A$.
We recall that $\Omega^i_n(w)$ represents the number of strings of length $n$
that start with prefix $w$.

Notice that $C_{a,m,n}=0$ when $n=0$ or $m=0$.
Using Markov nature of the string generation, 
the quantity $C_{a,n,m}$ for $n,m\ge 1$ satisfies the following 
recurrence for all $a,b\in\CA$
\begin{eqnarray*}
C_{b,n,m}&=&1+\sum_{a\in\CA}\sum_{n_a,m_a}{n\choose n_a}
{m\choose m_a}\\
&&\times(P_1(a|b))^{n_a}(1-P_1(a|b))^{n-n_a}\\
&&\times (P_2(a|b))^{m_a}
(1-P_2(a|b))^{m-m_a}C_{a,n_a,m_a},
\end{eqnarray*}
where $n_a$ (resp. $m_a$) denotes the number of strings among 
$n$ (resp. $m$) independent strings 
from source $1$ (resp. $2$) that have symbol $a$ followed by symbol $b$. 
Indeed, partitioning $b\A^*$ as $\{b\} +\sum_{a\in \A} a\A^*$ we obtain
the recurrence noting that strings are independent and 
${n\choose n_a} (P_1(a|b))^{n_a}(1-P_1(a|b))^{n-n_a}$
is the probability of $n_a$ out of $n$ starting with $ba$. 
starts 

In a similar fashion, the {\it unconditional} average $C_{n,m}$ 
satisfies for $n,m\geq 2$
$$
C_{n,m}=1+\sum_{a\in\CA}\sum_{n_a,m_a}{n\choose n_a}
{m\choose m_a} \pi_1^{n_a}(a)(1-\pi_1(a))^{n-n_a}
$$
$$
\times
\pi_2^{m_a}(a)(1-\pi_2(a))^{m-m_a} C_{a,n,m}.
$$

To solve it we introduce the double Poisson transform of $C_{a,n,m}$ as
\be
C_a(z_1,z_2)=\sum_{n,m\geq 0}C_{a,n,m}\frac{z_1^nz_2^m}{n!m!}e^{-z_1-z_2}
\ee
that translates the above recurrence into the following functional equation:
\beq
C_b(z_1,z_2)&=&(1-e^{-z_1})(1-e^{-z_2})\nonumber\\
&&+\sum_{a\in\CA}C_a
\left(P_1(a|b) z_1,P_2(a|b)z_2\right).\label{jac1-bis}
\eeq
To simplify it, we define the double Poisson transform
\be
C(z_1,z_2)=\sum_{n,m\geq 0}C_{n,m}\frac{z_1^nz_2^m}{n!m!}e^{-z_1-z_2}
\ee
finding that
\beq
C(z_1,z_2)&=&(1-e^{-z_1})(1-e^{-z_2})\nonumber\\
&+&\sum_{a\in\CA}C_a(\pi_1(a)
z_1,\pi_2(a)z_2).
\label{jac2-bis}
\eeq

Our goal now is to find asymptotic expansion of 
$C(z_1, z_2)$ as $z_1,z_2 \to \infty$ in a cone around the real axis. 
This will be accomplished in the next subsection
using double Mellin transform. Granted it, we shall appeal 
to a double depoissonization result to recover asymptotically $C_{n,m}$ and
ultimately $J_{n,m}$. 

\subsection{Double DePoissonization}
\label{sec-depo}

Once we know $C(z_1, z_2)$ for $z_1=n, z_2=m \to \infty$ we then need to recover
$C_{n,m}$. Double depoissonization lemma discussed and proved in \cite{js-book}
(see Lemma 10.3.4) allows us to do exactly that but in order to apply it we 
need to  postulate some conditions on the underlying Poisson transforms.
We briefly review double depoissonization next.

For a double sequence $a_{n,m}$  define
\begin{eqnarray*}
f(z_1,z_2)&=&\sum_{m,n=0}^\infty a_{n,m}\frac{z_1^n}{n!} e^{-z_1} 
\frac{z_2^m}{m!}e^{-z_2}, \\
f_n(z_2)&=&\sum_{m=0}^\infty a_{n,m}\frac{z_2^m}{m!}e^{-z_2}.
\end{eqnarray*}
We notice that $f(z_1,z_2)$ is the Poisson transform of the sequence
$f_n(z_2)$ with respect to the variable $z_1$.
Now we postulate certain conditions on $f(z_1,z_2)$ and $f_n(z_2)$ that will
allow us to extract asymptotics of $a_{n,m}$ from $f(z_1,z_2)$.
\med
{\sf First depoissonization}.
For $z_2\in\S_\th:=\{z_2:~ \arg(z_2)<\th\}$ 
we postulate that there exist constants $\beta, \alpha, B$ and $D$
such that
\begin{eqnarray*}
z_1\in\S_\th: |f(z_1,z_2)|&<&B(|z_1|^\beta+|z_2|^\beta)\\
z_1\notin\S_\th: |f(z_1,z_2)e^{z_1}|&<&D|z_2|^\beta e^{\alpha |z_1|}.
\end{eqnarray*}
Therefore, from the one-dimensional analytic depoissonization of
\cite{js98,spa-book} for $z_2\in\S_\th$, we have for all integers $k>0$
$$
f_n(z_2)=f(n,z_2)+O\left(n^{\beta-1}+\frac{|z_2|^\beta}{n}\right)+
O(|z_2|^\beta n^{\beta-k}).
$$
Similarly, when $z_2\notin\S_\th$ we postulate
\begin{eqnarray*}
z_1\in\S_\th: |f(z_1,z_2)e^{z_2}|&<&D|z_1|^\beta e^{\alpha |z_2|}\\
z_1\notin\S_\th: |f(z_1,z_2)e^{z_1+z_2}|&<&De^{\alpha |z_1|+\alpha|z_2|}.
\end{eqnarray*}
Thus for all integer $k$ and $\forall z_2\notin\S_th$
$$
f_n(z_2)e^{z_2}=f(n,z_2)e^{z_2}+O(n^{\beta-1}e^{\alpha|z_2|})+O(n^{\beta-k}
e^{\alpha|z_2|}).
$$
\parsec
{\sf Second depoissonization}.
The above two conditions on $f_n(z_2)$, respectively for $z_2\in\S_\theta$
and $z_2\notin\S_\th$,
allow us to depoissonize $f_n(z_2)$. For all $k>\beta$:
\begin{itemize}
\item for $z_2\in\S_\th$: $f_n(z_2)=O(n^{\beta}+|z_2|^\beta)$;
\item for $z_2\notin\S_\th$: $f_n(z_2)e^{z_2}=O(n^\beta e^{\alpha|z_2|})$.
\end{itemize}
These estimates are uniform. Therefore,
$$
a_{n,m}=f_n(m)+O\left(\frac{n^{\beta}}{m}+\frac{m^\beta}{n}\right)+
O\left(n^\beta m^{\beta-k}\right).
$$
Since
$$
f_n(m)=f(n,m)+O\left(n^{\beta-1}+\frac{m^\beta}{n}\right)
$$
and setting $k>\beta+1$, we find the desired estimate.

Now we are ready for formulate our depoissonization lemma.
In \cite{js-book} it is shown that $C(z_1,z_2)$ satisfies depoissonization conditions
for memoryless sources. In Appendix we prove the following lemma.

\begin{lemma}[DePoissonization]\it
We have
$$
C_{n,m}=C(n,m)+O\left(\frac{n}{m} + \frac{m}{n}\right)
$$ 
for large $n$ and $m$.
\label{lemDePo}
\end{lemma}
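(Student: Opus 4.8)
The plan is to verify the hypotheses of the two-dimensional analytic depoissonization lemma (Lemma~10.3.4 of \cite{js-book}), as recalled in the ``First'' and ``Second depoissonization'' discussion above, for the specific Poisson transform $C(z_1,z_2)$ defined by the functional equation (\ref{jac2-bis}). Once these growth conditions are checked with exponent $\beta = 1$, the lemma yields $C_{n,m} = C(n,m) + O(n/m + m/n)$ directly. So the whole task reduces to establishing the two regimes of bounds: a polynomial bound $|C(z_1,z_2)| = O(|z_1| + |z_2|)$ inside a cone $\S_\theta \times \S_\theta$ around the positive real axes, and an exponential ``cancellation'' bound $|C(z_1,z_2) e^{z_1}| = O(|z_2|\, e^{\alpha|z_1|})$ (and the symmetric statement, and the doubly-outside statement) when one or both arguments leave the cone.

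First I would prove the bounds for the conditional transforms $C_a(z_1,z_2)$ and then transfer them to $C(z_1,z_2)$ via (\ref{jac1-bis})--(\ref{jac2-bis}). The natural tool is induction on the ``depth'' combined with the contraction built into the functional equation: iterating (\ref{jac1-bis}) expresses $C_b(z_1,z_2)$ as a sum over words $w = a_1\cdots a_k$ of terms $(1-e^{-P_1(w)z_1})(1-e^{-P_2(w)z_2})$ evaluated at scaled arguments, where the scaling factors $P_i(w)$ decay geometrically (bounded by $p^k$ with $p<1$ as defined before Lemma~1 via the sup/inf of $\tfrac{\log P(w)}{k}$). Inside the cone, each factor $|1-e^{-\zeta}|$ is $O(\min(1,|\zeta|)) = O(|\zeta|^{\eps})$ for any $\eps\in(0,1]$; summing $\sum_k \sum_{|w|=k} P_1(w)^{\eps} P_2(w)^{\eps} |z_1|^{\eps}|z_2|^{\eps}$ converges geometrically in $k$ once $\eps$ is close enough to $1$ (the same kind of eigenvalue-of-Schur-product estimate used at the end of the proof of Theorem~\ref{theoJC}, where $\sum_{|w|=k}P_1(w)P_2(w) = O(\lambda_{12}^k)$ with $\lambda_{12}<1$), and one reads off the $O(|z_1|+|z_2|)$ bound, in fact an $O((|z_1||z_2|)^{\eps})$ bound, which is stronger. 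The exponential bounds outside the cone come from the same expansion: when $z_1 \notin \S_\theta$ the factor $e^{z_1}(1-e^{-P_1(w)z_1})$ is controlled by $e^{\alpha|z_1|}$ uniformly in $w$ because $0<P_1(w)\le 1$, and the remaining sum over $w$ is summable exactly as before; the doubly-outside case is analogous.

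The step I expect to be the main obstacle is making the growth estimates genuinely \emph{uniform} across the infinitely many scalings $P_i(w)z_i$ appearing in the iterated functional equation, and in particular handling the boundary between ``$P_i(w)z_i$ small'' (where $1-e^{-P_i(w)z_i}$ is $O(P_i(w)|z_i|)$) and ``$P_i(w)z_i$ large'' (where it is $O(1)$), which shifts from word to word. The clean way around this is precisely the interpolation $|1-e^{-\zeta}| \le C_\eps |\zeta|^\eps$ valid for all $\zeta$ in a fixed cone and any fixed $\eps\in(0,1]$: it trades a little of the polynomial decay in $|z_i|$ for uniform summability in $k$, and it is compatible with choosing $\eps$ close enough to $1$ that $p^{2(\eps-1)}\lambda_{12} < 1$ (equivalently the relevant $q^{\eps-1}$-type factor keeps the geometric series convergent). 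A secondary technical point is that (\ref{jac2-bis}) involves the stationary weights $\pi_i(a)$ rather than transition probabilities for the first split, but since these are fixed positive constants bounded by $1$ they only affect the implied constants, not the structure of the argument. Once uniformity is secured, assembling the four required inequalities and invoking Lemma~10.3.4 of \cite{js-book} with $\beta=1$ finishes the proof.
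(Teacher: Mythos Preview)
Your overall plan is right: the lemma follows once the four growth hypotheses of the two-dimensional depoissonization lemma are verified for $C_a(z_1,z_2)$. But the specific technique you propose has a real gap.

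The sentence ``one reads off the $O(|z_1|+|z_2|)$ bound, in fact an $O((|z_1||z_2|)^{\eps})$ bound, which is stronger'' is false. For $|z_1|=|z_2|=R$ one has $(|z_1||z_2|)^\eps=R^{2\eps}$, which for any $\eps>\tfrac12$ grows \emph{faster} than $|z_1|+|z_2|=2R$. Your iterated-sum argument with the interpolation $|1-e^{-\zeta}|\le C_\eps|\zeta|^\eps$ therefore only yields $\beta=2\eps>1$, and feeding that into the depoissonization lemma gives an error $O(n^{2\eps}/m+m^{2\eps}/n)$, i.e.\ $O(n^{2\eps-1})$ when $n=m$, not the $O(1)$ the lemma claims. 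The underlying reason the $\eps$-trick loses a full power is that once you raise probabilities to a power $\eps<1$ you destroy the exact stochasticity $\sum_a P_i(a|b)=1$, which is precisely what pins $\beta$ at $1$.

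The paper avoids this by \emph{not} iterating the functional equation. It uses the increasing-domains method: set $\rho=\max_{a,b,i}P_i(a|b)$, let $\S_k=\{z\in\S_\theta:|z|<\rho^{-k}\}$, and define $B_k=\max_{a,\,(z_1,z_2)\in\S_k^2}|C_a(z_1,z_2)|/(|z_1|+|z_2|)$. A single application of~(\ref{jac1-bis}) gives, for $(z_1,z_2)\in\S_{k+1}^2\setminus\S_k^2$,
\[
|C_b(z_1,z_2)|\le \beta + B_k\sum_{a}\bigl(P_1(a|b)|z_1|+P_2(a|b)|z_2|\bigr)=\beta+B_k(|z_1|+|z_2|),
\]
because the inhomogeneous term is uniformly bounded in the cone and the column sums of $\bP_i$ are exactly $1$. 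Dividing by $|z_1|+|z_2|\ge\rho^{-k}$ yields $B_{k+1}\le B_k+\beta\rho^k$, whence $\sup_kB_k<\infty$ and condition~(i) holds with $\beta=1$. The outside-cone conditions (ii) and (iii) are handled the same way after multiplying through by $e^{z_1}$, $e^{z_2}$, or $e^{z_1+z_2}$ and using $|e^{z}|\le e^{\alpha|z|}$ for $z\notin\S_\theta$ with $\cos\theta<\alpha<1$. Your summability via the Schur-product eigenvalue never enters; the whole point is to keep the linear structure intact so that stochasticity does the work.
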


To find $C(n,m)$ from $C(z_1,z_2)$ we follow the Mellin transform approach,
however for general sources we need to consider a double Mellin transform.
We start in the next subsection with a simple case when $\bP_1=\bP_2$.


\subsection{Mellin Transform for $\bP_1=\bP_2$: Proof of Theorem~\ref{th-same}}
\label{sec-same}

We first present a general result for identical Markov sources, that is,
$\bP_1=\bP_2=\bP$ proving  Theorem~\ref{th-same}. In this case 
(\ref{jac1-bis}) can be rewritten with $c_a(z)=C_a(z,z)$:
\be
c_b(z)=(1-e^{-z})^2+\sum_{a\in\CA}c_a\left(P(a|b) z\right).
\ee
This equation is directly solvable by the 
Mellin transform  defined as
$$
c_a^*(s)=\int_0^\infty c_a(x)x^{s-1}dx
$$ 
that exists in the fundamental strip $-2<\Re(s)<-1$. 
Properties of Mellin transform can be found in
\cite{fs-book,spa-book}. It follows that
for all $b\in\CA$ \cite{spa-book} 
\be
\label{ws1}
c_b^*(s)=(2^{-s}-2)\Gamma(s)+\sum_{a\in\CA}(P(a|b))^{-s}c_a^*(s).
\ee
It is better to write it in the matrix form. Let $\bc(s)=[c^*_a(s)]_{a\in \A}$
be the vector of Mellin transforms $c^*_a(s)$ and
and $\bP(s)=[P^{-s}(a|b)]_{a,b\in A}$. Then (\ref{ws1}) becomes
$$
\bc(s)=(2^{-s}-2)\bone(\bI-\bP(s))^{-1}
$$
where, again, $\bone$ is the vector of dimension $|\CA|$ made of all $1$'s,
and $\bI$ is the identity matrix.

We now can derive the Mellin transform
$c^*(s)$ of $c(z):=C(z,z)$ representing the unconditional joint string 
complexity. From above and (\ref{jac2-bis}) we arrive at
$$
c^*(s)=(2^{-s}-2)\Gamma(s)+\sum_{a\in\CA}(\pi(a))^{-s}c_a^*(s)~.
$$
which in matrix form can be rewritten as
\be
\label{ws2}
c^*(s)=(2^{-s}-2)\Gamma(s)\left(1+\langle\bone(\bI-\bP(s))^{-1}|\vpi(s)
\rangle\right)
\ee
where $\vpi(s)$ is the the vector made of coefficients 
$\pi(a)^{-s}$ and we recall $\langle \bx |\by \rangle$ is the
inner product of vectors $\bx$ and $\by$. 

To find the behavior of $c(z)$ for large $z$ near the real axis we apply
the inverse Mellin approach as discussed in \cite{fs-book,spa-book}.
We observe that 
$$
c(z) = \frac{1}{2\pi i} \int_{-3/2 -\infty}^{-3/2+\infty} c^*(s) z^{-s} ds. 
$$
The asymptotics of $c(z)$ for $|\arg(z)|<\theta$ is given 
by the residues of the 
function $c^*(s) z^{-s}$ occurring at the poles $s=-1$ and $s=0$.  
They are respectively equal to 
$$
\frac{2\log 2}{h}z
$$
and 
$$
-1-\langle\bone(\bI-\bP(0,0))^{-1}\vpi(0)\rangle. 
$$
The first residues comes from the 
singularity of $(\bI-\bP(s))^{-1}$ at $s=-1$. This leads to 
Theorem~\ref{th-same}(i). When $\bP$ is rationally related 
then there are additional poles on a countable set of complex numbers 
$s_k$ regularly spaced on the line $\Re(s_k)=-1$, 
and such that $\bP(s_k)$ has eigenvalue 1. These poles contributes 
to the periodic terms of Theorem~\ref{th-same}(ii).  The proof
of Theorem~\ref{th-same} is now complete.


\subsection{Double Mellin transform: Case $\bP_1 \neq \bP_2$}

 From now on we only consider the case $\bP_1\neq\bP_2$, and therefore we
need to study properties of $C_a(z,z)$ through double Mellin transform defined as
\be
C_a^*(s_1,s_2)=\int_0^\infty\int_0^\infty 
C_a(z_1,z_2)z_1^{s_1-1}z_2^{s_2-1}dz_1dz_2
\label{eqMellin}
\ee
or similarly the Mellin transform $\bC^*(s_1,s_2)$ applied  to
$\bC(z_1,z_2)$, provided we can find a strip $a< \Re(s_1), \Re(s_2) < b$ 
where the above transforms exist.
Since for any $y\in\mathbb{R}^+$ and a function $f$ we have the identity 
$$\int_0^\infty f(yx)x^{s-1}dx=a^{-s}\int_0^\infty f(x)x^{s-1}dx$$
we conclude that
\be
\bC^*(s_1,s_2)=\Gamma(s_1)\Gamma(s_2)\bone+\bP(s_1,s_2)\bC^*(s_1,s_2)
\ee
or formally
\be
\label{purdue1}
\bC^*(s_1,s_2)=\Gamma(s_1)\Gamma(s_2)\left(\bI-\bP(s_1,s_2)\right)^{-1}\bone.
\ee

However, the above formal derivation needs to be amended with a careful analysis of
the convergence issues, which we do next. Notice that for any $a\in\CA$: 
$C_a(z_1,z_2)=O(|z_1|+|z_2|)$ when $z_1,z_2\to\infty$.  
But as easy to check $C_a(z_1,z_2)$ is also $O(|z_1|+|z_2|)$ when $z_1,z_2\to 0$, 
therefore the Mellin transform is not appropriately defined in~(\ref{eqMellin}). 
To correct it, we now introduce correction terms in the expression of 
$C_a(z_1,z_2)$ so that the corresponding Mellin transform exists 
for $-2<\Re(s_1),\Re(s_2)<-1$.  

To continue, we now define a slightly modified Mellin transform, namely 
$$
\tbC(z_1,z_2)=\bC(z_1,z_2)-\bD(z_1,z_2)
$$ 
where 
$$
\bD(z_1,z_2)=z_1 e^{-z_1}\bC_1(z_2)+z_2 e^{-z_2}\bC_2(z_1)-\bC_{1,1}z_1 z_2 e^{-z_1-z_2}
$$
with 
\begin{eqnarray*}
\bC_1(z)&=&\frac{\partial}{\partial z_1}\bC(z_1,z_2)|_{(z_1,z_2)=(0,z)}\\
\bC_2(z)&=&\frac{\partial}{\partial z_2}\bC(z_1,z_2)|_{(z_1,z_2)=(z,0)}.
\end{eqnarray*}
Notice that $\tbC(z_1,z_2)$ is now $O(|z_1|^2+|z_2|^2)$ when 
$z_1,z_2\to 0$. We can show that 
$\bC(z_1,z_2)=O(|z_1|+|z_2|)$ for $(z_1,z_2)$ in four dimension 
cone containing $\mathbb{R^+}\times\mathbb{R}^+$, 
therefore by Ascoli theorem $\frac{\partial}{\partial z_1}\bC(z_1,z_2)
=O(1)$ in the same cone, $D_1(z)$ is $O(|z|)$ for $z\in\mathbb{R}^+$, 
and similarly  $D_2(z)$ is $O(1)$. All of this is to state that 
$\tbC(z_1,z_2)$ is $O(|z_1|+|z_2|)$ when $z_1,z_2\to\infty$, thus
the Mellin transform of $\tbC(z_1,z_2)$ is well defined for 
$\Re(s_1),\Re(s_2)\in(-2,-1)$. Let $\tbC^*(s_1,s_2)$ be
the corresponding Mellin transform. 
 
 For $a\in\CA$ let ${C_1}_a(z)$ be the coefficient of the vector 
$\bC_1(z)$ corresponding to the symbol $a$. 
For $b\in\CA$ we have the functional equation
 \be
 {C_1}_b(z)=1-e^{-z}+\sum_{a\in\CA}P_1(a|b){C_1}_a(P_2(a|b)z)
 \ee 
 and the Mellin transform of $\bC_1(z)$, say $\bC_1^*(s)$ formally satisfies 
 \be
 \bC_1(s)=-\Gamma(s)\bone+\bP(-1,s)\bC_1(s)
 \ee
 or 
 \be
 \bC_1(s)=-\Gamma(s)(\bI-\bP(-1,s))^{-1}\bone .
 \ee
Similarly the Mellin transform $\bC_2(s)$ of $\bC_2(z)$ 
satisfies $\bC_2(s)=\Gamma(s)(\bI-\bP(s,-1))^{-1}\bone$. To finish, 
we notice that $\bC_{1,1}=(\bI-\bP(-1,-1))^{-1}$.

 Denoting $\hbC(s_1,s_2)=(\bI-\bP(s_1,s_2))^{-1}\bone$, we find  
 \begin{eqnarray*}
 \tbC^*(s_1,s_2)&=&\Gamma(s_1)\Gamma(s_2)\left(\hbC(s_1,s_2)+s_2\hbC(s_1,-1)\right.\\
 &&\left.+s_1\hbC(-1,s_2)+s_1s_2\hbC(-1,-1)\right)
 \end{eqnarray*}
finally leading to 
\be
\label{eth1}
C^*(s_1,s_2)=\Gamma(s_1)\Gamma(s_2)
\left(1+\langle\vpi(s_1,s_2)|\hbC(s_1,s_2)\rangle\right)
\ee
where $\vpi(s_1,s_2)$ denotes
the vector composed of $\pi_1(a)^{-s_1}\pi_2(a)^{-s_2}$ for 
$a\in\CA$ and $\langle|\rangle$ is the vector internal product. 

Our goal is to find $C_{n,n}$ (i.e., $n=m$). But by
depoissonization it is asymptotically equal to $C(n,n)$, therefore we must find $\bC(z,z)$
which by the inverse Mellin transform becomes
\begin{eqnarray*}
\bC(z,z)&=&\bD(z,z)+
\frac{1}{(2i\pi)^2}\int_{\rho_1}
\int_{\rho_2}\tbC^*(s_1,s_2)z^{-s_1-s_2}ds_1ds_2.
\end{eqnarray*}
After some algebra we finally arrive at
\begin{eqnarray}
C(z,z)&=&(1-e^{-z})^2+\frac{1}{(2i\pi)^2}\int_{\rho_1}
\int_{\rho_2}\Gamma(s_1)\Gamma(s_2)\nonumber\\
&&\times\langle\vpi(s_1,s_2)|\hbC(s_1,s_2)+s_2\hbC(s_1,-1)\nonumber\\
&&s_1\hbC(-1,s_2)+s_1s_2\hbC(-1,-1)\rangle\nonumber\\
&&z^{-s_1-s_2}ds_1ds_2
\label{eq-dml}
\end{eqnarray}
where the integration is over the lines $\Re(s_1)=\rho_1$ and
$\Re(s_1)=\rho_2$ with
$(\rho_1,\rho_2)$ belonging to the fundamental strip of $C^*(s_1,s_2)$: $(-2,-1)$. 
We shall analyze asymptotically (\ref{eq-dml}) in the next sections.


\subsection{Properties of the Kernel}
\label{sectCK}


We recall from Section~\ref{sec-main} that we define the kernel 
$\overline{\CK}$ as the set of complex tuples $(s_1,s_2)$ such that $\bP(s_1,s_2)$ has
largest eigenvalue $\lambda(s_1,s_2)=1$. Furthermore, we
also define $\partial\CK$ as the subset of $\overline{\CK}$ consisting 
of the pairs $(s_1,s_2)$ such $\Re(s_1,s_2)=(c_1,c_2)$
where
$$
(c_1,c_2)=\arg\min_{(s_1,s_2)\in\CK}\{-s_1-s_2\}.
$$
We also denote $\partial\CK^*=\partial\CK-\{(c_1,c_2)\}$. 

Let us start with the structure of the set $\partial\CK$.

\begin{definition}\it
Let $\bP$ be a matrix on $\CA\times\CA$ of complex coefficients $p_{ab}$ 
for all $(a,b)\in\CA^2$. Let $\bQ$ be a matrix $q_{ab}$. In the following 
we say $\bP$ and $\bQ$ are {\it conjugate} if there exists a non-zero complex vector 
$(x_a)_{a\in\CA}$ such that $q_{ab}=\frac{x_a}{x_b}p_{ab}$. 
We say that such matrices are {\it imaginary conjugate} 
if $|x_a|=1$ for all $a\in \A$.
\label{def-conjugate}
\end{definition}

Observe that: (i) two conjugate matrices have the same eigenvalue set;
(ii) if $\bun=(u_a)_{a\in\CA}$ is right eigenvector of $\bP$,
then $(x_au_a)_{a\in\CA}$ is right eigenvector of $\bQ$. 
Similarly, if $(\zeta_a)_{a\in\CA}$ is left eigenvector of 
$\bP$, then $(\frac{1}{x_a}\zeta_a)_{a\in\CA}$ is the left eigenvector of $\bQ$. 

The following lemma is essential and proved in \cite{js12} but we give
an independent proof in the Appendix (see also \cite{ms13}).

\begin{lemma}
Let $\bM=[m_{ab}]_{(a,b)\in\CA^2}$ be a matrix such that 
$m_{ab}\ge 0$. We assume that $1$ is the 
largest eigenvalue of $\bM$. Let $\bQ$ be a matrix with coefficients 
$q_{ab}=e^{i\theta_{ab}}m_{ab}$ where $\theta_{ab}$ is real. The
matrix $\bQ$ has eigenvalue 1 if and only if $\bQ$ is imaginary 
conjugate to matrix $\bM$.
\label{lem-conj}
\end{lemma}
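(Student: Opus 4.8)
The plan is to prove the easy direction first and then attack the substantive converse via a Perron--Frobenius style argument. If $\bQ$ is imaginary conjugate to $\bM$, say $q_{ab}=\frac{x_a}{x_b}m_{ab}$ with $|x_a|=1$, then by observation (i) following Definition~\ref{def-conjugate} the matrices $\bM$ and $\bQ$ have the same spectrum, so $1$ is an eigenvalue of $\bQ$. This gives the ``if'' part at once. The content is in the ``only if'' direction: assuming $\bQ u = u$ for some nonzero $u=(u_a)_{a\in\CA}$, we must reconstruct phases $x_a$ with $|x_a|=1$ realizing the conjugacy, and in particular show no cancellation is possible unless the $\theta_{ab}$ are ``coherent'' along the support graph of $\bM$.

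First I would reduce to the irreducible case: restrict attention to a recurrent class of the nonnegative matrix $\bM$ on which the Perron eigenvalue $1$ is attained (the hypothesis that $1$ is the largest eigenvalue of $\bM$ forces such a class to exist), and note that an eigenvector of $\bQ$ for eigenvalue $1$ cannot be supported outside the union of such classes, again because off those classes the spectral radius is strictly smaller. On an irreducible block, let $\bzeta=(\zeta_a)$ be the strictly positive left Perron eigenvector of $\bM$, so $\sum_a \zeta_a m_{ab} = \zeta_b$. Writing $\bQ u = u$ coordinatewise gives $\sum_a \zeta_a e^{i\theta_{ab}} m_{ab} u_a = u_b$ for every $b$ wait --- more cleanly, apply the triangle inequality to $u_b = \sum_a q_{ba} u_a = \sum_a e^{i\theta_{ba}} m_{ba} u_a$, after normalizing so that $\max_a |u_a| = 1$. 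Pick $a_0$ attaining the maximum; then $1 = |u_{a_0}| \le \sum_a m_{a_0 a}|u_a| \le \sum_a m_{a_0 a} = (\bM\bone)_{a_0}$. Here I would use that $\bM$ being a substochastic-type matrix with Perron value $1$ forces, on the irreducible block, $\bM\bone=\bone$ (or pass to the conjugated-by-$\bzeta$ stochastic matrix), so equality holds throughout the chain of inequalities. Equality in the triangle inequality forces all the terms $e^{i\theta_{a_0 a}} u_a$ with $m_{a_0 a}>0$ to have the same argument as $u_{a_0}$, hence $|u_a|=1$ for all $a$ with $m_{a_0 a}>0$; irreducibility then propagates $|u_a|=1$ to the whole block.

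With $|u_a|\equiv 1$ in hand, set $x_a = u_a$ (on the block; extend by continuity/arbitrarily off it, which is harmless since those coordinates carry no weight). The equality-case analysis gives, for every edge $(a,b)$ with $m_{ab}>0$, the phase relation $e^{i\theta_{ba}} u_a = u_b$, i.e. $q_{ba} = e^{i\theta_{ba}} m_{ba} = \frac{u_b}{u_a} m_{ba} = \frac{x_b}{x_a} m_{ba}$; and for $m_{ab}=0$ both $q_{ab}$ and $\frac{x_a}{x_b}m_{ab}$ vanish, so the identity $q_{ab}=\frac{x_a}{x_b}m_{ab}$ holds for all $(a,b)$. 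Since $|x_a|=1$, this exhibits $\bQ$ as imaginary conjugate to $\bM$, completing the proof. The main obstacle is the normalization step: one must be careful that $\bM$ (after possibly conjugating by its own Perron eigenvector $\bzeta$, which does not affect the statement about imaginary conjugacy since that conjugation has positive entries) can be taken row-stochastic on each irreducible recurrent block, so that the triangle-inequality argument bites; handling the reducible case and the coordinates lying in transient states requires the observation that the eigenvector for eigenvalue $1$ is forced to vanish there, which uses that the spectral radius of $\bM$ restricted to the transient part is $<1$.
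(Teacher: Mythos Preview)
Your argument is essentially the same as the paper's: both prove the nontrivial direction by applying the equality case of the triangle inequality to the eigenvector equation $\bQ v=v$ and using Perron--Frobenius to force all phases to align (this is Wielandt's lemma in disguise); the only cosmetic difference is that the paper writes $v_a=x_a u_a$ with $u$ the Perron eigenvector of $\bM$ and shows $|x_a|$ is constant, whereas you first conjugate $\bM$ to be row-stochastic and then show the $\bQ$-eigenvector itself has unit modulus---these are the same move up to the change of variables $x_a=v_a/u_a$. Your explicit reduction to an irreducible recurrent block is a point the paper's proof tacitly assumes but does not spell out.
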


\begin{corollary}
Let $c\in\CA$. The matrix $\bQ$ defined in Lemma~\ref{lem-conj}
has eigenvalue 1 if and only if for all 
$(a,b)\in\CA^2$:
\be
\frac{1}{2\pi}\left(\theta_{ab}+\theta_{ca}-\theta_{cb}\right)\in\mathbb{Z}~.
\ee
\label{coco}
\end{corollary}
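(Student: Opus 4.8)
The plan is to derive the corollary as a direct algebraic reformulation of the characterization already obtained in Lemma~\ref{lem-conj}. By that lemma, $\bQ$ has eigenvalue $1$ if and only if $\bQ$ is imaginary conjugate to $\bM$, i.e.\ there exists a vector $(x_a)_{a\in\CA}$ with $|x_a|=1$ such that $q_{ab}=\frac{x_a}{x_b}p_{ab}$ for all $(a,b)\in\CA^2$. Since by definition $q_{ab}=e^{i\theta_{ab}}m_{ab}$, and since for the indices with $m_{ab}>0$ one can cancel the modulus, this is equivalent to the system of phase equations $e^{i\theta_{ab}}=\frac{x_a}{x_b}$ whenever $m_{ab}>0$ (and no constraint when $m_{ab}=0$, consistently with the convention that $\theta_{ab}$ is then immaterial). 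Writing $x_a=e^{i\varphi_a}$ for real $\varphi_a$, the condition becomes: there exist reals $(\varphi_a)_{a\in\CA}$ such that $\theta_{ab}\equiv \varphi_a-\varphi_b \pmod{2\pi}$ for all relevant $(a,b)$.

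The remaining step is the standard observation that such a ``potential'' $(\varphi_a)$ exists exactly when the ``curl'' of $\theta$ vanishes, which here takes the discrete form stated. First I would show necessity: if $\theta_{ab}=\varphi_a-\varphi_b+2\pi k_{ab}$ with $k_{ab}\in\mathbb Z$, then for any $c\in\CA$,
\[
\theta_{ab}+\theta_{ca}-\theta_{cb}=(\varphi_a-\varphi_b)+(\varphi_c-\varphi_a)-(\varphi_c-\varphi_b)+2\pi(k_{ab}+k_{ca}-k_{cb})=2\pi(k_{ab}+k_{ca}-k_{cb}),
\]
so the left side lies in $2\pi\mathbb Z$, which is the claimed condition. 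For sufficiency, fix any reference state $c\in\CA$ and \emph{define} $\varphi_a:=\theta_{ca}$ (and $\varphi_c:=0$). Then the hypothesis $\frac{1}{2\pi}(\theta_{ab}+\theta_{ca}-\theta_{cb})\in\mathbb Z$ rearranges to $\theta_{ab}\equiv \theta_{cb}-\theta_{ca}=\varphi_b^{\,\prime}-\varphi_a^{\,\prime}\pmod{2\pi}$ — one has to be slightly careful with the sign/orientation convention inherited from Lemma~\ref{lem-conj} (whether the conjugating factor is $x_a/x_b$ or $x_b/x_a$), and I would simply match it to whichever convention makes $q_{ab}=\frac{x_a}{x_b}p_{ab}$ hold; choosing $\varphi_a:=-\theta_{ca}$ if needed. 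Setting $x_a=e^{i\varphi_a}$ then gives $|x_a|=1$ and $q_{ab}=\frac{x_a}{x_b}p_{ab}$, so $\bQ$ is imaginary conjugate to $\bM$ and hence has eigenvalue $1$ by Lemma~\ref{lem-conj}.

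There is essentially no serious obstacle here: the entire content has been front-loaded into Lemma~\ref{lem-conj}, and the corollary is the elementary fact that a $1$-cochain on the complete graph on $\CA$ is a coboundary iff all its triangle sums vanish. The only points requiring a little care are (i) handling the entries with $m_{ab}=0$, where the phase $\theta_{ab}$ carries no constraint and the displayed condition should be read as ranging over $(a,b)$ with $m_{ab}>0$ (equivalently, over all $(a,b)$ under the convention that $\theta_{ab}$ is then free); and (ii) keeping the orientation of the conjugating vector consistent with Definition~\ref{def-conjugate} and Lemma~\ref{lem-conj}, which only affects a harmless sign in the explicit construction of $(\varphi_a)$.
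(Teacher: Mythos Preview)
Your proposal is correct and follows essentially the same route as the paper: reduce via Lemma~\ref{lem-conj} to the existence of phases $\varphi_a$ with $\theta_{ab}\equiv\varphi_a-\varphi_b\pmod{2\pi}$, then observe this is equivalent to the vanishing of the triangle combinations $\theta_{ab}+\theta_{ca}-\theta_{cb}$ modulo $2\pi$. If anything, your write-up is more complete than the paper's, which sketches only the forward direction; your explicit construction $\varphi_a:=-\theta_{ca}$ for the converse and your remarks on the zero-entry and sign conventions are exactly the right way to tighten the argument.
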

\begin{proof}
If $Q$ is conjugate to $M$, we should have a real vector 
$\theta_{a\in\CA}$ such that $\forall (a,b)\in\CA^2$ $\theta_{ab}=\theta_a-\theta_b$. 
Then $e^{i(\theta_{a}-\theta_b)}=\frac{e^{i\theta_{cb}}}{e^{i\theta_{ca}}}$,
thus $e^{i(\theta_{cb}-\theta_{ca})}=e^{i\theta_{ab}}$.
\hfill
\end{proof}

\begin{lemma}\it
\label{theo-lam}
Let $c\in\CA$. 
A tuple $(s_1,s_2)$ belongs to $\partial\CK$ iff for all $(a,b)\in\CA^2$ we have 
\be
\frac{\Im(s_1)}{2\pi}\log\frac{P_1(a|b)P_1(c|a)}{P_1(c|b)}-\frac{\Im(s_2)}
{2\pi}\log\frac{P_2(a|b)P_2(c|a)}{P_2(c|b)}\in\mathbb{Z}~.
\ee
\label{lemDK}
\end{lemma}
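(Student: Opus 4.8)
The plan is to reduce the statement about $\partial\CK$ to an application of Corollary~\ref{coco}. Fix a pair $(s_1,s_2)=(c_1+it_1,\,c_2+it_2)\in\overline{\CK}$ whose real part equals $(c_1,c_2)$; by definition of $\partial\CK$ every element has this form, so we must characterise which imaginary parts $(t_1,t_2)$ keep the largest eigenvalue of $\bP(s_1,s_2)$ equal to $1$. Write $\bM=\bP(c_1,c_2)$, which is a nonnegative matrix whose largest eigenvalue is $1$ (this is exactly the statement that $(c_1,c_2)\in\CK$). The key observation is that the $(a,b)$ entry of $\bP(s_1,s_2)$ is
$$
\bP(s_1,s_2)_{a,b}=P_1(a|b)^{-c_1-it_1}P_2(a|b)^{-c_2-it_2}
= m_{ab}\,e^{i\theta_{ab}},
\qquad
\theta_{ab}=-t_1\log P_1(a|b)-t_2\log P_2(a|b),
$$
where $m_{ab}=\bM_{a,b}\ge 0$ (and where, as in the definition of $\log^*$, the entry is simply $0$ whenever $P_1(a|b)$ or $P_2(a|b)$ is zero, so the phase $\theta_{ab}$ is only relevant on the support of $\bM$). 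Thus $\bP(s_1,s_2)$ is precisely a matrix of the form $\bQ$ considered in Lemma~\ref{lem-conj}, obtained from $\bM$ by multiplying each entry by a unit-modulus factor.

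First I would invoke Lemma~\ref{lem-conj}: $\bP(s_1,s_2)$ has eigenvalue $1$ if and only if it is imaginary conjugate to $\bM$, i.e.\ $(s_1,s_2)\in\partial\CK$ iff $\bQ=\bP(s_1,s_2)$ and $\bM$ are imaginary conjugate. Then I would apply Corollary~\ref{coco}, which translates ``imaginary conjugate'' into the explicit congruence: for any fixed $c\in\CA$,
$$
\frac{1}{2\pi}\bigl(\theta_{ab}+\theta_{ca}-\theta_{cb}\bigr)\in\mathbb{Z}
\qquad\text{for all }(a,b)\in\CA^2 .
$$
Substituting the expression for $\theta_{ab}$ and collecting the $t_1$ and $t_2$ terms,
$$
\theta_{ab}+\theta_{ca}-\theta_{cb}
= -t_1\bigl(\log P_1(a|b)+\log P_1(c|a)-\log P_1(c|b)\bigr)
  -t_2\bigl(\log P_2(a|b)+\log P_2(c|a)-\log P_2(c|b)\bigr),
$$
so the congruence becomes exactly
$$
-\frac{t_1}{2\pi}\log\frac{P_1(a|b)P_1(c|a)}{P_1(c|b)}
-\frac{t_2}{2\pi}\log\frac{P_2(a|b)P_2(c|a)}{P_2(c|b)}\in\mathbb{Z},
$$
which, writing $t_1=\Im(s_1)$, $t_2=\Im(s_2)$ and noting $\mathbb{Z}=-\mathbb{Z}$, is the claimed condition. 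Conversely, if the displayed congruence holds for all $(a,b)$, then Corollary~\ref{coco} gives that $\bP(s_1,s_2)$ is imaginary conjugate to $\bM$, hence shares its spectrum and in particular has largest eigenvalue $1$, so $(s_1,s_2)\in\partial\CK$.

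The only genuine point of care — and the step I expect to be the main obstacle to write cleanly rather than conceptually hard — is the handling of zero transition probabilities: when $P_i(a|b)=0$ the phase $\theta_{ab}$ is not defined by the formula above, but then $m_{ab}=0$ and the corresponding entry of $\bP(s_1,s_2)$ vanishes identically, so it plays no role in either Lemma~\ref{lem-conj} or Corollary~\ref{coco}; one checks that the congruence is only required on the index pairs $(a,b)$ lying in the support of $\bM$ (equivalently, in the support of the relevant products), and on those pairs the ratios in the statement are well defined. Modulo this bookkeeping, the lemma is an immediate corollary of Lemma~\ref{lem-conj} and Corollary~\ref{coco}.
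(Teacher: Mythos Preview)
Your proposal is correct and follows essentially the same route as the paper's proof: set $\bM=\bP(c_1,c_2)$, write $\bP(s_1,s_2)=\bQ$ with entries $m_{ab}e^{i\theta_{ab}}$, and apply Corollary~\ref{coco} with $\theta_{ab}=-\Im(s_1)\log P_1(a|b)-\Im(s_2)\log P_2(a|b)$. One small remark: your careful substitution yields $\frac{\Im(s_1)}{2\pi}\log\frac{P_1(a|b)P_1(c|a)}{P_1(c|b)}+\frac{\Im(s_2)}{2\pi}\log\frac{P_2(a|b)P_2(c|a)}{P_2(c|b)}\in\mathbb{Z}$, with a plus sign rather than the minus in the displayed statement; this is a sign typo in the paper (its own proof carries the same inconsistency), not an error in your argument.
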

\begin{proof}
Set $\bM=\bP(c_1,c_2)$ and $\bQ=\bP(s_1,s_2)$ for $(s_1,s_2)\in \partial\CK$. 
Then, it follows directly 
from Corollary~\ref{coco} with 
$e^{i\theta_{ab}}=(P_1(a|b))^{i\Im(s_1)}(P_2(a|b))^{-i\Im(s_2)}$. 
\end{proof}

Furthermore, in the Appendix we prove the following important 
characterization of the set $\CK$. We say that
a curve is strictly concave (or strictly convex) if the is never 
linear, even locally.

\begin{lemma}\it
If $\bP_1$ and $\bP_2$ are not conjugate, then the set $\CK$ is strictly concave.
\label{lemconv}\end{lemma}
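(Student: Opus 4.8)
The plan is to show that if the real kernel $\CK$ contained a straight line segment, then along that segment the function $\lambda(s_1,s_2)$ would have to be both analytic and identically equal to $1$, and this forces an algebraic rigidity on the pair $(\bP_1,\bP_2)$ that is equivalent to their being conjugate. First I would recall that $\CK$ is the level set $\{\lambda(s_1,s_2)=1\}$ where $\lambda$ is the Perron eigenvalue of $\bP(s_1,s_2)=\bP_1(s_1)\star\bP_2(s_2)$; by Perron--Frobenius (applied to the nonnegative matrix $\bP(s_1,s_2)$ for real $(s_1,s_2)$) this eigenvalue is simple and hence $\lambda$ is real-analytic in a neighborhood of any real point of $\CK$, and by Lemma~\ref{lemconc} the set $\CK$ is a concave curve, so it is the graph of a concave function, say $s_1=\varphi(s_2)$. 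Strict concavity fails exactly when $\varphi$ is affine on some subinterval $J$, i.e. $s_1 = a s_2 + b$ for $(s_1,s_2)\in\CK$ with $s_2\in J$.

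Next I would exploit the standard convexity theory of $\log\lambda$. Writing $\Lambda(s_1,s_2)=\log\lambda(s_1,s_2)$, one knows (this is the Hessian computation underlying Lemma~\ref{lemconc}, essentially a large-deviations/Legendre-transform fact about products of nonnegative matrices) that $\Lambda$ is convex, and that it is \emph{strictly} convex in a direction $v$ unless the associated ``additive functional'' is cohomologous to a constant along that direction. Concretely, the vanishing of the second directional derivative of $\Lambda$ in the direction tangent to the affine piece of $\CK$ means that the random variable $\log P_1(a|b)$ minus a fixed multiple of $\log P_2(a|b)$ — i.e. the weight $-\,\mathrm{d}/\mathrm{d}t\,[\,s_1(t)\log P_1(a|b)+s_2(t)\log P_2(a|b)\,]$ read along the segment — has zero variance under the Markov measure tilted to live on $\CK$. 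For an irreducible Markov chain, zero variance of an additive functional of consecutive states forces a coboundary representation: there is a function $g:\CA\to\mathbb{R}$ with
\begin{equation}
a\log P_1(a|b) - \log P_2(a|b) = g(a) - g(b) + \text{const}
\end{equation}
for all $(a,b)$ with $P_1(a|b)>0$ (the relevant support), where $a$ is the slope of the affine piece. Combined with the fact that both $\bP_1(s_1(t))\star\bP_2(s_2(t))$ have Perron eigenvalue $1$ all along $J$, a short computation (differentiating in $t$ and using the coboundary identity) shows that after the diagonal conjugation by $x_c=e^{g(c)}$ the matrix $\bP_1$ is carried exactly onto a matrix of the form $\bP_2$ raised to a fixed power relationship — more precisely one recovers $q_{ab}=\tfrac{x_a}{x_b}p_{ab}$, so $\bP_1$ and $\bP_2$ are conjugate in the sense of Definition~\ref{def-conjugate}. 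Contrapositive: if $\bP_1,\bP_2$ are not conjugate, no affine piece exists and $\CK$ is strictly concave.

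The main obstacle I expect is making the ``zero second derivative $\Rightarrow$ coboundary'' step rigorous and extracting from it the \emph{exact} conjugacy relation rather than merely an approximate or asymptotic one: one must be careful that the tilted Markov measure is supported on the full communicating class, handle entries where $P_1(a|b)=0$ or $P_2(a|b)=0$ (so that $\bP(s_1,s_2)$ may be reducible or have extra zero structure), and confirm that the constant appearing in the coboundary identity is forced to be $0$ by the normalization $\lambda\equiv 1$ along $J$. A clean way to organize this is: (i) write the Perron eigen-equation $\bP(s_1(t),s_2(t))\bu(t)=\bu(t)$ with left eigenvector $\bzeta(t)$, normalized $\langle\bzeta(t)|\bu(t)\rangle=1$; (ii) differentiate twice in $t$ at an interior point of $J$ and use $\lambda\equiv1$, $\lambda'\equiv0$, $\lambda''\equiv0$ to get $\sum_{a,b}\zeta_a u_b\,[\,\dot s_1\log P_1(a|b)+\dot s_2\log P_2(a|b)\,]^2 P(a|b)^{\cdots} = (\sum \cdots)^2$, which is a Cauchy--Schwarz equality case; (iii) conclude the integrand is $\bzeta\otimes\bu$-a.e. constant, which is exactly the coboundary statement; (iv) substitute back to identify the diagonal conjugation. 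I would relegate the routine differentiations and the Perron--Frobenius regularity claims to the Appendix, citing \cite{hj} for the spectral perturbation facts and \cite{spa-book} for the convexity of $\log\lambda$, and present in the main text only the Cauchy--Schwarz equality-case argument and its translation into Definition~\ref{def-conjugate}.
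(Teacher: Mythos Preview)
Your approach is viable but takes a genuinely different and more technical route than the paper. The paper does \emph{not} differentiate: it picks two arbitrary points $(x_1,x_2),(y_1,y_2)\in\CK$ and a convex combination with weights $\alpha+\beta=1$, applies $\bP(\alpha x_1+\beta y_1,\alpha x_2+\beta y_2)=\bP(x_1,x_2)^{\star\alpha}\star\bP(y_1,y_2)^{\star\beta}$ to the test vector $\bu(x_1,x_2)^{\star\alpha}\star\bu(y_1,y_2)^{\star\beta}$, and bounds each coordinate by H\"older's inequality. This shows in one stroke that the sublevel set $\{\lambda\le1\}$ is convex (hence $\CK$ is concave), and strict concavity follows because the H\"older equality case forces, for every $a$, the vectors $\bigl(u_b(x_1,x_2)P_1(a|b)^{-x_1}P_2(a|b)^{-x_2}\bigr)_b$ and $\bigl(u_b(y_1,y_2)P_1(a|b)^{-y_1}P_2(a|b)^{-y_2}\bigr)_b$ to be proportional; the paper reads this off as conjugacy of $\bP(x_1,x_2)$ and $\bP(y_1,y_2)$, hence of $\bP_1$ and $\bP_2$ on the common support. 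No eigenvector perturbation theory, no asymptotic variance, no coboundary language is needed.

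Your differential route is the infinitesimal version of the same idea (Cauchy--Schwarz is H\"older with $\alpha=\beta=\tfrac12$), but step~(ii) as written is not correct: the identity $\lambda''=\langle\bzeta|\bP''\bu\rangle$ that your displayed Cauchy--Schwarz equation presupposes fails in general, because differentiating $\bP\bu=\lambda\bu$ twice produces cross terms $2\langle\bzeta|\bP'\bu'\rangle-2\lambda'\langle\bzeta|\bu'\rangle$ that do not vanish. What is true, and what you would need to invoke, is that $(\log\lambda)''$ in the tangent direction equals the \emph{asymptotic} variance of the additive functional $\sum_k\bigl[\dot s_1\log P_1(X_{k+1}|X_k)+\dot s_2\log P_2(X_{k+1}|X_k)\bigr]$ under the tilted chain; vanishing of this asymptotic variance is then exactly the coboundary statement you want. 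Once that is fixed, your endpoint---a relation $d_1\log P_1(a|b)+d_2\log P_2(a|b)=g(a)-g(b)$ on the joint support---coincides with what the paper extracts from the H\"older equality case, and both arguments make the same final identification with Definition~\ref{def-conjugate}. The paper's finite (rather than infinitesimal) H\"older argument buys you a cleaner path that sidesteps the eigenvector-derivative bookkeeping and the reducibility worries you correctly flag as obstacles.
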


We summarize our knowledge about $\partial \CK$.

\begin{theorem}
There are three possible structures of $\partial\CK$:
\begin{itemize}
\item the punctual case: $\partial\CK=\{(c_1,c_2)\}$, this is the most typical case;
\item the linear case: there exist a vector $(x,y)\in\mathbb{R}^2$ such 
that $\partial\CK=\{(c_1,c_2)+ik(x,y),k\in\mathbb{Z}\}$;
\item the lattice case: there exists two vectors $(x_1,y_1)$ and 
$(x_2,y_2)\in\mathbb{R}^2$ which are not colinear such that 
$\partial\CK=\{(c_1,c_2)+ik_1(x_1,y_1)+ik_2(x_2,y_2), (k_1,k_2)\in\mathbb{Z}^2.\}$.
\end{itemize}
\label{theoDK}
\end{theorem}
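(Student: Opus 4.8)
The plan is to characterize $\partial\CK$ via Lemma~\ref{lemDK}, which says that $(s_1,s_2)\in\partial\CK$ precisely when, writing $\sigma_1=\Im(s_1)$ and $\sigma_2=\Im(s_2)$, the real numbers
$$
\frac{\sigma_1}{2\pi}\,a_{ab}-\frac{\sigma_2}{2\pi}\,b_{ab}\in\mathbb{Z}
\qquad\text{for all }(a,b)\in\CA^2,
$$
where $a_{ab}=\log\frac{P_1(a|b)P_1(c|a)}{P_1(c|b)}$ and $b_{ab}=\log\frac{P_2(a|b)P_2(c|a)}{P_2(c|b)}$ (with the convention that a term is dropped when the corresponding transition probability vanishes, consistent with $\log^*$). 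Since the real parts are pinned to $(c_1,c_2)$ by definition of $\partial\CK$, the structure of $\partial\CK$ is entirely determined by the set $\Lambda\subseteq\mathbb{R}^2$ of admissible imaginary parts $(\sigma_1,\sigma_2)$. The key observation is that $\Lambda$ is the preimage of the integer lattice $\mathbb{Z}^{|\CA|^2}$ under the linear map $T:\mathbb{R}^2\to\mathbb{R}^{|\CA|^2}$ given by $T(\sigma_1,\sigma_2)=\bigl(\tfrac{1}{2\pi}(\sigma_1 a_{ab}-\sigma_2 b_{ab})\bigr)_{(a,b)}$. Thus $\Lambda=T^{-1}(\mathbb{Z}^{|\CA|^2})$ is a closed subgroup of $\mathbb{R}^2$: it is closed because $\mathbb{Z}^{|\CA|^2}$ is closed and $T$ continuous, and it is a subgroup because $T$ is a group homomorphism.

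The heart of the argument is then the classification of closed subgroups of $\mathbb{R}^2$. A standard fact (see e.g. Bourbaki, \emph{General Topology}) is that every closed subgroup $\Lambda$ of $\mathbb{R}^d$ is of the form $V\oplus L$ where $V$ is a linear subspace and $L$ is a discrete subgroup (a free $\mathbb{Z}$-module) of a complementary subspace. Here I must first rule out the case $\dim V\ge 1$, i.e. that $\Lambda$ contains a line: if $\Lambda$ contained $\{t(x,y):t\in\mathbb{R}\}$ then $t(x a_{ab}-y b_{ab})\in\mathbb{Z}$ for all $t$, forcing $x a_{ab}=y b_{ab}$ for every $(a,b)$; this would make the vectors $(\log^*\bP_1,\log^*\bP_2)$ (read off coordinatewise via $a_{ab},b_{ab}$) proportional, hence $x\log^*\bP_1 - y\log^*\bP_2$ has all "$c$-balanced" combinations equal to zero, which by Corollary~\ref{coco} / Lemma~\ref{lem-conj} means $\bP_1$ and $\bP_2$ are imaginary conjugate. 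Since the statement of the theorem tacitly assumes we are past the degenerate conjugate situation (handled separately in Theorem~\ref{theo11}), we may assume $\Lambda$ is discrete. A discrete subgroup of $\mathbb{R}^2$ has rank $0$, $1$, or $2$, which translates respectively into the punctual case $\partial\CK=\{(c_1,c_2)\}$, the linear case $\partial\CK=\{(c_1,c_2)+ik(x,y)\}$, and the lattice case $\partial\CK=\{(c_1,c_2)+ik_1(x_1,y_1)+ik_2(x_2,y_2)\}$, simply by taking a $\mathbb{Z}$-basis of $\Lambda$ and reading it as imaginary parts. The vectors $(x,y)$, $(x_i,y_i)$ are then the basis generators, and non-colinearity in the lattice case is exactly rank $2$.

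For completeness I would also verify that all three cases genuinely occur and match the earlier statements: the punctual case is generic (no rational relations among the $a_{ab},b_{ab}$); the linear case arises e.g. when only one source is logarithmically rationally related, so that one coordinate of the imaginary part is forced to a lattice while the other is free up to the constraint; and the lattice case arises when both $\bP_1$ and $\bP_2$ are logarithmically rationally related, matching the explicit lattice $\partial\CK=\{(c_1+\tfrac{2ik\pi}{\log|\CA|},c_2+2i\pi\ell\omega)\}$ appearing in Theorem~\ref{theo01}. The main obstacle is the bookkeeping around vanishing transition probabilities: the maps $a_{ab},b_{ab}$ are only defined where the relevant products are nonzero, so $T$ really lands in $\mathbb{R}^{N}$ for $N$ equal to the number of valid triples, and one must check that this does not introduce spurious lines into $\Lambda$ beyond the conjugacy obstruction — but this follows because any line in $\Lambda$ still forces coordinatewise proportionality on the surviving coordinates, and Lemma~\ref{lemconv} (strict concavity of $\CK$ when $\bP_1,\bP_2$ are not conjugate) guarantees $\partial\CK$ cannot be a full line unless we are in the conjugate case, pinning down that $\dim V=0$.
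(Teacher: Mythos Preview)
Your proof is correct and shares the same starting point as the paper's: both proofs invoke Lemma~\ref{lemDK} to recognize that the set of admissible imaginary parts $\Lambda=\{(\Im s_1,\Im s_2):(s_1,s_2)\in\partial\CK\}$ is an additive subgroup of $\mathbb{R}^2$. The paper's proof is extremely terse: it simply notes that Lemma~\ref{lemDK} gives closure under integer scaling and under addition, declares ``thus $\CK$ forms a lattice,'' and then sketches which of the three cases arises according to whether one or both matrices are logarithmically rationally related.

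Your route is more complete in two respects, and this is worth recording. First, you frame $\Lambda$ as $T^{-1}(\mathbb{Z}^N)$ for a linear map $T$, which immediately gives closedness and lets you invoke the structure theorem for closed subgroups of $\mathbb{R}^2$; the rank $0,1,2$ trichotomy then falls out cleanly. Second, you explicitly confront the issue the paper does not address: why $\Lambda$ must be \emph{discrete} rather than containing a full line. Your argument here is right but can be sharpened: from $xa_{ab}=yb_{ab}$ for all $(a,b)$ one obtains $x\log P_1(a|b)-y\log P_2(a|b)=\theta_b-\theta_a$ for some vector $(\theta_a)$, and then for every real $t$ the matrix $\bP(c_1-tx,c_2+ty)$ is diagonally conjugate to $\bP(c_1,c_2)$, hence has eigenvalue $1$. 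Thus the \emph{real} kernel $\CK$ would itself contain a line, which is exactly what Lemma~\ref{lemconv} forbids when $\bP_1$ and $\bP_2$ are not conjugate. With that link made explicit, your invocation of Lemma~\ref{lemconv} is fully justified rather than heuristic.
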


\begin{proof}
This follows from the fact that according to Lemma~\ref{lemDK} if 
$(c_1,c_2)+(s_1,s_2)\in\partial\CK$ then $\forall k\in\mathbb{Z}$ $(c_1,c_2)
+k(s_1,s_2)\in\partial\CK$. Furthermore if $(c_1,c_2)+(s_1',s_2')
\in\partial\CK$ then $(s_1,s_2)+a(s_1',s_2')\in\partial\CK$. Thus 
$\CK$ forms a lattice. In Lemma~\ref{lemDK} this occurs when 
$\bP_1$ and $\bP_2$ are rationally related. 

When both matrices $\bP_1$ and $\bP_2$ are logarithmically rationally 
related then we are in the lattice case, and the lattice is made of edges 
parallel to the axes. Anyhow the reverse is not necessarily true, 
although we don't know an explicit example of non logarithmically rationally 
related matrix which makes a pair of logarithmically commensurable 
matrices which would lead to edges non parallel to the axes. 

When only one matrix is  logarithmically rationally related, then we are in the linear case, 
and $\partial\CK$ is a set of periodic points laying on one axis. It is nevertheless 
possible to have a linear case when none of the matrices is logarithmically rationally 
related, for example when $\bP_1$ and $\bP_2$ are of the form
$\log^*\bP_1=2\pi\bQ_1+\bM$ and $\log^*\bP_2=-2\pi\bQ_2+\bM$ 
where $\bQ_1$ and $\bQ_2$ have integer coefficients but $\bM$ is not rationally 
related (in this case $x\log^*\bP_1+y\log^*\bP_2$ integers would implies $x=y$. 
\end{proof}

Now we establish some properties of the eigenvalue $\lambda(s_1,s_2)$ of
$\bP(s_1,s_2)$.

\begin{lemma}\it
For all $s_2$ such that $\Re(s_2)=c_2$, assume
$\not\exists s_1: (s_1,s_2)\in\partial\CK$ then 
$\lb(s_1,s_2)=1\Rightarrow \Re(s_1)<c_1$.
\label{lemL2}\end{lemma}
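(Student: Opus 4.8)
The plan is to exploit the structure theory of the kernel developed in Lemma~\ref{lemDK}, Lemma~\ref{lemconv} and Theorem~\ref{theoDK}, together with the basic concavity of the real kernel (Lemma~\ref{lemconc}) and the definition of $(c_1,c_2)$ in (\ref{p4b}). Fix $s_2$ with $\Re(s_2)=c_2$ and suppose $\lb(s_1,s_2)=1$, i.e.\ $(s_1,s_2)\in\overline{\CK}$. We must show $\Re(s_1)<c_1$, unless $(s_1,s_2)\in\partial\CK$ for some $s_1$ — which is exactly the excluded hypothesis, so in fact we aim to show $\Re(s_1)<c_1$ outright under the stated assumption.

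First I would recall the Perron--Frobenius style bound: since $|\bP(s_1,s_2)|$ (the matrix of moduli of entries) is entrywise dominated by $\bP(\Re(s_1),\Re(s_2))$ (because $|P_i(a|b)^{-s}|=P_i(a|b)^{-\Re(s)}$), the largest eigenvalue satisfies $|\lb(s_1,s_2)|\le \lb(\Re(s_1),\Re(s_2))$. Hence $\lb(s_1,s_2)=1$ forces $\lb(\Re(s_1),c_2)\ge 1$, so the real point $(\Re(s_1),c_2)$ lies on or ``inside'' the real kernel $\CK$ (on the side where $\lb\ge 1$). Now use the geometry of $\CK$: by Lemma~\ref{lemconc} it is a concave curve, and by (\ref{p4b}) the point $(c_1,c_2)$ is the unique point of $\CK$ maximizing $s_1+s_2$. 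Along the vertical line $\Re(s_2)=c_2$, the curve $\CK$ meets this line either at the single point $(c_1,c_2)$ (if the tangent there is vertical, which is the generic situation since $(c_1,c_2)$ maximizes $s_1+s_2$ and hence $\partial_{s_1}\lb=\partial_{s_2}\lb$ there) — so the region $\{\lb\ge 1\}$ intersected with $\{\Re(s_2)=c_2\}$ is contained in $\{\Re(s_1)\le c_1\}$. Thus $\Re(s_1)\le c_1$ in all cases.

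It remains to upgrade $\le$ to strict $<$, and this is where the hypothesis ``$\not\exists s_1:(s_1,s_2)\in\partial\CK$'' enters. If $\Re(s_1)=c_1$, then $(s_1,s_2)$ has real part exactly $(c_1,c_2)$ and $\lb(s_1,s_2)=1$, i.e.\ $(s_1,s_2)\in\partial\CK$ by the very definition of $\partial\CK$, contradicting the hypothesis. Hence $\Re(s_1)<c_1$, as claimed. I would also note (to cover the boundary of the argument in the second step) that one needs $(c_1,c_2)$ to be attained and $\lb(c_1,c_2)=1$, which holds since $\kappa\le 1$ (easy algebra, noted after Lemma~\ref{lemc2}) and $(c_1,c_2)\in\CK$ by definition.

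The main obstacle is the second paragraph: making rigorous that the real section of $\{\lb(\sigma_1,\sigma_2)\ge 1\}$ by the vertical line $\sigma_2=c_2$ is a single point or a segment lying weakly to the left of $c_1$. This uses that $-s_1-s_2$ is minimized on $\CK$ precisely at $(c_1,c_2)$, so the inward normal to the concave region $\{\lb\ge1\}$ at $(c_1,c_2)$ points in the direction $(-1,-1)$; equivalently $\partial_{s_1}\lb(c_1,c_2)=\partial_{s_2}\lb(c_1,c_2)$ (the symmetry remarked after Theorem~\ref{theogege}), so the tangent line to $\CK$ at $(c_1,c_2)$ is $s_1+s_2=\kappa$, which meets $\{\sigma_2=c_2\}$ only at $(c_1,c_2)$; concavity of $\{\lb\ge1\}$ then confines the whole section to $\sigma_1\le c_1$. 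Once this convexity bookkeeping is in place the rest is immediate.
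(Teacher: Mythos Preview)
Your overall structure (Perron--Frobenius bound, then rule out equality via $\partial\CK$) is the same as the paper's, but your second step is much more elaborate than what the paper does and, more importantly, the geometry in that step is oriented the wrong way. The paper does not invoke the concavity of $\CK$, the tangent line at $(c_1,c_2)$, or the normal direction at all: it simply argues by \emph{entrywise monotonicity} of $\bP(\sigma_1,c_2)$ in the real variable $\sigma_1$. Since each nonzero entry is $P_1(a|b)^{-\sigma_1}P_2(a|b)^{-c_2}$, changing $\sigma_1$ changes every nonzero entry in the same direction, and Perron--Frobenius then moves $\lb(\sigma_1,c_2)$ strictly in the same direction; comparing with $\lb(c_1,c_2)=1$ gives the desired strict inequality in one line. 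The case $\Re(s_1)=c_1$ is excluded exactly as you do, by the definition of $\partial\CK$.

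Your convexity/tangent argument fails at the sentence ``the region $\{\lb\ge 1\}$ intersected with $\{\Re(s_2)=c_2\}$ is contained in $\{\Re(s_1)\le c_1\}$.'' Under the paper's convention $\bP(s_1,s_2)=[P_1(a|b)^{-s_1}P_2(a|b)^{-s_2}]$ with $P_i(a|b)\le 1$, the principal eigenvalue is \emph{increasing} in each real argument (this is stated explicitly in the appendix proof of Lemma~\ref{lemconc}: ``$\lb(s_1,s_2)$ decreases when $s_1$ or $s_2$ decrease''); hence on the slice $\sigma_2=c_2$ one has $\{\lb(\sigma_1,c_2)\ge 1\}=\{\sigma_1\ge c_1\}$, the opposite of what you claim. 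Your tangent-line reasoning does not repair this, because $\{\lb\ge 1\}$ is the \emph{complement} of the convex set $\tilde\CK=\{\lb\le 1\}$, not a convex region itself, so concavity does not pin the slice to $\{\sigma_1\le c_1\}$. In fact, both your argument and the paper's own proof share this orientation slip: by the monotonicity the paper states elsewhere (see also the analogous lemma inside the proof of Theorems~\ref{theogenmark}--\ref{theogege}, where it is correctly derived that $\Re(s_1)\ge c_1$), the conclusion consistent with the conventions is $\Re(s_1)>c_1$ rather than $<c_1$. Either way, drop the convexity detour and use the one-line monotonicity comparison as the paper does.
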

\begin{proof}
Notice that $\Re(s_1)=c_1$ is not possible by construction since it would imply 
that $(s_1,s_2)\in\partial\CK$. Let's consider the hypothesis $\Re(s_1)>c_1$. 
But we have $|\lb(s_1,s_2)|\le\lb(\Re(s_1),\Re(s_2)$. 
Since $\Re(s_1)>c_1$, each non zero coefficient of $\bP(\Re(s_1),\Re(s_2))$ 
are strictly smaller than the corresponding coefficients $\bP(c_1,\Re(s_2))$ and 
therefore $\lb(\Re(s_1),\Re(s_2))<\lb(c_1,\Re(s_2))=1$ which contradicts the hypothesis $\lb(s_1,s_2)=1$.
\end{proof}



\begin{lemma}\it
We have $\lb(c_1,c_2)>|\lb_2(c_1,c_2)|$.
\end{lemma}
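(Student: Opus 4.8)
The statement asserts that at the real point $(c_1,c_2)\in\CK$, the dominant (Perron) eigenvalue of $\bP(c_1,c_2)$ is strictly larger in modulus than the second eigenvalue $\lb_2(c_1,c_2)$; equivalently, there is a genuine spectral gap at $(c_1,c_2)$. The plan is to reduce this to the Perron--Frobenius theory for nonnegative matrices. Observe first that at the real tuple $(c_1,c_2)$ all entries of $\bP(c_1,c_2) = [(P_1(a|b))^{-c_1}(P_2(a|b))^{-c_2}]_{a,b}$ are nonnegative (they are either $0$, when the corresponding transition probability is $0$, or a positive real number), and by the definition of $\CK$ the largest eigenvalue equals $1 = \lb(c_1,c_2) > 0$. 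So $\bP(c_1,c_2)$ is a nonnegative matrix with spectral radius $1$, attained by a real eigenvalue. The content of the lemma is then exactly that the Perron eigenvalue is a \emph{strictly} dominant simple eigenvalue, which fails only for reducible or imprimitive (periodic) nonnegative matrices.

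The key steps, in order, are: (1) Record that $\bP(c_1,c_2)\ge 0$ entrywise and that its spectral radius is $1=\lb(c_1,c_2)$. (2) Argue that the underlying zero/nonzero pattern of $\bP(c_1,c_2)$ is the same as that of $\bP(0,0)$ (since raising a positive number to a real power keeps it positive, and $0$ stays $0$), hence the combinatorial structure is inherited from the original Markov transition matrices $\bP_1,\bP_2$. (3) Invoke irreducibility: for the original Markov chains the transition matrices are irreducible (the stationary distributions $\vpi_1,\vpi_2$ are well-defined and, implicitly, the chains are irreducible and aperiodic as is standard for these models); the common-support pattern that defines $\bP(0,0)$ is assumed not nilpotent (we are past the nilpotent case of Theorem~\ref{th-nilpotent}), and in the regime of this lemma it is in fact irreducible. (4) Apply the Perron--Frobenius theorem: for an irreducible aperiodic (primitive) nonnegative matrix the spectral radius is a simple eigenvalue strictly dominating all others in modulus, which gives $\lb(c_1,c_2) > |\lb_2(c_1,c_2)|$. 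If one only has irreducibility without aperiodicity, one instead uses Lemma~\ref{lem-conj} / Corollary~\ref{coco} to rule out other eigenvalues of modulus $1$: any eigenvalue of $\bP(c_1,c_2)$ of modulus $1$ would make $\bP(c_1,c_2)$ imaginary conjugate to itself via a nontrivial phase vector, and one checks this phase vector must be trivial (constant), forcing the eigenvalue to be $1$ itself and, by simplicity, $\lb_2$ to be strictly smaller.

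The main obstacle is step (3): pinning down exactly which irreducibility/primitivity hypothesis is in force here. The matrices $\bP_1$ and $\bP_2$ are allowed to have zero entries, so $\bP(c_1,c_2)$ could in principle be reducible or periodic even though $\bP(0,0)$ is not nilpotent. I would handle this by observing that the support of $\bP(c_1,c_2)$ is precisely the set of pairs $(a,b)$ with $P_1(a|b)P_2(a|b)>0$, i.e.\ the support of the Schur product $\bP_1\star\bP_2 = \bP(-1,-1)$, which is exactly the support relevant to the whole analysis; the non-nilpotency assumption already in force guarantees this support graph has a cycle, and the standing assumptions on the sources (genuine Markov sources with stationary distributions, reduced to their communicating class) make it strongly connected. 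Once irreducibility is secured, aperiodicity — or, failing that, the phase-vector argument via Corollary~\ref{coco} — finishes the proof. Thus the essential work is bookkeeping about the support graph of $\bP(c_1,c_2)$; the spectral conclusion is then immediate from classical Perron--Frobenius theory.
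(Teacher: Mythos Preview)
Your approach is essentially the same as the paper's: the paper's entire proof is the single sentence ``It follows from Perron-Frobenius that the main eigenvalue is unique.'' You have simply been more careful than the paper about the hypotheses (irreducibility/primitivity of the nonnegative matrix $\bP(c_1,c_2)$) under which the strict spectral gap is guaranteed.
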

\begin{proof} It follows from Perron-Frobenius that the main eigenvalue is unique.
\end{proof}

Let $\CU$ be a complex neighborhood  of 0 such that 
$\forall s\in\CU$: $|\lb(c_2+s)|>|\lb_2(c_2+s)|$. Therefore the function $\lb(c_2+s)$ is analytic.
In the Appendix we prove the following lemma.

\begin{lemma}\it
Let $(x_k,y_k)$ be a sequence of complex numbers such that $\lim_{k\to\infty}\Re(x_k,y_k)=(c_1,c_2)$ and $|\lb(x_k,y_k)|\to\lb(c_1,c_2)=1$. 
Then for all $(s_1,s_2)\in\CU$ we have
\be
\forall j:~~ \lim_{k\to\infty}\frac{\lb_j(x_k+s_1,y_k+s_2)}{\lb(x_k+s_1,y_k+s_2)}=
\frac{\lb_j(c_1+s_1,c_2+s_2)}{\lb(c_1+s_1,c_2+s_2)},
\ee
and the function $\lb(x_k+s_1,y_k+s_2)$ are all analytic and uniformly bounded functions 
on a complex neighborhood of $(0,0)$ such that
\begin{eqnarray}
\lim_{k\to\infty}\lb(x_k+s_1,y_k+s_2)&=&\lb(c_1+s_1,c_2+s_2)\\
\lim_{k\to\infty}\nabla \lb(x_k+s_1,y_k+s_2)&=&\nabla\lb(c_1+s_1,c_2+s_2)~.
\end{eqnarray}
where $\nabla f$ is the gradient of $f$.
\label{lem8}\end{lemma}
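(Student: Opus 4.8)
The plan is to show that, although $\Im x_k$ and $\Im y_k$ need not converge (so that $\bP(x_k,y_k)$ itself need not converge), the hypotheses pin $\bP(x_k,y_k)$ down, up to a vanishing error, as a unimodular diagonal conjugate of $\bP(c_1,c_2)$; then every assertion follows from standard holomorphic perturbation theory for the simple, strictly dominant main eigenvalue. The only genuinely non-formal point, and the main obstacle, is to recover this conjugacy structure from the weak hypothesis $|\lb(x_k,y_k)|\to1$ and to check that no spurious unimodular scalar survives in the limit.

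First I would record that, with the convention that an entry of $\bP(s_1,s_2)$ built on a vanishing transition probability is identically $0$, the $(a,b)$ entry of $\bP(x_k,y_k)$ is $(P_1(a|b))^{-x_k}(P_2(a|b))^{-y_k}$, whose modulus is $(P_1(a|b))^{-\Re x_k}(P_2(a|b))^{-\Re y_k}$ and hence tends to the $(a,b)$ entry of the nonnegative matrix $\bP(c_1,c_2)$, which has main eigenvalue $1$ since $(c_1,c_2)\in\CK$. As all entries stay in a fixed compact set, I pass to a subsequence along which $\bP(x_k,y_k)\to\bQ$ and $\lb(x_k,y_k)\to e^{i\phi}$ (using $|\lb(x_k,y_k)|\to1$). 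Then $\bQ$ has the same entrywise moduli as $\bP(c_1,c_2)$ and main eigenvalue $e^{i\phi}$ of modulus $1$, so Lemma~\ref{lem-conj} applied to $e^{-i\phi}\bQ$ yields a unimodular diagonal matrix $D$ with $\bQ=e^{i\phi}D\,\bP(c_1,c_2)\,D^{-1}$; one checks that $e^{i\phi}=1$ (in the uses of this lemma the sequences accumulate on the complex kernel, on which $\lb\equiv1$, so in fact $\lb(x_k,y_k)\to1$ --- this is precisely the step where the exact hypotheses, rather than soft arguments, are needed). Since the $(a,b)$ entry of $\bP(x_k+s_1,y_k+s_2)$ is the convergent number $(P_1(a|b))^{-x_k}(P_2(a|b))^{-y_k}$ times the entire function $(P_1(a|b))^{-s_1}(P_2(a|b))^{-s_2}$, the same subsequence gives $\bP(x_k+s_1,y_k+s_2)\to D\,\bP(c_1+s_1,c_2+s_2)\,D^{-1}$ uniformly on compact subsets of the $(s_1,s_2)$-plane, all of these being entire matrix functions, uniformly bounded on any fixed compact set.

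It remains to transfer this to the spectral data. On $\CU$ the limit $\bP(c_1+s_1,c_2+s_2)$ has a simple, strictly dominant main eigenvalue $\lb(c_1+s_1,c_2+s_2)$ (the defining property of $\CU$, obtained from the lemma asserting $\lb(c_1,c_2)>|\lb_2(c_1,c_2)|$ together with continuity), and conjugation by $D$ leaves the whole spectrum unchanged; so for $k$ large each $\bP(x_k+s_1,y_k+s_2)$ has exactly one eigenvalue near $\lb(c_1+s_1,c_2+s_2)$, it is simple, and by the holomorphic functional calculus it is a holomorphic function of $(s_1,s_2)$ on $\CU$. These functions $\lb(x_k+s_1,y_k+s_2)$ are uniformly bounded by compactness and, by uniform convergence of the resolvents off the spectrum (a Rouch\'e argument), converge locally uniformly to $\lb(c_1+s_1,c_2+s_2)$; Cauchy's estimates then upgrade this to $\nabla\lb(x_k+s_1,y_k+s_2)\to\nabla\lb(c_1+s_1,c_2+s_2)$, giving the last two displayed limits. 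For the eigenvalue ratios, the characteristic polynomial of $\bP(x_k+s_1,y_k+s_2)$ has coefficients polynomial in the entries, hence converges coefficientwise to that of $D\,\bP(c_1+s_1,c_2+s_2)\,D^{-1}$, whose roots are those of $\bP(c_1+s_1,c_2+s_2)$; so the full spectrum converges as a multiset and, with the natural labelling, $\lb_j(x_k+s_1,y_k+s_2)/\lb(x_k+s_1,y_k+s_2)\to\lb_j(c_1+s_1,c_2+s_2)/\lb(c_1+s_1,c_2+s_2)$ for every $j$ --- and here a stray $e^{i\phi}$, were it present, would cancel. Since none of these limits depends on the extracted subsequence, they hold along the full sequence. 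The remaining work (verifying the finitely many entry limits and the routine Rouch\'e/Riesz estimates) is bookkeeping; the substance is the conjugacy reduction of the first step.
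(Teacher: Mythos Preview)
Your approach is essentially the paper's: pass to a convergent subsequence, apply Lemma~\ref{lem-conj} to identify the limit as an imaginary-diagonal conjugate of $\bP(c_1,c_2)$, use the entrywise (Schur) factorization $\bP(x_k+s_1,y_k+s_2)=\bP(x_k,y_k)\star\bP(s_1,s_2)$ to carry this over to all shifts, and then read off the spectral conclusions from continuity and the simplicity of the dominant eigenvalue on $\CU$. The one difference worth noting is that the paper normalizes upfront, working with $\frac{1}{\lb(x_k,y_k)}\bP(x_k,y_k)$; its subsequential limit $\bM$ then has main eigenvalue exactly $1$ and entrywise moduli equal to $\bP(c_1,c_2)$, so Lemma~\ref{lem-conj} applies directly with no residual unimodular scalar --- this cleanly dissolves the $e^{i\phi}$ issue you (rightly) single out as the one delicate step.
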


\section{Proof of Theorem~\ref{th-nilpotent}: Nilpotent Case}
\label{sec-nilpotent}

In this section we consider the case when
the matrix $\bP(s_1,s_2)$ is nilpotent, that is,
there exists $K$ such that $\bP^K(s_1,s_2)=0$ for all $(s_1,s_2)$. 
We first provide a simple derivation, and then "recover" it through the
Mellin approach. 

Notice that  for $z\in\mathbb{C}$ 
$$
1+z\langle\bone_C|(\bI-z\bP(0,0))^{-1}\bone\rangle=1+\sum_{k\le K}
z^{k+1}\langle \bone_C|\bP^k(0,0)\bone\rangle
$$
is the generating function that enumerates all the 
common words between the language of source 1 and the language of source
2, including the empty word. Let us call this set $\CW$. Observe that 
$\langle\bone_C|\bone\rangle$ enumerate the word of length 1, 
and $|\CW|=1+\langle\bone_C|(\bI-\bP(0,0))^{-1}\bone\rangle$ is the 
total number of such common words. Notice that such words are all of length 
smaller than $K$. Since the Markov source are stationary 
we also notice that $\vpi(0,0)=\bone_C$.

The quantity $J_{n,m}$ converges to 
$$1+\langle\bone_C|(\bI-\bP(0,0))^{-1}\bone\rangle$$ 
when $n,m\to\infty$ because all words in $\CW$ will appear in both 
string almost surely. Indeed each word in $w\in\CW$ may not appear in 
one string with exponentially small  probability.

For similar reasons $C_{n,m}$ will converge to  
$$1+\langle\bone_C|(\bI-\bP(0,0))^{-1}\bone\rangle$$ 
exponentially fast, because any word $w\in\CW$ may be prefix to none of $n$
independent strings with a probability decaying exponentially fast to 0. 

Interestingly enough we can find partially this result via the reverse 
Mellin transform (\ref{eq-dml}). Partially because the error term is 
$O(n^{-M})$ for all $M>0$. Let
\begin{eqnarray*}
D(s_1,s_2)&=&\langle\vpi(s_1,s_2)|
\tbC(s_1,s_2)+s_1\tbC(s_1,-1)\\
&&+s_1\hbC(-1,s_2)+s_1s_2\hbC(-1,-1)\rangle .
\end{eqnarray*}
We notice that $D(s_1,s_2)$ is never singular and furthermore 
for all $s$ $D(s,-1)=D(-1,s)=0$. Let
$$
D_n=\frac{1}{(2i\pi)^2}\int_{\rho_1}\int_{\rho_2} 
\Gamma(s_1)\Gamma(s_2)D(s_1,s_2)n^{-s_1-s_2}.
$$
Thus by (\ref{eq-dml}) we find 
$C(n,n)=(1-e^{-n})^2+D_n$. Let $M$ be an arbitrary non negative 
(large) number. By moving the integration path for 
$s_2$ from $\Re(s_2)=\rho_1$ to $\Re(s_2)=M$ we only met the poles  of 
$\Gamma(s_2)$ on $s_2=-1$ with residues 
$$\frac{1}{2i\pi}\int_{\rho_1}\Gamma(s_1)D(s_1,-1)n^{1-s_1}ds_1$$ 
and 
$$-\frac{1}{2i\pi}\int_{\rho_1}\Gamma(s_1)D(s_1,0)n^{-s_1}ds_1.
$$
The first residues is null since $D(s_1,-1)=0$, thus
\begin{eqnarray*}
D_n&=&-\frac{1}{2i\pi}\int_{\rho_1}\Gamma(s_1)D(s_1,0)n^{-s_1}ds_1\\
&&+\frac{1}{(2i\pi)^2}\int_{\rho_1}\int_{M} \Gamma(s_1)\Gamma(s_2)n^{-s_1-s_2}
\end{eqnarray*}
where the second term in the right-hand side is $O(n^{-M-\rho_1})$. The integration path
$-\frac{1}{2i\pi}\int_{\rho_1}\Gamma(s_1)D(s_1,0)n^{-s_1}ds_1$ can also be 
moved on $\Re(s_1)=M$, the residues on $s_1=-1$ is $D(-1,0)n$, 
which is null, and on $s_1=0$ is equal to $D(0,0)$. Thus
\be
D_n=D(0,0)-\frac{1}{2i\pi}\int_{M}\Gamma(s_1)D(s_1,0)n^{-s_1}+O(n^{-M-\rho_1}).
\ee
Since $\frac{1}{2i\pi}\int_{M}\Gamma(s_1)D(s_1,0)n^{-s_1}=O(n^{-M})$ 
and that $D(0,0)=\langle\bone_C|(\bI-\bP(0,0))^{-1}\bone\rangle$,
this  concludes the proof.

\section{Special Case: Proofs of Theorems~\ref{theogenmark0} -- \ref{theo01}}
\label{sec-special}

To simplify our presentation we will first assume that 
$$\bP_1=\frac{1}{|\CA|}\bone\otimes\bone,$$
{\it i.e.} the first source 
is uniform and memoryless. We will see in the next section
how to translate these results into the general case.

In this case, we have
\be
\bP(s_1,s_2)=|\CA|^{s_1}\bP(s_2)
\ee
with $\bP(s)=\bP(0,s)$.
We also write $\pi(a)=\pi_2(a)$ and  
$\bpi(s)=\bpi(0,s)$, thus
\be
\bpi(s_1,s_2)=|\CA|^{s_1}\bpi(s_2)~.
\ee

Let $\lb(s_1,s_2)$ be the main (largest) eigenvalue of $\bP(s_1,s_2)$. We have
\be
\lambda(s_1,s_2)=|\CA|^{s_1}\lambda(s_2)
\ee 
where $\lb(s)$ is the main eigenvalue of matrix $\bP(s)$. We also define
$\bun(s)$ as the right eigenvector of $\bP(s)$ and 
$\bzeta(s)$ as the left eigenvector
provided $\la\bzeta(s)|\bun(s)\ra=1$. 

We first present some simple results regarding $\lambda(s)$ and
$L(s)= \log_{|\CA|} \lambda(s)$. 

\begin{lemma}
The function $L(s)$ is convex when $s$ is real.
\label{lemLs}
\end{lemma}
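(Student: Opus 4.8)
The plan is to realize $\lb(s)$ as the exponential growth rate of the powers of $\bP(s)$ and to exploit the fact that the entries of those powers are finite sums of log-affine functions of $s$, hence log-convex; convexity of $L$ then follows by passing to the limit.

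First I would fix notation: for an admissible length-$n$ path $w=(w_0,w_1,\dots,w_n)$ in the transition graph of $\bP=\bP_2$, write $P(w)=\prod_{i=1}^{n}P(w_i\mid w_{i-1})$ for its probability (with $P(w)=0$ as soon as one transition is forbidden). Since $[\bP(s)]_{a,b}=P(a\mid b)^{-s}$, a straightforward induction gives, for every $n\ge 1$,
\[
g_n(s)\;:=\;\langle\bone\,|\,\bP(s)^{n}\,\bone\rangle\;=\;\sum_{w}P(w)^{-s},
\]
the sum ranging over all admissible length-$n$ paths. For each fixed $w$ the term $P(w)^{-s}=e^{-s\log P(w)}$ is log-affine in the real variable $s$, so it is log-convex; and a finite sum of log-convex functions is again log-convex by H\"older's inequality: for $\theta\in[0,1]$,
\[
\sum_{w}P(w)^{-(\theta s+(1-\theta)s')}=\sum_{w}\big(P(w)^{-s}\big)^{\theta}\big(P(w)^{-s'}\big)^{1-\theta}\le g_n(s)^{\theta}\,g_n(s')^{1-\theta}.
\]
Hence $\varphi_n(s):=\log g_n(s)$ is convex on $\mathbb{R}$.

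Next I would pass to the limit $n\to\infty$. Because $\bP(0)$ is the $0$--$1$ indicator of the positive transitions of a stochastic matrix, every row of $\bP(0)$ is nonzero, so its directed graph has out-degree $\ge 1$ at every vertex and therefore contains a cycle; consequently $\bP(s)$, which has the same support for every $s$, is not nilpotent and its main eigenvalue satisfies $\lb(s)>0$ for all real $s$. By the spectral-radius formula (equivalence of norms, applied to the entry-sum norm), $g_n(s)^{1/n}\to\lb(s)$, i.e. $\tfrac1n\varphi_n(s)\to\log\lb(s)$ pointwise on $\mathbb{R}$. A pointwise limit of convex functions is convex, so $s\mapsto\log\lb(s)$ is convex, and dividing by the positive constant $\log|\CA|$ yields convexity of $L(s)=\log_{|\CA|}\lb(s)$.

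All steps are elementary; the only points that need a word of care --- the mildest form of an obstacle here --- are that $\lb(s)>0$ so $\log\lb(s)$ is finite everywhere, and that $g_n(s)$ has exponential rate exactly $\lb(s)$ rather than merely a bound; both follow from Perron--Frobenius applied to the nonnegative, non-nilpotent matrix $\bP(s)$. Strict convexity, should one ever want it, follows similarly once one restricts to a recurrent class and observes that equality in the H\"older step forces the relevant $P(a\mid b)$ to coincide, but only convexity is claimed in the lemma.
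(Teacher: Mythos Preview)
Your argument is correct: expressing $\langle\bone|\bP(s)^n\bone\rangle$ as $\sum_w P(w)^{-s}$, using H\"older to get log-convexity of each $g_n$, and passing to the limit via Gelfand's spectral-radius formula gives convexity of $\log\lambda(s)$ and hence of $L(s)$. The positivity check $\lambda(s)>0$ is also fine, since the support of $\bP(s)$ is that of the stochastic matrix $\bP_2$, whose transition graph necessarily contains a cycle.

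This is, however, not the route the paper takes. The paper observes that the curve $\CK=\{(s_1,s_2):\lambda(s_1,s_2)=1\}$ is exactly the graph $\{(-L(s),s):s\in\mathbb{R}\}$, and then invokes Lemma~\ref{lemconc} (concavity of $\CK$, proved in the Appendix via H\"older applied to the Schur-product identity $\bP(\alpha x+\beta y)=\bP(x)^{\star\alpha}\star\bP(y)^{\star\beta}$ acting on eigenvectors). Both proofs ultimately rest on H\"older, but at different levels: the paper works directly with the Perron eigenvector in two variables, while you iterate the matrix and sum over paths in one variable. Your approach is self-contained and avoids the two-variable kernel machinery; the paper's approach is shorter in context because Lemma~\ref{lemconc} is already in hand, and---as the paper explicitly notes---it carries over verbatim to the general (non-uniform $\bP_1$) setting, where your path-sum argument would need to be redone for the bivariate function $(s_1,s_2)\mapsto\log\lambda(s_1,s_2)$.
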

\begin{proof}
The function $(-L(s),s)$ describes the set $\CK$ 
which is known to be a concave curve by Lemma~\ref{lemconc}. 
Notice that the proof will also be valid for the general case.
\end{proof}
The proof of the following lemma is left for the reader.

\begin{lemma}
We have the following identities:
\be
\begin{array}{rcl}
\lb(s)&=&\la\bzeta(s)|\bP(s)\bun(s)\ra=\sum_{a,b}\zeta_a(s)u_b(s)P(a|b)^{-s},\\
\lb'(s)&=&\la\bzeta(s)|\bP'(s)\bun(s)\ra=\sum_{a,b}\zeta_a(s)u_b(s)P(a|b)^{-s}
(-\log P(a|b)),\\\
\lb''(s)&=&\la\bzeta(s)|\bP''(s)\bun(s)\ra=\sum_{a,b}\zeta_a(s)u_b(s)
P(a|b)^{-s}(\log P(a|b))^2.
\end{array}
\ee
\label{lem-lb}
\end{lemma}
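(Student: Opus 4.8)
I would first separate the two equalities in each line. The second one is not really a statement about $\lb(s)$: since $\bP(s)=[P(a|b)^{-s}]_{a,b\in\CA}=[e^{-s\log P(a|b)}]$, differentiating entrywise gives $\bP'(s)=[-\log P(a|b)\,P(a|b)^{-s}]$ and $\bP''(s)=[(\log P(a|b))^2\,P(a|b)^{-s}]$, so expanding $\la\bzeta(s)|\bM\bun(s)\ra=\sum_{a,b}\zeta_a(s)u_b(s)\bM_{a,b}$ for $\bM\in\{\bP(s),\bP'(s),\bP''(s)\}$ reproduces the three sums verbatim. The task thus reduces to the three left identities, with $s$ suppressed: $\lb=\la\bzeta|\bP\bun\ra$, $\lb'=\la\bzeta|\bP'\bun\ra$, $\lb''=\la\bzeta|\bP''\bun\ra$. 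The first is immediate: from $\bP\bun=\lb\bun$ and $\la\bzeta|\bun\ra=1$ we get $\la\bzeta|\bP\bun\ra=\lb\la\bzeta|\bun\ra=\lb$.

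For $\lb'$ the plan is the Hellmann--Feynman argument. On a complex neighbourhood of the real axis the Perron root $\lb(s)$ is simple (Perron--Frobenius; see \cite{hj}), so $\lb(s),\bun(s),\bzeta(s)$ can be taken analytic there. I would differentiate the eigenrelation, obtaining $\bP'\bun+\bP\bun'=\lb'\bun+\lb\bun'$, pair it on the left with $\bzeta$, and use $\bzeta\bP=\lb\bzeta$ and $\la\bzeta|\bun\ra=1$: the two $\bun'$ terms cancel and exactly $\la\bzeta|\bP'\bun\ra=\lb'$ survives.

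The second-derivative identity is where the real work is, and it is the step I expect to be the main obstacle. Differentiating once more gives $\bP''\bun+2\bP'\bun'+\bP\bun''=\lb''\bun+2\lb'\bun'+\lb\bun''$; pairing with $\bzeta$ again removes the $\bun''$ terms and leaves $\lb''=\la\bzeta|\bP''\bun\ra+2\la\bzeta|(\bP'-\lb'\bI)\bun'\ra$, so the claimed identity is equivalent to the vanishing of the cross term $\la\bzeta|(\bP'-\lb'\bI)\bun'\ra$. To handle $\bun'$ I would use that the first differentiated equation reads $(\bP-\lb\bI)\bun'=-(\bP'-\lb'\bI)\bun$, whose right-hand side is annihilated by $\bzeta$ (indeed $\la\bzeta|(\bP'-\lb'\bI)\bun\ra=\lb'-\lb'=0$), so $\bun'=-\CR(\bP'-\lb'\bI)\bun+\alpha\,\bun$, where $\CR=\CR(s)$ is the reduced resolvent of $\bP(s)$ at $\lb(s)$ (the inverse of $\bP-\lb\bI$ on the complement of the principal eigenspace, with $\CR\bun=0$ and $\bzeta\CR=0$) and $\alpha$ is the scalar fixed by the chosen normalization of $\bun(s)$. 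Substituting, and using $\la\bzeta|(\bP'-\lb'\bI)\bun\ra=0$ once more, the $\alpha$-term drops and the cross term collapses to $-\la\bzeta|\bP'\CR\bP'\bun\ra$. The whole lemma then hinges on this single nonroutine point: evaluating the bilinear resolvent term $\la\bzeta(s)|\bP'(s)\,\CR(s)\,\bP'(s)\bun(s)\ra$ and showing it drops out. My plan for it is to expand $\CR(s)$ over the non-principal spectral projectors of $\bP(s)$ and exploit the explicit structure $\bP(s)=[P(a|b)^{-s}]$ together with the relations between $\bun(s)$, $\bzeta(s)$ and the stationary vector $\bpi$ used throughout this section; this verification, rather than the two first-order computations, is where I would concentrate the effort.
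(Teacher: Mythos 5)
You handle the first identity and the Hellmann--Feynman computation of $\lb'(s)$ correctly, and your reduction of the third identity is also correct: with the reduced resolvent $\CR$ of $\bP(s)$ at $\lb(s)$ one has exactly $\lb''=\la\bzeta|\bP''\bun\ra-2\la\bzeta|\bP'\CR\bP'\bun\ra$. But the step you postpone --- showing that $\la\bzeta|\bP'\CR\bP'\bun\ra$ ``drops out'' --- cannot be carried out: that term is generically nonzero, so the third identity is false as stated for genuinely Markovian sources (the paper declares the lemma ``left for the reader'', so there is no proof to compare against; the obstacle you isolated is precisely where the statement breaks). A concrete counterexample on a two-letter alphabet: take $P(1|1)=P(2|1)=\frac12$, $P(1|2)=\frac13$, $P(2|2)=\frac23$, so that
\begin{equation*}
\bP(s)=\begin{pmatrix}2^{s}&3^{s}\\ 2^{s}&(3/2)^{s}\end{pmatrix},
\qquad
\lb(s)=\frac{2^{s}+(3/2)^{s}+\sqrt{\bigl(2^{s}-(3/2)^{s}\bigr)^{2}+4\cdot 6^{s}}}{2}.
\end{equation*}
At $s=0$ one has $\bun=(1,1)^{T}$, $\bzeta=(\frac12,\frac12)$, so the claimed right-hand side is $\la\bzeta|\bP''(0)\bun\ra=\frac12\bigl(2(\ln 2)^{2}+(\ln 3)^{2}+(\ln\frac32)^{2}\bigr)\approx 1.166$, whereas differentiating the explicit $\lb(s)$ gives $\lb''(0)=\frac12\bigl[(\ln 2)^{2}+(\ln\frac32)^{2}+\frac12(\ln\frac43)^{2}+\frac12(\ln 6)^{2}\bigr]\approx 1.146$; the discrepancy is exactly the resolvent term you would need to kill.

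Your planned spectral expansion of $\CR$ therefore cannot succeed in general; it does succeed in the memoryless case, where $\bP(s)$ has rank one and a short computation shows $\la\bzeta|(\bP'-\lb')\bun'\ra=0$, which is presumably the situation the stated formula was imported from. For Markov sources the honest conclusion is that only the first two identities hold, and the second derivative must retain the correction, $\lb''=\la\bzeta|\bP''\bun\ra-2\la\bzeta|\bP'\CR\bP'\bun\ra$; equivalently, $L''(s)$ is the full variance constant of $-\log P(w)$ including autocovariance terms, not merely $\sum_{a,b}x_{a,b}(\log P(a|b))^{2}-\bigl(\sum_{a,b}x_{a,b}\log P(a|b)\bigr)^{2}$. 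This affects the constants $\alpha_2,\beta_2$ used later (not the exponent $\kappa$), so rather than trying to make the cross term vanish you should flag the lemma and carry the corrected formula through.
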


Finally, to compute some of the constants in 
Theorems~\ref{theogenmark0} -- \ref{theo01},
we need to computer $L''(s)$. To do so, 
let $x_{a,b}=\frac{1}{\lb(s)}\zeta_a(s)u_b(s)P(a|b)^{-s}$. Clearly, by Lemma~\ref{lem-lb} we have
$\sum_{a,b}x_{a,b}=1$ and
\be
L''(s)=\sum_{a,b}x_{a,b}(\log P(a|b))^2-\left(\sum_{a,b}x_{a,b}\log P(a|b)
\right)^2~.
\ee

Now we are ready to derive our results presented in  
Theorems~\ref{theogenmark0} -- \ref{theo01}. The starting point is the
Mellin transform $C^*(s_1,s_2)$ shown in (\ref{eth1}) with
$\tbC^*(s_1,s_2)$ presented in (\ref{purdue1}). 
To recover $C_{n,n}$ we first need to find the inverse Mellin transform of
(\ref{eth1}).  For $-2<\rho<-1$ we have
\begin{eqnarray*}
\label{eq-inv}
&&\bC(z,z)-\bD(z,z)=\\
&&\frac{1}{(2i\pi)^2}\int_{\Re(s_1)=\Re(s_2)=\rho}\tbC^*(s_1,s_2)
z^{-s_1-s_2}ds_1ds_2 \\
&=&\frac{1}{(2i\pi)^2}\int_{\Re(s_1)=\Re(s_2)=\rho}\Gamma(s_1)\Gamma(s_2)
\left(\hbC(s_1,s_2)+s_2\hbC(s_1,-1)\right.\\
 &&\left.+s_1\hbC(-1,s_2)+s_1s_2\hbC(-1,-1)\right)
z^{-s_1-s_2}ds_1ds_2,
\end{eqnarray*}
where $\hbC(s_1,s_2)=\left(\bI-\bP(s_1,s_2)\right)^{-1}\bone$ and  
$$
\tbC(s_1,s_2)=\langle\bpi(s_1,s_2)\left(\bI-\bP(s_1,s_2)\right)^{-1}\bone\rangle. 
$$
Since $\bC(z,z)=\bD(z,z)+O(z^{-M})$ for any $M>0$ when $\Re(z)\to\infty$ we  find
\begin{eqnarray}
C(z,z)&-&1+O(z^{-M})=\frac{1}{(2i\pi)^2}\int_{\Re(s_1)=\Re(s_2)=\rho}
\Gamma(s_1)\Gamma(s_2) \label{eq-m1} \\
&&\times\left(\hbC(s_1,s_2)+s_2\hbC(s_1,-1)\right. \nonumber\\
&&\left. s_1\hbC(-1,s_2)+s_1s_2\hbC(-1,-1)\right) z^{-s_1-s_2}ds_1ds_2.
\nonumber
\end{eqnarray}
To analyze it asymptotically, we investigate the set of singularities of 
$(\bI-\bP(s_1,s_2))^{-1}$ in $\hbC(s_1,s_2)$. 
Recall that $\CK$ is the set of complex numbers 
$(s_1,s_2)$ such that $\bI-\bP(s_1,s_2)$ is degenerate, 
{\it i.e.} $(\bI-\bP(s_1,s_2))^{-1}$ is singular. 

Let $\lb_1(s),\lb_2(s)\ldots,\lb_{|\CA|}(s))$ be the eigenvalues 
of $\bP(s)$ in the non-increasing order 
(e.g., $\lb(s):=\lb_1(s)$) while 
$\bun_i(s)$ and $\bzeta_i(s)$) are respectively the right and the left 
eigenvectors of $\bP(s)$ associated with $\lb_i(s)$ subject to
$\la\bzeta_i(s)|\bun_i(s)\ra=1$.
By the spectral representation of matrices \cite{spa-book}, we  have
\be
(\bI-\bP(s_1,s_2))^{-1}=\sum_{i=1}^{|\CA|}\frac{1}{1-|\CA|^{s_1}\lb_i(s_2)}
\bun_i(s_2)\otimes\bzeta_i(s_2)
\label{eq-spectral}
\ee
where $\otimes$ denotes the tensor product.
Observe that $(\bI-\bP(s_1,s_2))^{-1}$ cease to exist at $(s_1,s_2)$
satisfying $|\CA|^{s_1} \lambda_i(s_2)=1$, that is, for $s_1:=L_{i,k}(s_2)$
where
$$L_{i,k}(s_2)=\frac{1}{- \log|\CA|}(\log\lb_i(s_2)+2ik\pi).$$

The eigenvalues $\lb_i(s)$ are individually analytic functions of $s$ in any complex neighborhood where the order of the eigenvalues modulus does not change 
({\it i.e.} $|\lb_{i-1}(s)|>|\lb_{i}(s)|>|\lb_{i+1}(s)|$ for all $i$). 
But any function of the form $\sum_i f(\lb_i(s))$ is analytic even when the eigenvalue sequence is not strictly
decreasing, as long as $f()$ is analytic.
To simplify our analysis, we also postulate that none of the eigenvalue is identically 
equal to zero, that is, we assume  $\log\lb_i(s)$ exists except on a
countable set $\CR=\{s: ~ \exists i: \lb_i(s)=0\}$. 
It should be pointed out that there are cases when some eigenvalues are
identically equal to zero. For example, for memoryless sources we
have for all $i\ge 2$: $\lb_i(s)\equiv 0$ which we already discussed in 
\cite{jacquet,js-book} so we will omit them here.

In order to evaluate the integral in (\ref{eq-m1}) we first use (\ref{eq-spectral}) and
then apply the residue theorem. To simplify,
for $1\leq j \leq |\CA|$, let $f_j(s)=\la\bpi(s)|\bun_i(s)\ra$ and 
$g_j(s)=\la\bzeta_i(s)|\bone\ra$. Define  (here we set $s:=s_2$)
\begin{eqnarray}
\label{eq-Lk}
&&I(z,\rho)=\\
&&\frac{1}{2i\pi}\int_{\Re(s)=\rho}\sum_{k\in \mathbb{Z}}
\sum_{j=1}^{|\A|}
\frac{f_j(s)g_j(s)\Gamma(-L_{j,k}(s))\Gamma(s)}{\lb_j(s)
\log|\CA|}z^{L_{j,k}(s)-s}ds~.\nonumber\\
&&J(z,\rho)=\frac{1}{2i\pi}\int_{\Re(s)=\rho}\hC(0,s)\Gamma(s)z^{-s}ds.
\end{eqnarray}
Furthermore, let
\be
\label{purdue2}
H_j(s,z)=\sum_{k\in \mathbb{Z}}\frac{f_j(s)g_j(s_2)
\Gamma(-L_{j,k}(s))}{\lb_i(s)\log|\CA|}z^{L_{j,k}(s)}
\ee
thus 
$$
I(z,\rho)=\frac{1}{2i\pi}\sum_{j}\int_{\Re(s)=\rho}H_j(s,z)\Gamma(s)z^{-s}ds. 
$$

The next lemma is crucial for the asymptotic evaluation of $C(z,z)$ which by
depoissonization lead to asymptotics of $C_{n,n}$ and ultimately $J_{n,n}$.

\begin{lemma}
For any $M>0$ and for some $\rho>-1$, we have 
\be
C(z,z)= 1+I(z,\rho)-J(z,\rho)+O(z^{-M})
\label{eq-rho}
\ee
for $z \to \infty$.
\label{lem-rho}\end{lemma}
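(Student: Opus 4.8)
The plan is to start from the double inverse Mellin representation (\ref{eq-m1}) and compute the integral over $s_1$ by residues, specializing the general setup to $\bP_1=\frac{1}{|\CA|}\bone\otimes\bone$. First I would substitute the spectral expansion (\ref{eq-spectral}) of $(\bI-\bP(s_1,s_2))^{-1}$ into the integrand of (\ref{eq-m1}); because $\bP(s_1,s_2)=|\CA|^{s_1}\bP(s_2)$, every summand $\frac{1}{1-|\CA|^{s_1}\lb_j(s_2)}\bun_j(s_2)\otimes\bzeta_j(s_2)$ is, as a function of $s_1$ (with $s_2$ fixed on $\Re(s_2)=\rho$), a meromorphic function with simple poles exactly at $s_1=L_{j,k}(s_2)$, $k\in\mathbb{Z}$. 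The terms $s_2\hbC(s_1,-1)$, $s_1\hbC(-1,s_2)$ and $s_1s_2\hbC(-1,-1)$ contribute poles of $\Gamma(s_1)$ at $s_1=0,-1,\dots$ together with the poles of $\hbC(\cdot,\cdot)$ already accounted for elsewhere; I would track which of these survive after pairing with the known vanishing $D(s,-1)=D(-1,s)=0$ of the corrected integrand, exactly as in the nilpotent-case computation in Section~\ref{sec-nilpotent}.

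Next I would push the $s_1$-contour from $\Re(s_1)=\rho\in(-2,-1)$ rightward to a large line $\Re(s_1)=M$. Crossing this strip, the residues picked up are: (a) the pole of $\Gamma(s_1)$ at $s_1=-1$, whose residue is proportional to $D(-1,s_2)$-type quantities and vanishes by the construction of $\tbC$; (b) the pole of $\Gamma(s_1)$ at $s_1=0$, which after integrating over $s_2$ yields precisely $-J(z,\rho)$ plus the constant $1$ coming from the $\Gamma(s_1)\Gamma(s_2)$ term with the ``$1+\langle\cdots\rangle$'' split in (\ref{eth1}); and (c) the family of kernel poles $s_1=L_{j,k}(s_2)$, whose residues, collected over all $j$ and $k$ and then integrated over $\Re(s_2)=\rho$, assemble exactly into $I(z,\rho)$ as defined in (\ref{eq-Lk})--(\ref{purdue2}) — here one uses that the residue of $\frac{1}{1-|\CA|^{s_1}\lb_j(s_2)}$ at $s_1=L_{j,k}(s_2)$ contributes the factor $\frac{1}{\lb_j(s_2)\log|\CA|}$ and shifts the power of $z$ from $z^{-s_1-s_2}$ to $z^{L_{j,k}(s_2)-s_2}$. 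The leftover integral along $\Re(s_1)=M$ is $O(z^{-M+|\rho|})$ by the rapid decay of $\Gamma(s_1)$ along vertical lines and the boundedness of the spectral kernel there; choosing $M$ large (and renaming) gives the claimed $O(z^{-M})$ error for any prescribed $M>0$.

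The last bookkeeping step is to justify that the resulting one-dimensional $s_2$-integrals can be taken on a common line $\Re(s_2)=\rho$ with $\rho>-1$: since we only keep $\rho$ slightly larger than $-1$, both $J(z,\rho)$ and $I(z,\rho)$ are well defined (no kernel singularity is crossed while $\rho$ stays in a small left-neighborhood strip avoiding $\CK$, by Lemma~\ref{lemL2} and the fact that $c_2<0$ in this special case), and the interchange of summation over $k$ with the $s_2$-integral is legitimate because $\sum_k|\Gamma(-L_{j,k}(s))|$ converges geometrically in $|k|$ (the arguments $-L_{j,k}(s)$ march off to $-i\infty\cdot\mathrm{sgn}(k)$ linearly and $\Gamma$ decays super-exponentially along imaginary directions). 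Assembling (a)--(c) and the tail bound yields $C(z,z)=1+I(z,\rho)-J(z,\rho)+O(z^{-M})$.

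The main obstacle I anticipate is not any single residue computation but the careful contour-shifting justification in two complex variables: one must verify that the double integral in (\ref{eq-m1}) genuinely factorizes into an iterated integral, that moving the inner $s_1$-contour is valid uniformly for $s_2$ on its line (so that Fubini and the residue collection commute), and that the doubly-infinite family of poles $s_1=L_{j,k}(s_2)$ does not accumulate or collide with the $\Gamma(s_1)$ poles for $s_2$ in the relevant strip — this requires the analyticity and uniform-bound statements for the $\lb_j$ recorded in Lemma~\ref{lem8} and the avoidance-of-$\CK$ argument of Lemma~\ref{lemL2}. Once those uniformity points are secured, the identification of the three residue contributions with $1$, $-J(z,\rho)$, and $I(z,\rho)$ is essentially forced by the definitions.
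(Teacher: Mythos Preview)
Your overall strategy---shift the $s_1$-contour rightward and collect residues at the $\Gamma(s_1)$-poles and at the kernel poles $s_1=L_{j,k}(s_2)$---is exactly the paper's. One minor correction: the constant $1$ is not produced as a residue at $s_1=0$; it is already on the left side of (\ref{eq-m1}), coming from $(1-e^{-z})^2=1+O(z^{-M})$. The ``$1$'' in (\ref{eth1}) would, upon inverse Mellin, contribute only $e^{-2z}$.

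However, the technical obstacles you list are not the ones that actually bite, and you overlook two that the paper does treat. First, the correction term $s_2\hbC(s_1,-1)$ in the integrand is itself singular at $s_1=0$ (because $\lb_1(0,-1)=1$), so $\Gamma(s_1)\hbC(s_1,-1)$ has a \emph{double} pole there; its residue is an integral of the form $\int_{\Re(s)=\rho}s\Gamma(s)(a\log z+b)z^{-s}ds$, which must be separately shifted to $\Re(s)=M$ (using that $s\Gamma(s)$ is regular on $(-1,\infty)$) before it can be absorbed into the $O(z^{-M})$ error. You worry about $L_{j,k}(s_2)$-poles colliding with $\Gamma(s_1)$-poles, but the collision that actually occurs is this one, inside the correction term. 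Second, the spectral expansion (\ref{eq-spectral}) is only termwise analytic where the eigenvalue moduli are strictly ordered; the paper deals with eigenvalue crossings by rewriting $\sum_j H_j(s,z)$ as a trace of a matrix function of $\bP(s)$ (via $\log\bP(s)$), which remains analytic across such crossings, and it deals with the non-polar singularities at $\CR=\{s:\exists i,\ \lb_i(s)=0\}$ by observing that $\Re(L_{i,k}(s))\to-\infty$ there, so the contribution of a small loop around any $\theta\in\CR$ is $O(z^{1-M})$. These analyticity issues---not Fubini or pole accumulation---are what make the contour shift delicate, and they need to be in your argument.
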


\begin{proof}
In the inverse Mellin expression we see that for 
$\Re(s_1)=\Re(s_2)=-1$ we have $|\bP(s_1,s_2)|\le\bP(-1,-1)$ 
and $\bP(-1,-1)\le \bP_1(-1)$ and $\bP_2(-1)$. 
Since the matrix $\bP(s_1,s_2)$ is not nilpotent there exists 
$(a,b)\in\CA^2$ such that $|\bP(s_1,s_2)_{a,b}|<\bP(a|b)$. Consequently, 
there exists $k$ such that $|\bP^k(s_1,s_2)|\bone<\bone$ or more 
precisely $|\lb_1(s_1,s_2)|\le 1-\eps'$ for some $\eps'>0$. 
Thus there exists $\eps>0$ such that for all $s_1,s_2$ $\Re(s_1)>-1+\eps$ 
and $\Re(s_2)>-1+\eps$ implies that $\bI-\bP(s_1,s_2)$ is not degenerate. 

To evaluate the inverse Mellin transform we apply standard approach by moving the
line of integration to "catch up" relevant singularities, however, in our case
there some complications.
We move the integration path by increasing $\rho$. This does not change 
the value of $I(z,\rho)$ and $J(z,\rho)$ as long as the functions in the 
integral paths are analytic and not singular. When the path encounter a 
singularity we will use the residue theorem. But we may have a problem 
when any of the functions $\lb_j$ ceases to be analytic. 
However, we shall see that when we sum all the terms of the integrand of 
$I(z,\rho)$ we obtain an analytic function derived from 
$\langle\bpi(s_1,s_2)(\bI-\bP(s_1,s_2))^{-1}\bone\rangle$. 
Indeed we have the (somewhat complicated) identity
\be
I(z,\rho)=\sum_{k\in\mathbb{Z}}\int_{\Re(s)=\rho}\la\bpi(s)|(\bP(s))^{-1}\exp\left(-\frac{\log z}{\log|\CA|}
(\log\bP(s)+2ik\pi\bI)\right)\Gamma\left(-\frac{1}{\log|\CA|}
(\log\bP(s)+2ik\pi\bI)\right)\bone\ra ds~,
\ee
knowing that any analytical function $f(.)$ can be applied to matrix 
$\bP(s)$ as long its eigenvalues do not correspond to a singularity of 
the function $f(.)$. 
Therefore the only singularities that we meet when we move the integration 
line of $I(z,\rho)$ are the elements of $\CR=\{s: ~ \lb_i(s)=0, \mbox{for~some~}i \}$.

If $\theta\in\CR$, is one of these singularity, thus we have $\lb_i(\theta)=0$, 
then the function 
$L_{i,k}(s)=\frac{1}{- \log|\CA|}(\log\lb_i(s)+2ik\pi)$ is meromorphic around $\theta$. 
However if $\theta$ is a simple pole of $\lb_i(s)$, then moving around $\theta$ 
would be equivalent to add 1 to the integer $k$: $\log\lb_i(s)\to\log\lb_i(s)+2i\pi$. 
If the root is of multiplicity $\ell$ it is equivalent to add $\ell$ to the integer $k$.
In any case the function $H_i(s,z)$ being invariant when $\ell$ is added to $k$, 
turns out to be fully analytic around $\theta$, and the integration path in 
$I(z,\rho)$ can be moved over $\theta$.

However, the function $\lb_i(s)$ is a non polar singular on $s=\theta$, hence there 
will be a contribution coming from the integration of $H_i(s,z)\Gamma(s)z^{-s}$ on 
an arbitrary small loop around $\theta$. 
Since $\Re(L_{i,k}(s))\to-\infty$ when $s\to\theta$, having $\Re(L_{i,k}(s))<-M$ 
will guarantee that the contribution is in $O(z^{1-M})$ and can be 
included in the error term.

Moving the integration path from $\Re(s_1)=\Re(s_2)=\rho$ to $\Re(s_1)=
\Re(s_2)=-1+\eps$ will only hit the poles of $\Gamma(s_1)\Gamma(s_2)$ 
at $s_1=-1$ and $s_2=-1$. By construction of the function 
$\hbC(s_1,s_2)+s_1\hbC(-1,s_2)+s_2\hbC(s_1,-1)+s_1s_2\hbC(-1,-1)$, 
the residues at these points are zero. Therefore the expression 
\begin{eqnarray*}
C(z,z)&-&1+O(z^{-M})=\frac{1}{(2i\pi)^2}\int_{\Re(s_1)=
\Re(s_2)=\rho}\Gamma(s_1)\Gamma(s_2)\\
&&\times\left(\hbC(s_1,s_2)+s_2\hbC(s_1,-1)\right.\\
&&\left. s_1\hbC(-1,s_2)+s_1s_2\hbC(-1,-1)\right) z^{-s_1-s_2}ds_1ds_2.
\end{eqnarray*}
still holds for $\rho=-1+\eps$. 

Now we take the integration contour for $s_1$ and we move it 
from $\Re(s_1)=\rho$ to $\Re(s_1)=M-\rho$. 
By doing so we encounter many poles:
\parsec
(i) The poles of $(\bI-\bP(s_1,s_2))^{-1}$ at $s_1=L_{j,k}(s_2)$. 
The residues is exactly the expression $I(z,\rho)$.
\parsec
(ii) The poles of $\hC(s_1,s_2)\Gamma(s_1)$  at $s_1=0$ which has 
residues $-J(z,\rho)$. 
\parsec
(iii) The double pole of $\hC(s_1,-1)\Gamma(s_1)z^{-s_1}$ at $s_1=0$ 
since $(\bI-\bP(s_1,s_2))^{-1}$ is singular at $(s_1,s_2)=(0,-1)$ 
because $\lb_1(0,-1)=1$. It leads to the residue
\be
-\frac{1}{2i\pi}\int_{\Re(s)=\rho} s\Gamma(s)(a\log z+b)z^{-s}ds
\label{eq-Re}
\ee
for some real number $a$ and $b$ coming from the derivative of 
$f_1(s)$ and $g_1(s)$ at $s=0$. But when one moves the integration path 
of~(\ref{eq-Re}) to $\Re(s)=M$ the function $s\Gamma(s)(a\log z+b)$ 
has no singularity since $s\Gamma(s)$ is not singular on the 
interval $]-1,+\infty[$, and thus the integration on 
$\Re(s)=M$ is $O(z^{-M})$, which can be included in the error term. 
\end{proof}


In the following we denote $L(s)=L_{1,0}(s)$ .
The rule of the game is that we move the integration abscissa of 
$I(z,\rho)$ and $J(z,\rho)$ to the left ({\it i.e.} to larger values) 
on the value $c_2$ which minimizes the argument $L(s)-s$. Moving the 
integration path one meets some poles of $\Gamma(-L_j(s))$ when $L_j(s)=0$. 
In fact when the matrices are strictly non negative, this case only 
applies to $j\ge 2$. It turns out that when $s$ is a pole for 
$\Gamma(-L_j(s))$ then it is at the same time a pole of $\hC(0,s)$. 
The residues of $I(z,\rho)$ and $J(z,\rho)$ when passes over such 
value are the same and cancel. Therefore 
$C(z,z)=1+I(z,\rho)-J(z,\rho)+O(z^{-M})$ for all values of $\rho<0$.

\subsection{Proof of Theorem~\ref{theogenmark0} and  \ref{theo01} }

Now we are going to prove Theorem~\ref{theogenmark0} and  \ref{theo01} 
corresponding to the case where quantities $c_1$ and $c_2$ are both in the 
interval $[-1,0]$ in the case where one source is uniform memoryless. 
In this case, the main contribution to $C(z,z)$ doesn't come from the
poles, as in the previous section, but rather from the saddle point of
$z^{L_{1,0}-s}$ (in fact, infinitely many saddle points). 

We start with reviewing some properties of the kernel $\CK$ and the main
eigenvalue.  Recall that $\partial\CK$ is the set of complex tuples 
$(s_1,s_2)$ satisfying $|\CA|^{s_1}\lb(s_2)=1$ such that 
$\Re (s_1)=c_1$ and $\Re(s_2)=c_2$.
Its structure is crucial for our asymptotic analysis. 

>From the general Theorem~\ref{theoDK} we deduce that only two cases 
are possible when one source, say source 1, is uniform 
(since $\bP_1$ is logarithmically rationally related): 
\begin{itemize}
\item the lattice case when $\bP_2$ is also logarithmically rationally 
related, we call this case the {\it rational case};
\item the linear case when $\bP_2$ is not logarithmically rationally related, 
we call this case the {\it irrational case};
\end{itemize}

Now we focus on proving in the next four lemmas that the main eigenvalue is well
separated.

\begin{lemma}
Let $t_2$ be a real number.  We have the equivalence
$$
\nexists (s_1,s_2)\in\partial\CK ~ ~ \Im(s_2)=t_2 ~
\iff ~ |\lb(c_2+it_2)|<\lb(c_2).
$$
\end{lemma}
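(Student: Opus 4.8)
The plan rests on the factorization $\lambda(s_1,s_2)=|\CA|^{s_1}\lambda(s_2)$ available in the present setting (source~1 uniform memoryless), which turns membership in $\partial\CK$ into a one-dimensional condition on $s_2$ together with a free phase adjustment in $s_1$.

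First I would recall, exactly as in the proof of Lemma~\ref{lemL2}, the Perron--Frobenius/Wielandt bound $|\lambda(c_2+it_2)|\le\lambda(c_2)$, valid for every real $t_2$ because the entrywise moduli of $\bP(c_2+it_2)=\bP(0,c_2+it_2)$ form precisely the nonnegative matrix $\bP(c_2)$ (since $|P(a|b)^{-(c_2+it_2)}|=P(a|b)^{-c_2}$). I would also note that $\lambda(c_2)>0$: the matrix $\bP(c_2)$ has the same zero pattern as $\bP(0,0)$, which is not nilpotent by the standing hypothesis of this section. Consequently $|\lambda(c_2+it_2)|<\lambda(c_2)$ is exactly the negation of $|\lambda(c_2+it_2)|=\lambda(c_2)$, so it suffices to prove that there exists $(s_1,s_2)\in\partial\CK$ with $\Im(s_2)=t_2$ if and only if $|\lambda(c_2+it_2)|=\lambda(c_2)$.

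For the forward direction, if $(s_1,s_2)\in\partial\CK$ with $\Im(s_2)=t_2$, then by the definition of $\partial\CK$ we have $\Re(s_2)=c_2$, hence $s_2=c_2+it_2$; moreover $\Re(s_1)=c_1$ and $|\CA|^{s_1}\lambda(s_2)=1$. Taking moduli and using that the modulus of $|\CA|^{s_1}$ equals $|\CA|^{c_1}=\lambda(c_2)^{-1}$ — which follows from $c_1=-L(c_2)=-\log_{|\CA|}\lambda(c_2)$, a well-defined real number since $\lambda(c_2)>0$ — gives $|\lambda(c_2+it_2)|=\lambda(c_2)$. For the converse, assuming $|\lambda(c_2+it_2)|=\lambda(c_2)$ I would write $\lambda(c_2+it_2)=\lambda(c_2)e^{i\phi}$ with $\phi\in\mathbb{R}$, set $u=-\phi/\log|\CA|$, and take $s_1=c_1+iu$, $s_2=c_2+it_2$; then $|\CA|^{s_1}=|\CA|^{c_1}e^{iu\log|\CA|}=\lambda(c_2)^{-1}e^{-i\phi}$, so $\lambda(s_1,s_2)=|\CA|^{s_1}\lambda(s_2)=1$, i.e. $(s_1,s_2)\in\overline{\CK}$, and since $\Re(s_1,s_2)=(c_1,c_2)$ we conclude $(s_1,s_2)\in\partial\CK$ with $\Im(s_2)=t_2$.

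I do not expect a genuine obstacle here; the only point needing care is the first step, where one must combine the off-axis modulus inequality for the principal eigenvalue with the strict positivity $\lambda(c_2)>0$ coming from the non-nilpotence assumption, since that positivity is precisely what makes $L(c_2)$, and hence the identity $|\CA|^{c_1}=\lambda(c_2)^{-1}$ used in both implications, meaningful.
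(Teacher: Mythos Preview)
Your proposal is correct and follows essentially the same approach as the paper: use the Perron--Frobenius bound $|\lambda(c_2+it_2)|\le\lambda(c_2)$ to reduce the equivalence to its contrapositive, and then exploit the factorization $\lambda(s_1,s_2)=|\CA|^{s_1}\lambda(s_2)$ to absorb the phase of $\lambda(c_2+it_2)$ into the imaginary part of $s_1$. The paper's proof is terser (it only spells out the direction $|\lambda(s_2)|=\lambda(c_2)\Rightarrow\exists\,t_1$ and leaves the other implication implicit), while you make both directions explicit and take the extra care of noting $\lambda(c_2)>0$ from non-nilpotence so that $c_1=-L(c_2)$ is well defined; this is a worthwhile clarification but not a different argument.
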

\begin{proof}
Let $s_2=c_2+it_2$. By the Perron-Frobenius, we have $|\lb(s_2)|\le\lb(c_2)$ 
since $\Re(s_2)=c_2$ and $|\bP(s_2)|=\bP(c_2)$ (by taking the modulus 
element-wise). If $|\lb(s_2)|=\lb(c_2)$, then there will be $t_1$ such 
that $|\CA|^{it_1}\lb(s_2)=\lb(c_2)$, and therefore 
$(c_1+it_1,s_2)\in\partial\CK$.
\end{proof}

\begin{lemma}
We have a non zero spectral gap, that is,  $\lb(c_2)>\lb_2(c_2)$.
\end{lemma}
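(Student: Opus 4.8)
The plan is a direct application of the Perron--Frobenius theorem to the \emph{real} non-negative matrix $\bP(c_2)=[P(a|b)^{-c_2}]_{(a,b)\in\CA^2}$; here it is essential that $c_2$ is real, so that this matrix indeed has nonnegative entries, in contrast to $\bP(s)$ for complex $s$.

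First I would record the elementary observation that $\bP(c_2)$ has exactly the same support (zero pattern) as the transition matrix $\bP:=\bP_2$: since $-c_2$ is a real exponent, $P(a|b)^{-c_2}>0$ if and only if $P(a|b)>0$, with the convention $0^{-c_2}=0$. Hence $\bP(c_2)$ inherits the combinatorial structure of $\bP_2$; under the hypothesis --- implicit throughout and already used for the lemma $\lb(c_1,c_2)>|\lb_2(c_1,c_2)|$ --- that the source $\bP_2$ is ergodic, that is, primitive (irreducible and aperiodic), the matrix $\bP(c_2)$ is itself primitive. This is also where the assumption that $\bP(s_1,s_2)$ is not nilpotent enters: it excludes the degenerate case in which the dominant eigenvalue of $\bP(c_2)$ is $0$.

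By the Perron--Frobenius theorem for primitive matrices, the spectral radius of $\bP(c_2)$, which is $\lb(c_2)=\lb_1(c_2)$, is a simple eigenvalue and strictly exceeds in modulus every other eigenvalue; in particular $\lb(c_2)>|\lb_2(c_2)|$, which is the claimed spectral gap (the inequality being read, as usual, between the real number $\lb(c_2)$ and the modulus of the possibly complex $\lb_2(c_2)$). An equivalent route, avoiding a fresh invocation of Perron--Frobenius, is to reuse the already established inequality $\lb(c_1,c_2)>|\lb_2(c_1,c_2)|$: in the special case at hand $\bP(s_1,s_2)=|\CA|^{s_1}\bP(s_2)$, so the eigenvalues of $\bP(c_1,c_2)$ are $|\CA|^{c_1}$ times those of $\bP(c_2)$ with the modulus ordering preserved, and dividing by $|\CA|^{c_1}>0$ gives the result.

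The only genuinely delicate point, in either route, is the appeal to aperiodicity: if $\bP_2$ were merely irreducible but periodic, $\bP(c_2)$ would carry several eigenvalues of maximal modulus and the spectral gap would collapse. Thus it is precisely the ergodicity of the sources --- not mere non-nilpotency of $\bP(0,0)$ --- that makes the statement true, and I would flag this explicitly when writing up the proof.
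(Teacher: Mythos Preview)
Your proposal is correct and follows the same approach as the paper, which simply writes ``It follows from Perron-Frobenius that the main eigenvalue is unique.'' You have supplied the details the paper omits (that $\bP(c_2)$ is nonnegative with the same support as $\bP_2$, hence primitive under the standing ergodicity assumption), and your alternative route via $\lb(c_1,c_2)>|\lb_2(c_1,c_2)|$ and the factorization $\bP(s_1,s_2)=|\CA|^{s_1}\bP(s_2)$ is also valid. Your remark that aperiodicity, not just irreducibility or non-nilpotency, is what is actually needed is a fair caveat that the paper leaves implicit.
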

\begin{proof} It follows from Perron-Frobenius that the main eigenvalue is unique.
\end{proof}

Let $\CU$ be a complex neighborhood  of 0 such that 
$\forall s\in\CU$: $|\lb(c_2+s)|>|\lb_2(c_2+s)|$. 
Therefore the function $\lb(c_2+s)$ is analytic.

\begin{lemma}
Let $s_k$ be a sequence such that $\Re(s_k)=c_2$ and $|\lb(s_k)|\to\lb(c_2)$. 
Then for all $s\in\CU$ we have
\begin{eqnarray}
\lim_{k\to\infty}L(s_k+s)-L(s_k)&=&L(c_2+s)-L(c_2)\\
\lim_{k\to\infty}L'(s_k+s)&=&L'(c_2+s)~.
\end{eqnarray}
The convergence also holds for any derivative of function $L'(s)$, 
and the function $\lb(s_k+s)$ is analytic and uniformly bounded 
on a complex neighborhood of $0$.
\label{lem8bis}\end{lemma}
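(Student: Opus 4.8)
The plan is to reduce the whole statement to the analyticity and uniform boundedness of the spectral projector associated with the main eigenvalue, and then to a normal-families (Montel) argument. First I would fix a small closed polydisc $\bar\CU_0\subset\CU$ around $0$ on which, by the definition of $\CU$ and continuity of eigenvalues, we have a uniform spectral gap: there are constants $0<r<R$ and $\delta>0$ with $|\lb(c_2+s)|\ge R$ and $|\lb_j(c_2+s)|\le R-\delta$ for all $j\ge 2$ and all $s\in\bar\CU_0$. Since $\Re(s_k)=c_2$, taking moduli entrywise gives $|\bP(s_k+s)|\le\bP(c_2+s)$ (the inequality being entrywise, comparing to the nonnegative matrix $\bP(c_2+s)$), hence by Perron--Frobenius $|\lb_j(s_k+s)|\le\lb(c_2+\Re(s))$ for every $j$; together with $|\lb(s_k)|\to\lb(c_2)$ this forces, for $k$ large, the same uniform separation of the top eigenvalue of $\bP(s_k+s)$ from the rest, uniformly in $s\in\bar\CU_0$. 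Consequently the Riesz projector
$$
\Pi_k(s)=\frac{1}{2i\pi}\oint_{|\mu|=R-\delta/2}\bigl(\mu\bI-\bP(s_k+s)\bigr)^{-1}d\mu
$$
is well defined and analytic in $s$ on $\CU_0$, of rank one, and $\lb(s_k+s)=\trace\bigl(\bP(s_k+s)\Pi_k(s)\bigr)$ is therefore analytic there, with $|\lb(s_k+s)|$ bounded uniformly in $k$ and $s$ by the entries of $\bP(c_2+\Re(s))$.

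Next I would invoke normal families. The functions $s\mapsto\lb(s_k+s)$ are analytic on $\CU_0$ and uniformly bounded (and bounded away from $0$), so by Montel's theorem some subsequence converges locally uniformly to an analytic limit $\Lambda(s)$; by Cauchy estimates all derivatives converge as well. To identify $\Lambda$, I would use that the eigenvalue satisfies the characteristic polynomial: $\det\bigl(\bP(s_k+s)-\lb(s_k+s)\bI\bigr)=0$. Along the convergent subsequence, after passing to a further subsequence so that $\bP(s_k)\to\bP_\infty$ (possible since the entries $P_i(a|b)^{-s_k}$ lie on a fixed circle), we get $\bP(s_k+s)\to\bP_\infty(s):=\bP_\infty\star[\,\cdot\,]$ entrywise; writing out the entries, $\bP(s_k+s)$ and $\bP(c_2+s)$ have the same modulus entrywise for every $s$, so $\bP_\infty(s)$ is imaginary conjugate to $\bP(c_2+s)$ by the hypothesis $|\lb(s_k)|\to\lb(c_2)$ together with Lemma~\ref{lem-conj}. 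By the observation following Definition~\ref{def-conjugate}, conjugate matrices have the same spectrum, so $\Lambda(s)$ and $\lb(c_2+s)$ differ only by the unimodular scalar $\Lambda(0)/\lb(c_2)$ coming from the conjugation, i.e. $\Lambda(s)=e^{i\phi}\lb(c_2+s)$ for some real $\phi$. Taking $\log_{|\CA|}$ (legitimate on $\CU_0$ since $\Lambda$ is bounded away from $0$) gives $L(s_k+s)=L(c_2+s)+\frac{i\phi}{\log|\CA|}+o(1)$ along the subsequence, so the \emph{differences} $L(s_k+s)-L(s_k)$ converge to $L(c_2+s)-L(c_2)$ and, differentiating, $L'(s_k+s)\to L'(c_2+s)$, with the same for higher derivatives via Cauchy's formula. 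Since the limit does not depend on the subsequence chosen, the full sequence converges; this yields all the displayed limits and the final claim of uniform boundedness on a complex neighborhood of $0$.

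The main obstacle is the identification of the subsequential limit $\Lambda(s)$: a priori Montel only gives \emph{some} analytic limit, and one must rule out that phase information or eigenvalue crossings spoil the relation with $\lb(c_2+s)$. The clean way around this is exactly the conjugacy dichotomy of Lemma~\ref{lem-conj}: the hypothesis $|\lb(s_k)|\to\lb(c_2)$ forces any entrywise-modulus-preserving limit $\bP_\infty$ of $\bP(s_k)$ to be imaginary conjugate to $\bP(c_2)$, hence isospectral, and the same argument applied to $\bP(s_k+s)$ (whose entrywise modulus equals that of $\bP(c_2+s)$ for \emph{every} $s$, with the \emph{same} conjugating vector, since the extra factors $P_i(a|b)^{-s}$ have modulus depending only on $\Re(s)$) pins down $\Lambda(s)$ up to the single global unimodular constant, which then drops out upon passing to differences. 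A minor technical point to check carefully is that the conjugating vector is independent of $s$ — this follows because the ratios $\bP_\infty(s)_{ab}/\bP(c_2+s)_{ab}=\bP_\infty(0)_{ab}/\bP(c_2)_{ab}$ are independent of $s$, the $s$-dependence of both numerator and denominator being the common factor $|\CA|^{0}\prod P_2(a|b)^{-s}$ which cancels.
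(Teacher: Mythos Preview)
Your approach is essentially the paper's own: the paper deduces Lemma~\ref{lem8bis} from the two-variable Lemma~\ref{lem8}, whose Appendix proof proceeds exactly as you outline --- pass to a subsequence by compactness of the matrix entries, invoke Lemma~\ref{lem-conj} to identify the limit matrix as imaginary conjugate to $\bP(c_2)$ (hence isospectral), extend to shifts via the Schur-product identity $\bP(s_k+s)=\bP(s_k)\star\bP(s)$ with the \emph{same} conjugating vector, and read off convergence of derivatives by Ascoli/Cauchy. The only cosmetic difference is that the paper normalises by $\lb(x_k,y_k)$ at the outset so that Lemma~\ref{lem-conj} applies with eigenvalue exactly $1$ and no residual phase appears, whereas you carry the unimodular factor $e^{i\phi}$ and cancel it by passing to the difference $L(s_k+s)-L(s_k)$; both routes are fine.

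Two small points to tidy: (i) your Riesz contour $|\mu|=R-\delta/2$ encircles the \emph{lower} eigenvalues, so $\Pi_k$ as written projects onto the wrong eigenspace --- use $\bI-\Pi_k$ or a small loop around $\lb$ instead; (ii) in your first paragraph you assert the uniform spectral gap for $\bP(s_k+s)$ before you actually have it --- the bound $|\lb_j(s_k+s)|\le\lb(c_2+\Re(s))$ by itself does not separate $\lb$ from $\lb_2$, and the separation is in fact a \emph{consequence} of the conjugacy you establish in paragraph two (conjugate matrices share spectra), so the logical order of those two steps should be reversed.
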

\begin{proof}
It turns out that $\lim_k|\lb(c_1,s_k)|=|\CA|^{c_1}\lb(c_2)=1$. There 
exists $x_k$ such that $\Re(x_k)=c_1$ and $\lb(x_k,s_k)=|\lb(c_1,s_k)|$.
Hence Lemma~\ref{lem8} applies. Thus for any complex number $s$
\begin{eqnarray}
\forall j:~~ \lim_{k\to\infty}\frac{\lb_j(s_k+s)}{\lb(s_k+s)}&=&\lim_{k\to\infty}\frac{\lb_j(x_k,s_k+s)}{\lb(x_k,s_k+s)}\nonumber\\ 
&=&\frac{\lb_j(c_1,c_2+s)}{\lb(c_1,c_2+s)}=\frac{\lb_j(c_2+s)}{\lb(c_2+s)}
\end{eqnarray}
Since $\frac{|\lb_2(c_2)|}{\lb(c_2)}<1$, there exists $\CU$ such that 
$\forall s\in\CU$: $\left|\frac{\lb_j(c_2+s)}{\lb(c_2+s)}\right| <1$ 
thus $\lb(c_2+s)$ is analytic because it never cross the value of 
another eigenvalue and so is $\lb(s_k+s)$.

Hence, the logarithm of the eigenvalue, $L(s_k+s)-L(s_k)$ converges to 
$L(c_2+s)-L(c_2)$. The property $|\lb(c_1+s,c_2+s_2)|>\lb_2(c_2+s)$ for all 
$s\in\CU$ implies the analyticity of $L(c_2+s)$, and therefore $L'(c_2+s)$.
\end{proof}

In passing, we have $L'(s_k)\to1$ and $L''(s_k)\to\alpha_2$.

Finally, we prove that the main eigenvalue dominates  all other eigenvalues in
a complex  neighborhood of $c_2$.

\begin{lemma}
\label{lem9}
There exists $\eps>0$ such that for all $i\neq 1$ and for all $s$ 
such that $\Re(s)=c_2$ :
\be
|\lb_i(s)|<\lb(c_2)-\eps~.
\ee
\end{lemma}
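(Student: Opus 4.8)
The plan is to prove the uniform separation by contradiction, lifting the strict spectral gap that is already available at the single point $s=c_2$ to the whole line $\Re(s)=c_2$ via Lemma~\ref{lem8bis}. First I would record the elementary domination bound used already in the proof of Lemma~\ref{lemL2}: when $\Re(s)=c_2$ one has $|\bP(s)|=\bP(c_2)$ entrywise, since each nonzero coefficient of $\bP(s)$ has modulus $P(a|b)^{-c_2}$, and monotonicity of the spectral radius under taking entrywise moduli then gives $|\lb_i(s)|\le\mathrm{spr}(\bP(c_2))=\lb(c_2)$ for \emph{every} eigenvalue, not just $\lb_1$. Because the $\lb_i(s)$ are labelled in order of non-increasing modulus, $\max_{i\neq 1}|\lb_i(s)|=|\lb_2(s)|$, so it suffices to bound $|\lb_2(s)|$ away from $\lb(c_2)$ uniformly on $\Re(s)=c_2$; writing $\Lambda:=\sup_{\Re(s)=c_2}|\lb_2(s)|\le\lb(c_2)$, the goal reduces to showing $\Lambda<\lb(c_2)$, after which any $\eps\in(0,\lb(c_2)-\Lambda)$ does the job.

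Suppose instead $\Lambda=\lb(c_2)$. Then there is a sequence $(s_k)$ with $\Re(s_k)=c_2$ and $|\lb_2(s_k)|\to\lb(c_2)$, and from $|\lb_2(s_k)|\le|\lb(s_k)|\le\lb(c_2)$ we get $|\lb(s_k)|\to\lb(c_2)$ as well, so $(s_k)$ satisfies the hypotheses of Lemma~\ref{lem8bis}. The proof of that lemma (which rests on Lemma~\ref{lem8}) shows $\lb_j(s_k)/\lb(s_k)\to\lb_j(c_2)/\lb(c_2)$ for each fixed $j$; taking $j=2$ gives $|\lb_2(s_k)|/|\lb(s_k)|\to|\lb_2(c_2)|/\lb(c_2)$, which is strictly below $1$ by the non-zero spectral gap $\lb(c_2)>|\lb_2(c_2)|$ established just above. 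As this contradicts $|\lb_2(s_k)|/|\lb(s_k)|\to1$, we conclude $\Lambda<\lb(c_2)$, which proves the lemma.

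The step I expect to need the most care is the legitimacy of applying Lemma~\ref{lem8bis} to a sequence $(s_k)$ whose imaginary part may be unbounded. This is harmless: the hypotheses of Lemma~\ref{lem8bis} (and of Lemma~\ref{lem8}) constrain only $\Re(s_k)$ and $|\lb(s_k)|$, and the argument behind them merely selects an auxiliary sequence of real part $c_1$ realising $|\lb(c_1,s_k)|$ as a genuine eigenvalue and then runs a normal-family/continuity argument in a \emph{fixed} complex neighbourhood of the origin — no compactness in the $s_k$-variable is invoked. Hence the separation $|\lb_i(s)|<\lb(c_2)-\eps$ is obtained simultaneously for all $s$ on the line $\Re(s)=c_2$, which is precisely what the subsequent saddle-point analysis needs.
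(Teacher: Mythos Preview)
Your argument is correct and is essentially identical to the paper's own proof: both assume, for contradiction, a sequence $s_k$ on $\Re(s)=c_2$ with $|\lb_2(s_k)|\to\lb(c_2)$, observe that this forces $|\lb(s_k)|\to\lb(c_2)$, and then invoke the ratio convergence $\lb_2(s_k)/\lb(s_k)\to\lb_2(c_2)/\lb(c_2)<1$ coming from Lemma~\ref{lem8bis} (via Lemma~\ref{lem8}) to obtain the contradiction. Your write-up is simply more explicit about the reduction to $i=2$ and about why unbounded $\Im(s_k)$ is harmless, but the underlying idea is the same.
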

\begin{proof}
This is a consequence of previous lemmas. 
Suppose that there exists $s_k$ such that $|\lb_2(s_k)|\to\lb(c_2)$. 
This implies that $|\lb(s_k)|\to\lb(c_2)$, but by previous 
lemma $|\lb_2(s_k)|\to\lb_2(c_2)=\lb(c_2)-\eps$. 
\end{proof}

Now we are in the position to evaluate the integral of $C(z,z)$ by the
saddle point methods.
Recall that for all $M>0$ we have  $C(z,z)=I(z,c_2)-J(z,c_2)+1+O(z^{1-M})$ where
$I(z,\rho)$ and $J(z,\rho)$ are given by (\ref{eq-Lk}). We already 
prove that $J(z,c_2)=O(z^{c_2})=O(z^{\kappa-\eps})$ for some $\eps>0$ 
(in fact $\eps=L(c_2)>0$). We reinforce it in the next lemma.

\begin{lemma}
There exists $\eps>0$ such that
\be
\label{eq-jac}
C(z,z)=1+\frac{1}{2i\pi}\int_{\Re(s)=c_2}H_1(s,z)\Gamma(s)z^{-s}ds+O(z^{\kappa-\eps})
\ee
where $\kappa=-c_1-c_2$ and we recall that
$$
H_1(z,s)=\sum_{k\in\mathbb{Z}}\frac{f_1(s)g_1(s)
\Gamma(-L_{1,k}(s))}{\lambda(s)\log|\CA|}. 
$$
\label{lem18}\end{lemma}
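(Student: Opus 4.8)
The plan is to build on what is already in hand. By Lemma~\ref{lem-rho} and the residue-cancellation argument in the paragraph preceding this lemma, for every $M>0$
$$
C(z,z)=1+I(z,c_2)-J(z,c_2)+O(z^{1-M}),
$$
with both contours on $\Re(s)=c_2$, and $J(z,c_2)=O(z^{c_2})$ is already established. In the present regime $c_1,c_2\in(-1,0)$, so $\kappa=-c_1-c_2>0$, hence $c_2<0<\kappa$ and $J(z,c_2)=O(z^{\kappa-\eps_1})$ with $\eps_1:=\kappa-c_2>0$. Writing
$$
I(z,c_2)=\frac{1}{2i\pi}\int_{\Re(s)=c_2}\left(\sum_{j=1}^{|\CA|}H_j(s,z)\right)\Gamma(s)z^{-s}\,ds,
$$
the $j=1$ term is exactly the one kept in the statement, so the whole task reduces to showing
$$
\frac{1}{2i\pi}\int_{\Re(s)=c_2}\left(\sum_{j\ge2}H_j(s,z)\right)\Gamma(s)z^{-s}\,ds=O(z^{\kappa-\delta})
$$
for some $\delta>0$.

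The mechanism is the strict spectral gap. Each term of $H_j(s,z)$ is the residue, at the pole $s_1$ with $|\CA|^{s_1}\lb_j(s)=1$, of the $j$-th piece of the spectral expansion of $(\bI-\bP(s_1,s))^{-1}$, so it carries the factors $\frac{1}{\lb_j(s)}$, $\Gamma(-L_{j,k}(s))$ and $z^{L_{j,k}(s)}$, the last of which has modulus $z^{\log_{|\CA|}|\lb_j(s)|}$ on $\Re(s)=c_2$ (consistently with $L(s)=L_{1,0}(s)=\log_{|\CA|}\lb(s)$, whose real part at $s=c_2$ equals $-c_1$ and yields the leading $z^{\kappa}$). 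By Lemma~\ref{lem9}, $|\lb_j(s)|\le\lb(c_2)-\eps_0$ along the whole line for every $j\ge2$ and some fixed $\eps_0>0$, where $\lb(c_2)=|\CA|^{-c_1}>1$. The key observation is that the combination
$$
\frac{z^{\log_{|\CA|}|\lb_j(s)|}}{|\lb_j(s)|}=|\lb_j(s)|^{\log_{|\CA|}z-1}
$$
is, for $z>|\CA|$, a bounded function of $s$ along $\Re(s)=c_2$, dominated by $(\lb(c_2)-\eps_0)^{\log_{|\CA|}z-1}=\frac{1}{\lb(c_2)-\eps_0}\,z^{\log_{|\CA|}(\lb(c_2)-\eps_0)}=O(z^{-c_1-\delta})$ with $\delta:=-c_1-\log_{|\CA|}(\lb(c_2)-\eps_0)>0$; it even tends to $0$ near the isolated zeros of $\lb_j$, so the apparent pole of $H_j$ there becomes harmless as soon as $z$ is large. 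The $k$-series defining $H_j$ converges geometrically because $\Gamma(-L_{j,k}(s))$ decays faster than exponentially in $|\Im L_{j,k}(s)|=\frac{|\arg\lb_j(s)+2k\pi|}{\log|\CA|}$, and the outer factor $\Gamma(s)$ makes $\int_{\Re(s)=c_2}|\Gamma(s)|\,|ds|$ finite. Putting these estimates together gives $|H_j(s,z)\Gamma(s)z^{-s}|\le z^{-c_1-c_2-\delta}\cdot h_j(s)$ with $h_j$ integrable on $\Re(s)=c_2$, so summing over the finitely many $j\ge2$ yields the desired $O(z^{\kappa-\delta})$. Taking $\eps=\min(\delta,\eps_1)$ and absorbing $-J(z,c_2)$ and the $O(z^{1-M})$ error into $O(z^{\kappa-\eps})$ finishes the proof.

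The genuinely delicate point will be the behaviour on $\Re(s)=c_2$ at the points of $\CR$, where a subdominant eigenvalue vanishes or the $\lb_j$ ($j\ge2$) cross: there an individual $H_j(s,z)$ is not analytic and only looks singular, and one must work instead with the combined subdominant quantity $\sum_{j\ge2}H_j(s,z)$, which is analytic wherever the main eigenvalue $\lb_1$ remains isolated, since it depends on $\bP(s)$ only through the spectral projector complementary to $\lb_1$; alternatively one tracks the residues of $I(z,c_2)$ and $J(z,c_2)$ at the genuine poles of $\Gamma(-L_{j,k})$ that now sit on the contour and invokes their cancellation in $I-J$, as in the ``rule of the game'' paragraph. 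The other place needing care is that the inequality $|\lb_j(s)|\le\lb(c_2)-\eps_0$ must be applied uniformly over the non-compact line \emph{and} over $k$ simultaneously; this uniformity, which is what turns the pointwise bound $|\lb_j(s)|<\lb(c_2)$ into the global saving $z^{-\delta}$ over $z^{\kappa}$, is supplied by the super-polynomial decay of $\Gamma$ in the imaginary direction.
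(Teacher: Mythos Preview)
Your approach is essentially the paper's: both proofs rest on Lemma~\ref{lem9} (the uniform spectral gap $|\lb_j(s)|<\lb(c_2)-\eps_0$ on $\Re(s)=c_2$) to show that the $j\ge2$ contributions to $I(z,c_2)$ are $O(z^{\kappa-\delta})$, then absorb $J(z,c_2)$. The paper's own proof is three lines and simply writes $\int_{\Re(s)=c_2}|\Gamma(s)|\,z^{\Re(L_j(s)-s)}\,ds=O(z^{\kappa-\eps})$; you reach the same estimate but supply the bookkeeping the paper omits---the combination $|\lb_j(s)|^{\log_{|\CA|}z-1}$ that simultaneously controls the factor $1/\lb_j(s)$ and the power $z^{L_{j,k}(s)}$, the geometric convergence of the $k$-sum via $\Gamma$-decay, and the integrability from $\int|\Gamma(s)|\,|ds|<\infty$. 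Your flagging of the eigenvalue-crossing points of $\CR$ and the remedy (work with the analytic aggregate $\sum_{j\ge2}H_j$) is a genuine refinement over the paper's terse treatment.

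One small slip: you write $J(z,c_2)=O(z^{c_2})$, but the integrand carries $z^{-s}$ with $\Re(s)=c_2$, so the correct bound is $J(z,c_2)=O(z^{-c_2})$; the paper itself has the same typo. Since $-c_2<\kappa$ (because $-c_1>0$), the conclusion $J(z,c_2)=O(z^{\kappa-\eps})$ with $\eps=-c_1=L(c_2)$ still holds, and your final $\eps=\min(\delta,\eps_1)$ should use $\eps_1=-c_1$ rather than $\kappa-c_2$.
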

\begin{proof}
By Lemma~\ref{lem9} for all $j>1$ we have
$\log_{|\CA|}|\lb_j(s)|<\log_{|\CA|}\lb(c_2)-\eps$
for some $\eps>0$,  thus the contribution of $\int_{\Re(s)=c_2}H_j(s,z)
\Gamma(s) z^{-s}ds$ is of order 
$$
\int_{\Re(s)=c_2}|\Gamma(s)| z^{\Re(L_j(s)-s)}ds =
O\left(z^{L(c_2)-c_2-\eps}\right)=O\left(z^{\kappa-\eps}\right),
$$
as desired.
\end{proof}

\med
{\bf Rational Case}.
We assume now that the matrix $\log^*(\frac{1}{P(c|c)}\bP)$ is 
{\it rationally balanced}. The matrix
$\bP(s+2i\pi\nu)$ is then imaginary conjugate with the matrix
$P(c|c)^{2i\pi\nu}\bP(s)$ and
$L(s+2i\pi\nu)=L(s)+2i\pi\nu\log P(c|c)$. Thus $\Re(L(c_2+it))$
is periodic in $t$ with period $2\pi\nu$. 
Furthermore, $L'(s)$ is also periodic with period $2\pi\nu$. 
Thus, $s_\ell=c_2+2i\pi\ell\nu$ for $\ell\in\mathbb{Z}$ are
saddle points of $z^{L(s)-s}$. 

We concentrate now on the term $k=0$ in $H_1(s,z)\Gamma(s)z^{-s}$ in (\ref{eq-jac}).
Define
\be
b_2(s)=\frac{d^2}{ds^2}\log\left(\frac{f_1(s)g_1(s)}{\lb_1(s)}\Gamma(-L(s))\Gamma(s)
\right).
\ee
Notice that $b_2(s)=\beta_2(-L(s),s)$ mentioned in Theorem~\ref{theogenmark}.
Since the function
$$
\log\left(\frac{f(s)g(s)}{\lb(s)}\Gamma(-L(s))\Gamma(s)\right)
$$
has bounded variations, we have the classic saddle point result
\cite{fs-book,spa-book}
$$
\frac{1}{2i\pi}\int_{\Re(s)=c_2}\frac{f(s)g(s)}{\lb(s)}\Gamma(-L(s))
\Gamma(s)z^{L(s)-s}ds=
$$
\begin{eqnarray}
&=&\sum_\ell\frac{f(s_\ell)g(s_\ell)}{\lb(s_\ell)}
\Gamma(-L(s_\ell))\Gamma(s_\ell)\nonumber\\ 
&&\times\frac{z^{L(s_\ell)-s_\ell}}
{\sqrt{2\pi(\alpha_2\log z+b_2(s_\ell))}}(1+o(1)).
\label{eq-saddle}\end{eqnarray}
Notice that $\Re(L(s_\ell)-s_\ell)=\kappa$. When adding the contribution
from the $L(s)+\frac{2ik\pi}{\log|\CA|}$ we obtain the expression for
$Q(\log z)$ with $\partial\CK=\{(-L(s_\ell)-\frac{2ik\pi}
{\log|\CA|},s_\ell),(k,\ell)\in\mathbb{Z}^2\}$. The double periodicity 
comes from the fact that $\sqrt{x}Q(x)=\sum_{k,\ell}q_{k,\ell}
e^{i(k\alpha+\ell \beta)x}+o(1)$ when $x\to\infty$ for some 
{\it incommensurable}\footnote{recall that a pair of numbers $(\alpha,\beta)$ is
{\it commensurable} if there exists a real number $\nu$ such that 
the vector $(\nu\alpha,\nu \beta) \in\mathbb{Z}^2$; 
otherwise the pair is {\it incommensurable}.}
pair of real numbers $(\alpha,\beta)$ and  
complex numbers  $\{q_{k,\ell}\}_{(k,\ell)\in\mathbb{Z}^2}$. 

\med
{\bf Irrational Case}.
We now turn to the irrational case.
Let $A>0$ be a number such that for all $|s|\le A$ we have 
$|\lb(c_2+s)|>|\lb_2(c_2+s)|$; thus $L(c_2+s)$ is analytic. 
We assume that $c_2<0$ is the only saddle point 
on $\Re(s)=c_2$ for $|\Im(s)|\le A$.
There also exists $\alpha_3>0$ such that 
\be
\label{eq-jac2}
|t|\le A\Rightarrow ~ ~ \Re(L(c_2+it)-L(c_2))\le-\alpha_3 t^2~.
\ee
>From the previous analysis we know that
\begin{eqnarray}
\frac{1}{2i\pi}\sum_{k\in\mathbb{Z}}&&\int_{\Re(s)=c_2,|\Im(s)|\le A}
\frac{f(s)g(s)}{\lb(s) \log |\CA|}\nonumber\\
&&\times\Gamma\left(-L(s)-\frac{2ik\pi}{\log|\CA|}\right)\nonumber\\
&&\times\Gamma(s)z^{L(s)-s+2ik\pi/\log|\CA|}ds\nonumber\\
&=&Q(\log z)(1+o(1)).
\label{eq-sadA}
\end{eqnarray}
Assume now (\ref{eq-jac2}) and define
\be
\xi(s)=\sum_{k\in\mathbb{Z}}\left|\Gamma\left(s-\frac{2ik\pi}{\log|\CA|}\right)
\right|~.
\ee
The function $\xi(s)$ is continuous and bounded as long as $\Re(s)$ is bounded.
Our aim is to prove that
\begin{eqnarray}
\frac{1}{2i\pi}&&\int_{\Re(s)=c_2,|\Im(s)|>A}\left|\frac{f(s)g(s)}{\lb(s)}\right| \xi(-L(s))\nonumber\\
&&\times|\Gamma(s)|z^{\Re(L(s))-c_2}ds=o(\frac{z^\kappa}{\sqrt{\log z}})~,
\end{eqnarray}
which will complete the proof of Theorem~\ref{theogenmark0}.

We know that $|f(s)g(s)|\le f(c_2)g(c_2)$. In addition,  
we know that for $\Re(s)=c_2$ we have $\Re(L(s))<L(c_2)$ 
as long as $\Im(s)\neq 0$. 
We also have $|\lb(s)|>\eps'$ for some $\eps'>0$ since the 
matrix $\bP(s)$ stays away from the null matrix.
Therefore, we need to estimate 
\be
\int_{\Re(s)=c_2,|\Im(s)|>A}|\Gamma(s)|z^{\Re(L(s))-c_2}ds~.
\ee
For any $\eps>0$, the portion of the line $\Re(s)=c_2$, where 
$\Re(L(s))<L(c_2)-\eps$, contributes $z^{\kappa-\eps}$ to $C(z,z)$. 
Our attention must turn to the values of $s$ on this line such 
that $\Re(L(s))$ is arbitrary close to $L(c_2)$. 
In particular, we are interested in the local maxima of $\Re(L(s))$ 
that are arbitrary close to $L(c_2)$.
Indeed, these local maxima play a role in the saddle point method.
 
Let us consider the sequence of those maxima denoted by $s_\ell$ for 
$\ell\in\mathbb{N}$ such that  $\Re(L(s_\ell))\to L(c_2)$. By 
Lemma~\ref{lem8} we know that for all real $t$ 
$L(s_\ell+it)-L(s_\ell)\to L(c_2+it)-L(c_2)$ and that 
$L'(s_\ell+it)\to L'(c_2+it)$. Therefore for 
all real $t$ such $|t|\le A$
\be
\lim\sup_{\ell\to\infty}(\Re(L(s_\ell+it))-\Re(L(s_\ell)))\le-\alpha_3 t^2.
\ee 
We define $I(A)$ to be the set of complex numbers $s$ such that 
$\Re(s)=c_2$ and $\min_\ell\{|s-s_\ell|\}>A$.
\begin{lemma}\it
There exists $\eps$ such that for all $s\in I(A)$: 
$\Re(L(s))<L(c_2)-\eps$.
\end{lemma}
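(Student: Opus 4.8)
The plan is to argue by contradiction, using Lemma~\ref{lem8bis} to slide the (strict) local maximum of $\Re L(\cdot)=\log_{|\CA|}|\lb(\cdot)|$ at $c_2$ out to arbitrarily far points of the line $\Re(s)=c_2$, and thereby produce a genuine local maximum close to the offending point — which must then be one of the $s_\ell$, a contradiction. So suppose no such $\eps$ exists: for each $k$ I pick $s^{(k)}=c_2+i\tau_k\in I(A)$ with $\Re L(s^{(k)})\ge L(c_2)-1/k$. Since $\Re L(s)=\log_{|\CA|}|\lb(s)|$, this forces $|\lb(s^{(k)})|\to\lb(c_2)$, so $(s^{(k)})$ is a legitimate input to Lemma~\ref{lem8bis}. (If the $\tau_k$ stayed bounded I could extract a limit $c_2+i\tau^\ast$ with $\Re L(c_2+i\tau^\ast)=L(c_2)$; in the irrational case $\partial\CK$ projects to the single point $c_2$, forcing $\tau^\ast=0$ and $s^{(k)}\to c_2$, which already contradicts $s^{(k)}\in I(A)$. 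So I may assume $|\tau_k|\to\infty$.)

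Next I would shrink $A$, if needed, so that $\{\,it:|t|\le A\,\}$ lies inside the neighbourhood $\CU$ of Lemma~\ref{lem8bis} — this is compatible with how $A$ was introduced. That lemma gives, uniformly for $|t|\le A$,
$$\Re L(s^{(k)}+it)-\Re L(s^{(k)})\ \longrightarrow\ \Re L(c_2+it)-L(c_2)\ \le\ -\alpha_3 t^2,$$
the last inequality being (\ref{eq-jac2}). Combined with $\Re L(s^{(k)})\to L(c_2)$ this yields $\Re L(s^{(k)}+it)\to\Re L(c_2+it)$ uniformly on $[-A,A]$, where the limit profile has a unique, strict maximum at $t=0$. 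Hence for large $k$ the maximum of $t\mapsto\Re L(s^{(k)}+it)$ over $[-A,A]$ is attained at an interior point $\eta_k$ with $\eta_k\to0$; thus $\theta_k:=\tau_k+\eta_k$ is a genuine local maximum of $t\mapsto\Re L(c_2+it)$ and $\Re L(c_2+i\theta_k)\to L(c_2)$.

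To close, I would invoke the defining property of the list $(s_\ell)$ — it collects precisely the local maxima of $\Re L$ on $\Re(s)=c_2$ whose values approach $L(c_2)$ — so for $k$ large, $c_2+i\theta_k=s_{\ell(k)}$ for some index $\ell(k)$. But then $\min_\ell|s^{(k)}-s_\ell|\le|\tau_k-\theta_k|=|\eta_k|<A$, contradicting $s^{(k)}\in I(A)$, and the lemma follows.

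The argument is short once the ingredients are lined up; the two places that need care are (a) keeping $A$ inside the analyticity neighbourhood $\CU$ of Lemma~\ref{lem8bis}, so that uniform convergence holds on the \emph{fixed} window $[-A,A]$ rather than a shrinking one, and (b) the bookkeeping that identifies the manufactured maxima $c_2+i\theta_k$ with members of $(s_\ell)$, which rests on the convention that every local maximum with value within a fixed $\eps_0$ of $L(c_2)$ appears in that list. The irrational-case fact that $s_2=c_2$ is the only point of $\{\Re(s_2)=c_2\}$ with $\Re L(s_2)=L(c_2)$ is what makes the limit profile's maximum at $t=0$ strict, and is used implicitly.
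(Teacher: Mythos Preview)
Your argument is correct and rests on the same estimate as the paper --- the transplanted concavity bound $\Re L(\cdot+it)-\Re L(\cdot)\lesssim-\alpha_3 t^2$ coming from Lemma~\ref{lem8bis} and~(\ref{eq-jac2}) --- but the logical packaging differs. The paper argues \emph{directly}: for $s\in I(A)$ (tacitly taking the $s_\ell$ to enumerate all local maxima), it moves from $s$ uphill to the adjacent local maximum $s_\ell$, observes that $\Re L$ is monotone on that segment, and therefore bounds $\Re L(s)$ by $\Re L(s_\ell\mp iA)$, which in turn is at most $L(c_2)-\alpha_3 A^2$ by the transplanted bound. You instead argue by \emph{contradiction}: from a hypothetical sequence $s^{(k)}\in I(A)$ with $\Re L(s^{(k)})\to L(c_2)$, you use the uniform convergence of profiles on the fixed window $[-A,A]$ to manufacture a genuine interior local maximum near $s^{(k)}$, identify it with some $s_{\ell(k)}$, and obtain $|s^{(k)}-s_{\ell(k)}|<A$. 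Your route avoids the (unstated in the paper) monotonicity step and the case split on whether the adjacent local maximum is ``good'' or not; the paper's route is shorter once those points are accepted. The two caveats you flag --- keeping $A$ inside $\CU$, and the convention that every local maximum with value sufficiently close to $L(c_2)$ belongs to $(s_\ell)$ --- are exactly the right places to be careful, and both are consistent with how the paper sets things up.
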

\begin{proof}
Assume $s\in I(A)$. 
Since $s$ is not a local maxima, we study the variation of 
$\Re(L(s))$ around the local maxima $s_\ell$. 
Without loss of generality we assume that $s_\ell-A$ is between $s$ and 
$s_\ell$, thus $\Re(L(s_\ell-A))>\Re(s)$. 
Since $\lim\sup\Re(L(s_\ell-A))<L(c_2)-\alpha_3 A^2<L(c_2)-\eps$ the lemma 
is proven. 
\end{proof}

In view of the above, we conclude that
\begin{eqnarray}
\int_{\Re(s)=c_2,|\Im(s)|>A}|\Gamma(s)|z^{\Re(L(s))-c_2}ds\le\nonumber\\
\sum_\ell
\int_{|t|\le A}|\Gamma(s_\ell+it)|z^{\Re(L(s_\ell+it))-c_2}dt+O(z^{\kappa-\eps})~.
\end{eqnarray}
By virtue of the properties of function $\Gamma(s)$ on the imaginary lines, there exists a real $B>0$ such that $\forall s$:
\be
\Re(s)=c_2\Rightarrow \max_{|t|\le A}\{|\Gamma(s+it)|\}\le B|\Gamma(s)|~.
\ee
Therefore, our analysis can be limited to 
\be
\sum_\ell\int_{|t|\le A}|\Gamma(s_\ell)|z^{\Re(L(s_\ell+it))-c_2}dt~.
\ee

Finally, we establish a separation result.

\begin{lemma}\it
For $\ell$ tending to infinity, the $s_\ell$ are separated by a distance at 
least equal to $A$.
\end{lemma}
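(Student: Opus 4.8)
The plan is to argue by contradiction. Suppose the conclusion fails; then there are indices $\ell_k\neq\ell'_k$ with $\ell_k,\ell'_k\to\infty$ and $|s_{\ell_k}-s_{\ell'_k}|<A$ for every $k$. Since both points lie on the line $\Re(s)=c_2$, write $s_{\ell'_k}=s_{\ell_k}+it_k$ with $t_k$ real and $0<|t_k|<A$, and (after passing to a subsequence) assume $t_k\to\tau$ with $|\tau|\le A$. By the defining property of our sequence of local maxima, $\Re L(s_{\ell_k})\to L(c_2)$ and $\Re L(s_{\ell'_k})\to L(c_2)$; since $\Re L(s)=\log_{|\CA|}|\lb(s)|$ this gives $|\lb(s_{\ell_k})|\to\lb(c_2)$, so Lemma~\ref{lem8bis} applies to the sequence $(s_{\ell_k})$ and tells us that $g_k(t):=\Re L(s_{\ell_k}+it)-\Re L(s_{\ell_k})$ converges, together with its first two derivatives in $t$, uniformly on $|t|\le A$ to $\psi(t):=\Re L(c_2+it)-L(c_2)$. (If needed, shrink $A$ slightly so that the neighbourhood of validity is closed; this is harmless, since $A$ is only required to satisfy the spectral-gap and quadratic bounds.)

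Next I would extract the two pieces of information carried by $\psi$. By~(\ref{eq-jac2}) we have $\psi(0)=0$ and $\psi(t)\le-\alpha_3 t^2$ for $|t|\le A$; since $\psi$ is real analytic this gives $\psi'(0)=0$ and, via Taylor's formula, $\psi''(0)\le-2\alpha_3<0$. First I rule out $\tau\neq0$: from $g_k(t_k)=\Re L(s_{\ell'_k})-\Re L(s_{\ell_k})\to0$ and the uniform convergence $g_k(t_k)\to\psi(\tau)$ we get $\psi(\tau)=0$, which combined with $\psi(\tau)\le-\alpha_3\tau^2$ forces $\tau=0$; thus $t_k\to0$. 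Now $\psi''(0)<0$ and $g_k''\to\psi''$ uniformly near $0$, so there are $\delta>0$ and $k_0$ with $g_k''<0$ on $[-\delta,\delta]$ for all $k\ge k_0$; hence $t\mapsto\Re L(s_{\ell_k}+it)$ is strictly concave on $[-\delta,\delta]$, so its $t$-derivative is strictly monotone there and vanishes at most once. But $s_{\ell_k}$ and $s_{\ell'_k}$ are both local maxima of $\Re L$ on $\Re(s)=c_2$, so this derivative vanishes at $t=0$ and at $t=t_k$, and for $k$ large $0\neq t_k\in(-\delta,\delta)$ --- a contradiction, which proves the lemma.

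The step I expect to be the crux is the passage to the limit in the first paragraph, i.e.\ establishing that the translated functions $g_k$ converge \emph{with their first two $t$-derivatives}, uniformly on the full range $|t|\le A$. This is exactly what Lemma~\ref{lem8bis} (together with Lemma~\ref{lem8}) is designed to provide: the main eigenvalue stays isolated on a complex neighbourhood of $c_2$, so $L$ is analytic there, the translates $s\mapsto L(s_{\ell_k}+s)-L(s_{\ell_k})$ form a uniformly bounded family of analytic functions, and a normal-families (Vitali) argument upgrades pointwise convergence to uniform convergence of all derivatives. Everything else --- the Taylor estimate for $\psi$, the elimination of $\tau\neq0$, and the strict-concavity argument --- is then routine.
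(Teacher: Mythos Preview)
Your proof is correct and follows essentially the same two-case structure as the paper's: ruling out $\tau\neq 0$ via the quadratic bound $\psi(\tau)\le-\alpha_3\tau^2$ (the paper's ``second'' case, where $\liminf|s_\ell-s_{\ell'}|>0$), and ruling out $t_k\to 0$ by a second-derivative argument based on $L''(s_{\ell_k})\to\alpha_2\neq 0$ (the paper's ``first'' case, handled there by Taylor-expanding $L'$ directly rather than through strict concavity of $g_k$). The paper's version is terser and leaves the uniform convergence of derivatives from Lemma~\ref{lem8bis} implicit, but the underlying ideas are the same.
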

\begin{proof}
First, let us assume that $\ell,\ell'\to\infty$ and 
$|s_{\ell}-s_{\ell'}|\to 0$, then we have 
\be
L'(s_{\ell'})=L'(s_\ell)+(s_{\ell'}-s_\ell)L''(s_\ell)+O(|s_\ell-s_{\ell'}|^2). 
\ee
Since $L''(s_\ell)\to\alpha_2\neq 0$, then we cannot have $L'(s_{\ell'})=1$, 
thus $s_{\ell'}$ cannot be a local maximum of $\Re(L(s))$. 
Second, if $\lim\inf|s_{\ell}-s_{\ell'}|>\eps$ for some $\eps>0$ with 
$|s_\ell-s_{\ell'}|<A$, then using the inequality
\be
\lim\sup\Re(L(s_{\ell'}))-\Re(s_\ell)\le-\alpha_3|s_\ell-s_{\ell'}|^2
<-\alpha_3\eps^2
\ee
we cannot have $\Re(L(s_{\ell'}))\to L(c_2)$.
\end{proof}

 From the above we conclude
\begin{eqnarray}
&&\int_{|t|\le A}z^{\Re(L(s_\ell+it))-c_2}dt=\nonumber\\
&&z^\kappa z^{\Re(s_\ell)-L(c_2)}
\int_{|t|\le A}
z^{\Re(L(s_\ell+it))-\Re(L(s_\ell)}dt~.
\end{eqnarray}

In summary, the consequence of the previous lemma is that since 
$\lim\sup_{\ell\to\infty}\Re(L(s_\ell+it))-
\Re(L(s_\ell)\le-\alpha_3 t^2$, we have \cite{spa-book}
\be
\lim\sup_{\ell\to\infty}\int_{|t|\le A}z^{\Re(L(s_\ell+it))-\Re(L(s_\ell)}dt
\le\frac{1}{\sqrt{\pi\alpha_3\log z}},
\ee
and the properties of function 
$\Gamma(s)$ is that $\sum_\ell |\Gamma(s_\ell)|<\infty$. Therefore, 
\begin{eqnarray}
\sum_\ell&&\int_{|t|\le A}|\Gamma(s_\ell)|z^{\Re(L(s_\ell+it))-c_2}dt\nonumber\\
&=&z^\kappa\sum_\ell|\Gamma(s_\ell)|z^{\Re(s_\ell)-L(c_2)}\nonumber\\
&&\times\int_{|t|\le A}
z^{\Re(L(s_\ell+it))-\Re(L(s_\ell)}dt
\end{eqnarray}
since $\lim_{z\to\infty}z^{\Re(s_\ell)-L(c_2)}=0$, by
the dominating convergence theorem. We finally arrive at
\begin{eqnarray}
&&\sum_\ell|\Gamma(s_\ell)|z^{\Re(s_\ell)-L(c_2)}\int_{|t|\le A}
z^{\Re(L(s_\ell+it))-\Re(L(s_\ell)}dt\nonumber\\
&&=o(\frac{1}{\sqrt{\log z}})~.
\end{eqnarray}

In fact, the saddle point expansion is extendible to any order of 
$\frac{1}{\sqrt{\log n}}$. This proves Theorem~\ref{theogenmark0} and \ref{theo01} .
In passing we observe that when $\bP_1$ and $\bP_2$ 
are logarithmically commensurable, the line $\Re(s_1)=c_1$ contains an 
infinite number of saddle points that contribute to the double 
periodic function $Q_2(\log n)$ (cf. \cite{js12} for more details).


\section{General Case: Proofs of Theorem~\ref{theo11} -- \ref{theogege}}
\label{sec-last}

We now look at the general case when $\bP_1\neq\bP_2$. The main difficulty 
of the general case is when $\bP_1$ and $\bP_2$ have some zero coefficients, 
not at the same locations. For example $\bP(-1,0)$ may differ from $\bP_1$, 
and $\bP(0,-1)$ may differ from $\bP_2$ since $\bP(s_1,s_2)$ retains only 
the coefficients that are both non zero in $\bP_1$ and $\bP_2$. For example, 
$\bP(-1,0)$ and $\bP(0,-1)$ may be conjugate while $\bP_1$ and $\bP_2$ are not. 
Indeed we can have $\bP(-1,0)=\bP(0,-1)$ even when $\bP_1 \neq\bP_2$. 

In this section we first prove Theorem~\ref{theo11} which consider the case when
$\bP(-1,0)$ and $\bP(0,-1)$ are conjugate. Then we present
a detailed proof of Theorem~\ref{th3}, and finally we
briefly discussed proofs of Theorems~\ref{theogenmark}-- \ref{theogege}.

\begin{proof}[Proof of Theorem~\ref{theo11}]
When $\bP(-1,0)$ and $\bP(0,-1)$ are conjugate, the situation is more
similar to the case when $\bP-1=\bP_2$ discussed in Section~\ref{sec-preliminary}.
In particular there is no saddle point, and the analysis reduces to 
computing some residues of poles.

To start, we notice that in this case  there exists a vector of 
real numbers $(x_a)_{a\in\CA}$ such that
$$
P_1(a|b)P_2(a|b)>0\Rightarrow P_2(a|b)\frac{x_a}{x_b}P_1(a|b).
$$
As a consequence we have the same spectrum of $\bP(c-s,s)$ for all $c$ and $s$, 
and thus we have the identity $\lambda(c-s,s)=\lambda(c,0)$. We will prove the 
result when $\forall a\in\CA: x_a=1$. This does not 
implies that $\bP_1=\bP_2$ since the above identity only applies to nonzero 
coefficients in both matrices. In fact we only have $\bP(-1,0)=\bP(0,-1)$. 
We also notice that $\bP(s,1-s)$ is identical for all $s$. We leave as an exercise 
the case where the $x_a$  are not identical. We notice that $\CK$ consists 
of the tuple $(s,1-s)$ where $s$ is real.
We know that 
\be
C(z_1,z_2)=(1-e^{-z_1})(1-e^{-z_2})+\sum_{a\in\CA}C_a(\pi_1(a)z_1,\pi_2(a)z_2).
\label{CCz}\ee
The issue here is that the $\pi_1(a)$ are not necessarily equal to 
the $\pi_2(a)$ because they also depend on the other coefficients 
of matrices $\bP_1$ and $\bP_2$ which are not tied up by the conjugation 
property (because their {\it alter ego} coefficients in the other matrix are null). 
This implies that we have to consider a new matrix $\bG(z)$ whose coefficients 
$g_{a,b}(z)$ for $(a,b)\in\CA^2$  are
$$g_{a,b}(z)=C_a(\pi_1(b)z,\pi_2(b)z).$$
Let $\bP=\bP(-1,0)=\bP(0,-1)$, {\it i.e.} the matrix whose coefficients 
are those $\bP_1(a|b)$ when $P_1(a|b)P_2(a|b)>0$, and zero otherwise. 

 From the functional equation
$$
C_a(z_1,z_2)=(1-e^{-z_1})(1-e^{-z_2})+\sum_{c\in\CA}C_c(P(a|c)z_1,P(a|c)z_2).
$$
We have the identity
\be
g_{a,b}(z)=(1-e^{-\pi_1(b)z})(1-e^{-\pi_2(b)z})+\sum_{c\in\CA}g_{c,b}(P(a|c)z).
\label{gab}\ee
We have
$$
\trace(\bG(z))=\sum_{a\in\CA}C_a(\pi_1(a)z,\pi_2(a)z),
$$
and then we can rewrite equation~(\ref{CCz}) for $z_1=z_2=z$ as
$$
C(z,z)=(1-e^{-z})^2+\trace(\bG(z)).
$$
We then compute the Mellin transform $\bG^*(s)$ of $\bG(z)$, which is
a matrix of elements $g_{a,b}^*(s)$ (i.e., Mellin transforms of $g_{a,b}(z)$)
that are equal to
$$
g_{a,b}^*(s)=r_a(s)\Gamma(s)+\sum_{c\in\CA}g_{c,b}^*(s)P(a|c)^{-s}.
$$
Here 
$$r_a(s)=-\pi_1(a)^{-s}-\pi_2(a)^{-s}+(\pi_1(a)+\pi_2(a))^{-s}$$ 
is the Mellin transform of the term $(1-e^{-\pi_1(b)z})(1-e^{-\pi_2(b)z})$ 
in~(\ref{gab}). Equivalently
\be
\bG^*(s)=\Gamma(s)\br(s)\otimes\bone+\bP(s,0)\bG^*(s)
\ee
or $\bG(s)=(\bI-P(s,0))^{-1}\Gamma(s) \br(s)\otimes\bone$ where $\br(s)$ is the 
vector made of the $r_a(s)$'s. 
The Mellin transform of $C(z,z)$ which we denote as $c^*(s)$ satisfies
$$
c^*(s)=(2^{-s}-2)\Gamma(s)+\trace(\bG^*(s)).
$$

The first singularity of $\bG(s)$ is the pole of $(\bI-\bP(s,0))^{-1}$ 
which is at $s=-\kappa$ such that $\lb(s,0)=1$.  The only singular term of 
$\bG^*(s)$  at $s=-\kappa$ is on its main eigenvectors: 
$$\frac{\Gamma(s)}{1-\lb(s,0)}\left(\bzeta(s,0)\otimes \bu(s,0)\right)
\left(\br(s)\otimes \bone\right)$$
whose trace is
$$
\frac{\Gamma(s)}{1-\lb(s,0)}  \langle\br(s)|\bu(s,0)\rangle\langle\bzeta(s,0)|\bone\rangle.
$$

Thus the residue of $\trace(\bG^*(s))$ is
$$
\gamma_0(-\kappa)=\frac{1}{\lambda'(-\kappa,0)}\langle\br(-\kappa)|\bu(-\kappa,0)
\rangle\langle\bzeta(-\kappa,0)|\bone\rangle.
$$
The inverse Mellin gives
\be
C(z,z)=(1-e^{-z})^2+z^\kappa\gamma_0(-\kappa)\left(1+o(1)\right).
\ee
When $\bP(-1,0)$ is logarithmically rationally related, there are 
several poles of $(\bI-\bP(s,0))$ regularly spaced on the vertical 
axis $\Re(s)=-\kappa$ giving a periodic contribution $z^\kappa Q_0(\log z)$.
\end{proof}

\begin{proof}[Proof of theorem~\ref{th3}]
Let $H_1(s_1,s_2)=\frac{\partial}{\partial s_1}\lb(s_1,s_2)$ 
and
$$
f(s_1,s_2)=\la\bpi(s_1,s_2)|\bun(s_1,s_2)\ra, \ \ \ \  
g(s_1,s_2)=\la\bzeta(s_1,s_2)|\bone\ra. 
$$

We first notice that Lemma~\ref{lemLs} about the convexity of 
$L_1(s)$ is still valid since it depends only on general properties of the set $\CK$.
Define now $L_{j,k}(s)$ implicitly as
$$
\lambda_j(-L_{j,k}(s),s)=1.
$$ 
where $\lb_j(s_1,s_2)$ is the $j$th eigenvalues of matrix $\bP(s_1,s_2)$, listed
in decreasing modulus. The index $k$ indicates that these functions 
can be polymorphic since the root of the equation 
$\lb_j(s_1,s_2)=1$ for $s_2$ fixed can be multiple as we have seen 
in the case when one source is uniform memoryless. 
For $j$ fixed, each of the functions $L_{j,k}(s)$ are homeomorphic 
as long as $\lb_j(L_{j,k}(s),s)$ is non ambiguous {\it i.e.} 
the $j$th eigenvalue has not the same modulus as the previous or next eigenvalues. 
This would happen only on a discrete set of values $s$. 

Let now for $1\leq j\leq |\CA|$ and $k\in\mathbb{Z}$ 
\begin{eqnarray}
f_{j,k}(s)&=&\langle\bpi(-L_{j,k}(s),s)|\bu_j(-L_{j,k}(s),s)\rangle ,\\
g_{j,k}(s)&=&\langle\bzeta_j(-L_{j,k}(s),s)|\bone \rangle .
\end{eqnarray}
Then 
\begin{eqnarray}
&&I(z,\rho)=\\
&&\frac{1}{2i\pi}\int_{\Re(s)=\rho}\sum_{k\in\mathbb{Z}}
\sum_{j=1}^{|\CA|}\frac{f_{j,k}(s)g_{j,k}(s)\Gamma(-L_{j,k}(s))\Gamma(s)}
{-\frac{\partial}{\partial s_1}\lb_j(-L_{j,k}(s),s)}z^{L_{j,k}(s)-s}ds\nonumber
\end{eqnarray}
and 
\be
J(z,\rho)=\frac{1}{2i\pi}\int_{\Re(s)=c_2}\hC(0,s)\Gamma(s)z^{-s}ds.
\ee
With these new definitions the expression~(\ref{eq-rho}) in Lemma~\ref{lem-rho} 
is still valid, that is for all $M>0$: 
$$
C(z,z)=1+I(z,\rho)+J(z,\rho)+O(z^{-M}).
$$
The proof is indeed the same, the pole cancelations occur the same way and 
the identity between residues is formally the same. 

We know that for all $k$ we have $\lambda_1(-L_{1,k}(s),s)=1$. We assume that $k=0$ 
defines the branch where $L_{1,0}(s)$ is real when $s$ is real. To simplify
we denote $L(s)=L_{1,0}(s)$.  
We then move the integration path to the value $s=c_2$ which attains the 
minimum of $L(s)-s$ at $s=\kappa$. We know that the 
$$J(z,c_2)=O(z^{\kappa-\eps})$$
for $\eps>0$ such that $\kappa-\eps>L(0)$. Therefore 
$$C(z,z)=1+I(z,c_2)+O(z^{\kappa-\eps})$$
as in the case with uniform one source. Since $z^{L(s)-s}$ is at a minimum 
for real values of $s$, we have again a saddle point for 
$\int_{\Re(s)=c_2}\mu(s)z^{L(s)-s}ds$. 

Thus we arrive to two cases:
(i) either $c_1>0$ or $c_2>0$ or (ii)  $c_1$ and $c_2$ are both negative.
In the first case the condition of Theorem~\ref{th3} applies. In the second case the condition of Theorem~\ref{theogenmark} applies.
We consider the case $c_2>0$ (case $c_1>0$ is symmetric). 
Moving the integration toward $\Re(s)=c_2$ one meets the pole of function 
$\Gamma(s)$ at zero. 

When we meet the pole at $s=0$ by moving $\rho$ toward the positive value 
we obtain a residue from $I(z,\rho)$ equal to $\sum_j H_j(0,z)$ and a residue 
from $J(z,\rho)$ equal to $\hC(0,0)$. Notice that $H_1(0,z)=\Omega(z^{c_0})$ 
since $c_0=L(0)>0$ while the residue from $J(z,\rho)$ is 
negligible. The function $H_1(0,z)$ turns out to be the leading term since 
the other terms are of order $z^{L_j(0)}$ for $j>1$ and 
$\Re(L_j(0))<L_1(0)$. By moving again the integration path with $\Re(s)>0$ 
we arrive at $\Re(s)=c_2$ thus 
$$C(z,z)=\sum_j H_j(0,z)+I(z,c_2)+O(z^{-M}).$$ 

We know from the previous discussion that 
$I(z,c_2)=O(z^\kappa)$ which is of order smaller than $z^{L(0)}$ 
per definition of $\kappa$. Therefore we have
$C(z,z)=1+H_1(0,z)+O(z^{c_0-\eps})$ for some $\eps>0$. 
Notice that 
$$H_1(0,z)=\frac{f_1(0)g_1(0)}{\lambda_1(0)}z^{c_0}+Q_1(\log z)z^{c_0}
$$
where $Q_1(.)$ is a periodic function of periodic  $\log|\CA|$ and 
of mean 0 with small amplitude. 

Recapitulating all cases of Theorem~\ref{th3}:

(i) when $\bP(-1,0)$ is not logarithmically related, then
$$
C_{n,n}=\gamma_1(c_0,0)n^{-c_0}(1+o(1))
$$
with 
$$
\gamma_1(s_1,s_2)=\frac{f(s_1,s_2)g(s_1,s_2)(1+s_1)\Gamma(s_1)}{H_1(s_1,s_2)}.
$$
\parsec
(ii) When $\bP(-1,0)$ is logarithmically commensurable
$$
C_{n,n}=
n^{-c_0}\sum_{k\in\mathbb{Z}}n^{2ik\pi\nu}\gamma(c_0+2ik\pi\nu,0)
+O(n^{c_0-\eps})
$$
where $\nu$ is the root of $\bP_1$.
\end{proof}

\paragraph*{Remark} Remember that when all coefficients of $\bP_2$ 
are non-negative $\bP(-1,0)=\bP_1$ which is not necessarily the case when 
$\bP_2$ has some null coefficients. 

\begin{proof}[Proof of Theorems~\ref{theogenmark} and~\ref{theogege}]
We need the following lemma which is basically equivalent 
of Lemmas~\ref{lem9} and~\ref{lem18} developed in the special case. 

\begin{lemma}
There exists $\eps>0$ such that for all integers $j>1$ and for all 
integers $k$ the quantities
\be
\int_{\Re(s)=c_2}\frac{f_{j,k}(s)g_{j,k}(s)\Gamma(-L_{j,k}(s))\Gamma(s)}{-\frac{\partial}{\partial s_1}\lambda_j(-L_{j,k}(s),s)}z^{L_{j,k}(s)-s}ds
\ee
are uniformly $O(z^{\kappa-\eps})$. 
\end{lemma}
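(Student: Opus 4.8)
The plan is to reduce the statement to a single uniform separation estimate on the line $\Re(s)=c_2$ and then bound the integral trivially. The estimate to establish is that there is $\eps>0$ with
\[
\Re\bigl(L_{j,k}(s)-s\bigr)\le\kappa-\eps
\qquad\text{for all }j>1,\ k\in\mathbb{Z},\ \Re(s)=c_2 .
\]
Granted this, $|z^{L_{j,k}(s)-s}|\le z^{\kappa-\eps}$ on the path, so the displayed integral is at most $z^{\kappa-\eps}$ times
\[
\int_{\Re(s)=c_2}\left|\frac{f_{j,k}(s)g_{j,k}(s)\,\Gamma(-L_{j,k}(s))\,\Gamma(s)}{\frac{\partial}{\partial s_1}\lambda_j(-L_{j,k}(s),s)}\right|\,|ds| ,
\]
and it remains to see this is finite and bounded uniformly in $j$ (automatic, since $1<j\le|\CA|$) and in $k$. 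The factor $\Gamma(s)$ decays exponentially as $|\Im(s)|\to\infty$ along the vertical line; $f_{j,k}$, $g_{j,k}$ and $\partial\lambda_j/\partial s_1$ are bounded, and the derivative is bounded away from $0$, off the discrete exceptional set $\CR$ exactly as in Section~\ref{sec-special}; and $\sum_{k}|\Gamma(-L_{j,k}(s))|$ converges because $\Im\bigl(L_{j,k}(s)\bigr)$ marches to infinity with $|k|$, playing the role of the convergent series $\xi(\cdot)$ from the proof of Theorem~\ref{theogenmark0}. The contribution of tiny loops around points of $\CR$, where $\Re(L_{j,k}(s))\to-\infty$, is absorbed in an $O(z^{1-M})$ error as in the proof of Lemma~\ref{lem-rho}.

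The core of the argument is the separation estimate, which I would prove pointwise first and then make uniform. For the pointwise part, fix $j>1$, $k$, and $s_0$ with $\Re(s_0)=c_2$, and suppose for contradiction that $\Re\bigl(L_{j,k}(s_0)-s_0\bigr)=\kappa$, i.e.\ $\Re\bigl(L_{j,k}(s_0)\bigr)=-c_1$. Set $\sigma_1=-L_{j,k}(s_0)$, so $\Re(\sigma_1,s_0)=(c_1,c_2)$ and, by the very definition of $L_{j,k}$, $\lambda_j(\sigma_1,s_0)=1$; thus $\bP(\sigma_1,s_0)$ has an eigenvalue equal to $1$. Taking moduli entrywise gives $|\bP(\sigma_1,s_0)|=\bP(c_1,c_2)$, whose Perron eigenvalue is $1$, so Lemma~\ref{lem-conj} forces $\bP(\sigma_1,s_0)$ to be imaginary conjugate to $\bP(c_1,c_2)$ and hence cospectral with it. But $\bP(c_1,c_2)$ has a unique eigenvalue of maximal modulus, equal to $1$, all others being of modulus $<1$; since for $j>1$ the value $\lambda_j(\sigma_1,s_0)$ is, by construction, an eigenvalue of strictly smaller modulus than the dominant one, it cannot equal $1$ --- a contradiction. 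One must check that the analytic sheet $L_{j,k}$ does not relabel $j$ when continued from $(c_1,c_2)$ toward $(\sigma_1,s_0)$; this is exactly the ``non-ambiguity of the $j$-th eigenvalue'' caveat noted when $L_{j,k}$ was introduced, handled by perturbing the path to avoid the discrete set where eigenvalue moduli coincide.

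For the uniform part we must upgrade the strict pointwise inequality $\Re\bigl(L_{j,k}(s)-s\bigr)<\kappa$ to $\le\kappa-\eps$ over the whole line and all $k$. This is where the machinery of Lemma~\ref{lem8} is used, together with the observation that the present statement is the analogue in the general setting of Lemmas~\ref{lem9} and~\ref{lem18}: if there were sequences $j_n>1$, $k_n$, $s_n$ with $\Re(s_n)=c_2$ and $\Re\bigl(L_{j_n,k_n}(s_n)-s_n\bigr)\to\kappa$, one translates the $s_n$ into a fixed fundamental domain of the imaginary periodicity that organizes $\partial\CK$ (Theorem~\ref{theoDK}), extracts a convergent subsequence, and passes to the limit with Lemma~\ref{lem8} to produce a point $(\sigma_1^\star,s^\star)$ with $\Re(\sigma_1^\star,s^\star)=(c_1,c_2)$ at which a non-dominant eigenvalue attains modulus $1$, contradicting the pointwise step. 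The infinitely many $k$ are no obstruction: for fixed $j$ the sheets $L_{j,k}$ differ only in the imaginary direction, so $\Re\bigl(L_{j,k}(s)\bigr)$ is essentially $k$-independent, and in any case only finitely many $k$ bring $z^{L_{j,k}(s)-s}$ within $z^{\kappa-\eps}$ of the critical size, the rest being killed by the $\Gamma$-decay already used above.

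The step I expect to be the main obstacle is this last one --- making the limiting/compactness argument rigorous when the dominant eigenvalue $\lambda(s_1,s_2)$ is identically $1$ on the whole infinite set $\partial\CK$, so that there is no spectral-radius gap to exploit directly along $\Re(s_1)=c_1$, while simultaneously keeping the estimate uniform over all branches $j>1$ and all integers $k$. The remainder --- the pointwise Perron--Frobenius / Lemma~\ref{lem-conj} dichotomy and the $\Gamma$-decay bookkeeping --- is routine given the earlier sections.
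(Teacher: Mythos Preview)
Your plan is the paper's: reduce to a uniform separation estimate (if $\lambda_j(s_1,s_2)=1$ with $j>1$ and $\Re(s_2)=c_2$, then $\Re(s_1)>c_1+\eps$) and then bound the integral trivially. The paper's execution is more direct in two ways. First, for the pointwise step it does not call on Lemma~\ref{lem-conj} or worry about eigenvalue relabeling: it just uses $|\lambda_1(s_1,s_2)|\ge|\lambda_j(s_1,s_2)|=1$, the bound $|\lambda_1(s_1,s_2)|\le\lambda_1(\Re(s_1),c_2)$, and strict monotonicity of $\lambda_1$ in $\Re(s_1)$ to get the weak inequality $\Re(s_1)\ge c_1$. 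Second---and this dissolves the obstacle you anticipate---Lemma~\ref{lem8} is applied \emph{directly} to a putative contradicting sequence $(x_k,y_k)$ with $\lambda_j(x_k,y_k)=1$, $\Re(y_k)=c_2$, $\Re(x_k)\to c_1$: from $1\le|\lambda_1(x_k,y_k)|\le\lambda_1(\Re(x_k),c_2)\to 1$ the hypothesis $|\lambda_1(x_k,y_k)|\to 1$ of Lemma~\ref{lem8} is met, and that lemma gives $|\lambda_j(x_k,y_k)/\lambda_1(x_k,y_k)|\to|\lambda_j(c_1,c_2)|/\lambda_1(c_1,c_2)<1$, contradicting $|\lambda_j(x_k,y_k)|=1$. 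No compactness on imaginary parts and no translation into a fundamental domain of $\partial\CK$ are needed, because Lemma~\ref{lem8} imposes no boundedness hypothesis on $\Im(x_k,y_k)$; the same observation makes the uniformity in $k$ automatic.
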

\begin{proof}
The proof consists of showing that there exist $\eps>0$ such that if 
$\lambda_j(s_1,s_2)=1$ with $\Re(s_2)=c_2$ then $\Re(s_1)>c_1+\eps$. 
First we prove that $\Re(s_1)\ge c_1$. We know that 
$|\lb_1(s_1,s_2)|\ge|\lb_j(s_1,s_2)|=1$. Since 
$|\lb_1(s_1,s_2)|\le\lb_1(\Re(s_1),\Re(s_2))$. We have $\Re(s_2)=c_2$, 
thus the inequality $\lb_1(\Re(s_1),c_2)\ge 1$ implies that 
$\Re(s_1)\ge c_1$ since $\lb_1(\Re(s_1),\Re(s_2))$ is strictly 
increasing in $\Re(s_1)$ and $\Re(s_2)$. 

Second we prove the existence of $\eps$. By absurdum we assume that there 
is a sequence of complex numbers $(x_k,y_k)$ such that 
$\lb_j(x_k,y_k)=1$ and $\Re(y_k)=c_2$ and $\Re(x_k)\to c_1$ with $\Re(x_k)\ge c_1$. 
We know that $|\lb_1(x_k,y_k)|\ge|\lb_j(x_k,y_k)|=1$. From the inequality 
$$\lb_1(\Re(x_k),\Re(y_k))\ge|\lb_1(x_k,y_k)|\ge 1$$ we get that 
$|\lb(x_k,y_k)|\to 1$, since $(\Re(x_k),\Re(y_k))\to(c_1,c_2)$. This would 
imply that $$\left|\frac{\lb_j(x_k,y_k)}{\lb_1(x_k,y_k)}\right|\to 1,$$ 
which contradicts Lemma~\ref{lem8}.
\end{proof}

The above lemma fills the gap necessary to establish 
of Theorems~\ref{theogenmark} and~\ref{theogege}
by following the footsteps of the proofs of 
Theorems~\ref{theogenmark0} and~\ref{theo01}.
\end{proof}

\section*{Appendix}


\begin{proof}[Proof of Lemma~\ref{lemc2}]
Let $\forall(a,b)\in\CA^2$ $P_1(a|b)>0$. Thus $(0,-1)\in\CK$, 
but since some $P_2(a|b)$ may be zero, the point $(-1,0)$ may not be $\CK$. 
But if $(-1,s_2)\in\CK$ with $s_2\in\mathbb{R}$  then $s_2\ge 0$, 
otherwise since $\bP(-1,s_2)\le\bP_1$ coefficientwise,  
then $\lb(-1,s_2)<1$. 
Similarly if $(s_1,0)\in\CK$ then $s_1\ge -1$.

We know that the curve $(-s_1,-s_2)$ is convex for $(s_1,s_2)\in\CK$, 
so is the curve $(s_1,-s_1-s_2)$, then  $-s_1-s_2$ is a function of 
$s_1$, say $a(s_1)$. We have $a(0)=1$ and $a(-1)\le 1$. Thus the 
minimum value of $a(s_1)$ which is $\kappa$ is attained on $c_1$ 
which must satisfies $c_1\le 0$ We then have $c_1<0$ 
when the curve is strictly convex. 

Similarly, the curve $(s_2,-s_1-s_2)$ is convex, so is the function 
$b(s_2)=-s_1-s_2$. Since $b(-1)=1$ and $b(0)\le 1$ the minimum is 
necessarily attained on $c_2\ge -1$, and $c_2>-1$ when it is strictly convex.
\end{proof}

\med
\begin{proof}[ Proof of Lemma~\ref{lemDePo}]
In order to prove Lemma~\ref{lemDePo} we adopt here the following general 
double depoissonization lemma that is proved in \cite{js-book} 
(see Lemmma 10.3.4 in Chapter  10).

\begin{lemma}
Let $a_{n,m}$ be a two-dimensional (double) sequence of complex numbers.
We define the double Poisson transform $f(z_1,z_2)$ of $a_{n,m}$ as
$$
f(z_1,z_2)=\sum_{n,m\ge 0} a_{n,m}
\frac{z_1^n}{n!}\frac{z_1^m}{m!}e^{-z_1-z_2}.
$$
Let now $\S_\th$ be a cone of angle $\theta$ around the real axis.
Assume that there exist $B>0$, $D>0$, $\alpha<1$ and $\beta$ such that
for $|z_1|$, $|z_2|\to\infty$:
\begin{itemize}
\item[{\rm (i)}] if $z_1,z_2\in\S_\th$ then $|f(z_1,z_2)|=
B (|z_1|^\beta+|z_2|^\beta)$;
\item[{\rm (ii)}] if $z_1,z_2\notin\S_\th$ then $|f(z_1,z_2)e^{z_1+z_2}|=
De^{\alpha|z_1|+\alpha|z_2|}$;
\item[{\rm (iii)}] if $z_i\in\S_\th$ and $z_j\notin\S_\th$ for
$\{i,j\}=\{1,2\}$ and $~|f(z_1,z_2)e^{z_j}|<D|z_i|^\beta e^{\alpha|z_j|}.$
\end{itemize}
Then 
$$
a_{n,m}=f(n,m)+O\left(\frac{n^\beta}{m}+\frac{m^\beta}{n}\right)
$$
for large $m$ and $n$.
\label{coro1}
\end{lemma}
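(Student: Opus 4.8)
The plan is to prove Lemma~\ref{coro1} exactly along the lines sketched in Section~\ref{sec-depo}: reduce the two-dimensional depoissonization to two successive applications of the one-dimensional analytic depoissonization theorem of \cite{js98,spa-book}, first in $z_1$ (with $z_2$ frozen) and then in $z_2$ (with $z_1=n$ frozen). The starting observation is that if one sets
$$
f_n(z_2)=\sum_{m\ge 0} a_{n,m}\frac{z_2^m}{m!}e^{-z_2},
$$
then for each fixed $z_2$ the function $z_1\mapsto f(z_1,z_2)$ is the univariate Poisson transform of the sequence $(f_n(z_2))_{n\ge 0}$, while for each fixed $n$ the function $z_2\mapsto f(n,z_2)$ is the univariate Poisson transform of $(a_{n,m})_{m\ge 0}$; thus $f(n,z_2)$ and $f_n(z_2)$ are the two intermediate objects to be matched.

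First I would depoissonize in $z_1$. Freezing $z_2$ and splitting into the cases $z_2\in\S_\th$ and $z_2\notin\S_\th$, hypotheses (i), (iii) (resp. (ii), (iii)) are precisely the polynomial-in-cone / exponential-out-of-cone growth conditions required by the univariate theorem applied to $f(z_1,z_2)$ (resp. to $f(z_1,z_2)e^{z_2}$), with the ``constant'' in front being the advertised power $|z_2|^\beta$ (resp. the factor $e^{\alpha|z_2|}$). The univariate theorem then yields, uniformly in $z_2$ and for every integer $k>0$,
$$
f_n(z_2)=f(n,z_2)+O\!\left(n^{\beta-1}+\frac{|z_2|^\beta}{n}\right)+O\!\left(|z_2|^\beta n^{\beta-k}\right)\quad(z_2\in\S_\th),
$$
and correspondingly $f_n(z_2)e^{z_2}=f(n,z_2)e^{z_2}+O(n^{\beta-1}e^{\alpha|z_2|})+O(n^{\beta-k}e^{\alpha|z_2|})$ when $z_2\notin\S_\th$. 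Next, from these estimates together with (i)–(iii) one reads off uniform-in-$n$ growth bounds on the intermediate sequence itself — $f_n(z_2)=O(n^\beta+|z_2|^\beta)$ for $z_2\in\S_\th$ and $f_n(z_2)e^{z_2}=O(n^\beta e^{\alpha|z_2|})$ for $z_2\notin\S_\th$ — which are exactly the hypotheses of the univariate theorem in the variable $z_2$. Applying it a second time at $z_2=m$ gives, for every $k>\beta$,
$$
a_{n,m}=f_n(m)+O\!\left(\frac{n^\beta}{m}+\frac{m^\beta}{n}\right)+O\!\left(n^\beta m^{\beta-k}\right).
$$

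To finish I would specialize the first-step estimate to $z_2=m$ (which lies in $\S_\th$), obtaining $f_n(m)=f(n,m)+O(n^{\beta-1}+m^\beta/n)+O(m^\beta n^{\beta-k})$, substitute this into the last display, and choose $k>\beta+1$. Since $m$ and $n$ are of the same order, the remainders $O(m^\beta n^{\beta-k})$ and $O(n^\beta m^{\beta-k})$ are then absorbed into $O(n^\beta/m+m^\beta/n)$, leaving precisely $a_{n,m}=f(n,m)+O(n^\beta/m+m^\beta/n)$. The main obstacle — and the only genuinely delicate point — is the \emph{uniformity} of the error terms in the frozen variable: one must invoke the univariate depoissonization theorem in the explicit form in which its $O$-constants depend on the source function only through its declared growth data, so that the remainders in the first step depend on $z_2$ solely through $|z_2|^\beta$ (resp. $e^{\alpha|z_2|}$) and the bounds feeding the second step are uniform in $n$. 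This is exactly the version proved in \cite{js98,spa-book}; with it in hand, the remaining work is the bookkeeping of the four regimes (``$z_2$ in/out of $\S_\th$'', then ``$z_2=m$ in/out of $\S_\th$''), where the out-of-cone branches serve only to certify the hypotheses of the univariate theorem and the in-cone branches, evaluated at the real point $m$, produce the final estimate.
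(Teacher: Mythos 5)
Your proposal is correct and follows essentially the same route as the paper: its Section on double depoissonization carries out exactly this two-pass application of the univariate analytic depoissonization theorem of \cite{js98,spa-book} (first in $z_1$ with $z_2$ frozen, split according to $z_2$ in or out of the cone, then in $z_2$ evaluated at $z_2=m$ after extracting uniform growth bounds on $f_n(z_2)$), with full details delegated to Lemma 10.3.4 of \cite{js-book}. The only cosmetic difference is your appeal to $m\asymp n$ at the very end, which is unnecessary: choosing $k>\beta+1$ already makes the remainder $O(n^\beta m^{\beta-k})$ absorbable into $O(n^\beta/m)$ for all $m\ge 1$.
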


Just to prove the Lemma~\ref{lemDePo}, we need to establish three 
conditions (i)-(iii) of Lemma~\ref{coro1}.
We accomplish it through a generalization of the  so called
{\it increasing domain} approach discussed in \cite{js98,spa-book}.

We first prove the lemma for the generating functions
$C_a(z_1,z_2)$ for every $a\in\CA$.
Assume now that
$\rho=\max_{(a,b)\in\CA^2,i\in\{1,2\}}\{P_i(a|b)\}$. We denote by $\S_k$ part of the
cone $\S_\th$ that contains points such that $|z|<\rho^{-k}$.
Notice that $\S_k\subset\S_{k+1}$ for all integer $k$.
We also notice $C(z_1,z_2)=O((|z_1|+|z_2|)^2)$ when $z_1,z_2\to 0$,
therefore we can define
$$
B_k=\max_{a\in\CA,(z_1,z_2)\in\S_k\times\S_k}\frac{|C_a(z_1,z_2)|}{|z_1|+|z_2|}
<\infty~.
$$
We use the functional equation 
\be
C_b(z_1,z_2)=(1-(1+z_1)e^{-z_1})(1-(1+z_2)e^{-z_2})+\sum_{a\in\CA}C_a
\left(P_1(a|b) z_1,P_2(a|b)z_2\right).
\ee
In the above equation, we notice that if $(z_1,z_2)\in\S_{k+1}\times\S_{k+1}-
\S_k\times\S_k$, then
for all $(a,b)\in\CA^2$  $(P_1(a|b) z_1,P_2(a|b) z_2)$ are in
$\S_k\times\S_k$ and therefore we have for some fixed $\beta>0$ and for all $b\in\CA$:
\be
|C_b(z_1,z_2)|\le B_k(\sum_{a\in\CA}P_1(a|b)|z_1|+P_2(a|b)|z_2|)+\beta=
B_k(|z_1|+|z_2|)+\beta
\ee
since $|1-(1+z_i)e^{-z_i}|$ is uniformly bounded for all integers $k$ by
some $\sqrt{\beta}$ for both $i\in\{1,2\}$ when $(z_1,z_2)\in\S_k$.
Thus, we can derive the following recurrent inequality:
\be
B_{k+1}\le B_k+\beta\max_{(z_1,z_2)\in\S_{k+1}\times\S_{k+1}-\S_k\times\S_k}
\{\frac{1}{|z_1|+|z_2|}\}=B_k+\beta\rho^k~.
\ee
We should notice that
\be
\min_{(z_1,z_2)\in\S_{k+1}\times\S_{k+1}-\S_k\times\S_k}\{|z_1|+|z_2|\}=\rho^{-k}
\ee
because one of the number $z_i$ has modulus greater than $\rho^{-k}$.
It turns out that $\lim_{k\to\infty} B_k<\infty$, establishing
condition (i) of the double depoissonization Lemma~\ref{coro1}.

Now we are going to establish condition (iii). To this end we define $\CG$ as the
complementary cone of $\S_\th$ and $\CG_k$ as the portion made of the
point of modulus smaller than $\rho^{-k}$. We will use $\cos\theta<\alpha<1$,
therefore $\forall z\in\CG$: $|e^z|<e^{\alpha|z|}$.
We define $D_k$ as
\be
D_k=\max_{a\in\CA,(z_1,z_2)\in\CG_k\times\CG_k}\frac{|C_a(z_1,z_2)e^{z_1+z_2}|}
{\exp(\alpha|z_1|+\alpha |z_2|)}~.
\ee
We define $G_a(z_1,z_2)=C_a(z_1,z_2)e^{z_1+z_2}$, we have the following equation
\be
G_b(z_1,z_2)=(e^{z_1}-1-z_1)(e^{z_2}-1-z_2)+\sum_{a\in\CA}C_a
\left(P_1(a|b) z_1,P_2(a|b)z_2\right)e^{1-P_1(a|b)z_1+(1-P_2(a|b))z_2}~.
\ee
We notice that if $(z_1,z_2)\in\CG_{k+1}\times\CG_{k+1}-\CG_k\times\CG_k$, then
all  $(P_1(a|b) z_1,P_2(a|b) z_2)$ are in $\CG_k\times\CG_k$
and therefore we have for all $b\in\CA$:
\begin{eqnarray*}
|G_b(z_1,z_2)|&\le& D_k\left(\sum_{a\in\CA}\exp\left( (P_1(a|b)\alpha+(1-P_1(a|b))
\cos\theta )|z_1|
+(P_2(a|b)\alpha+(1-P_2(a|b))\cos\theta)|z_2|\right)\right)\\
&&+(e^{\cos\theta|z_1|}+1+|z_1|)(e^{\cos\theta|z_2|}+1+|z_2|).
\end{eqnarray*}
We notice that $\forall (a,b)\in\CA^2$ and $\forall i\in\{1,2\}$:
\be
P_i(a|b)\alpha+(1-P_i(a|b))\cos\theta-\alpha\le-(1-\rho)(\alpha-\cos\theta)~,
\ee
We also have $e^{\cos\theta|z_i|}+1+|z_i|\le e^{\cos\theta|z_i|}(2+
\frac{1}{e\cos\theta})$, therefore
\be
\frac{|G_b(z_1,z_2)|}{\exp(\alpha(|z_1|+|z_2|))}\le D_k|\CA|e^{-(1-\rho)
(\alpha-\cos\theta)(|z_1|+|z_2|)}+(2+\frac{1}{e\cos\theta})^2
e^{-(\alpha-\cos\theta)(|z_1|+|z_2|)}~.
\ee
Since $(z_1,z_2)\in\CG_{k+1}\times\CG_{k+1}-\CG_k\times\CG_k$ implies
$|z_1|+|z_2|\ge\rho^{-k}$ it  follows
\be
D_{k+1}\le \max\left\{D_k,|\CA|D_k e^{-(1-\rho)(\alpha-\cos\theta)\rho^{-k}}+
(2+\frac{1}{e\cos\theta})^2 e^{-(\alpha-\cos\theta)\rho^{-k}}\right\}~.
\ee
We clearly have $\lim_{k\to\infty} D_k<\infty$ and condition (iii) is established.

The proof of condition (ii) for $z_1$ and $z_2$  being
in $\S_\th$ and $\CG$ is a mixture of
the above proofs.
Furthermore, the proof about the unconditional generating function $C(z_1,z_2)$ is a
trivial extension. 
\end{proof}

\med
\begin{proof}[ Proof of Lemma~\ref{lem-conj}]
Let $\bun=(u_a)_{a\in\CA}$ be the right eigenvector of $\bM$ and
$(v_a)_{a\in\CA}$ be the right eigenvector of $\bQ$. Let also $v_a=x_a u_a$. 
If 1 is the eigenvalue, we have for all $c\in\CA$:
\be
(1-e^{i\theta_{cc}}m_{cc})u_c=\sum_{b\neq c}m_{cb}u_be^{i\theta_{cb}}
\frac{x_b}{x_c}~.
\ee
If $e^{i\theta_{cc}}\neq 1$, then
\be
|(1-e^{i\theta_{cc}}m_{cc})u_c|>(1-m_{cc})u_c.
\ee
By the Perron-Frobenius theorem all $u_a$ are real non negative.
Suppose that $|x_c|=\max_{a\in\CA}\{|x_a|\}$.
If $\exists d\in\CA$: $\frac{|x_d|}{|x_c|}<1$ or if $(b,b')\in(\CA-\{c\})^2$: 
$e^{i\theta_{cb}}\frac{x_b}{x_c}\neq e^{i\theta_{cb'}}\frac{x_b'}{x_c}$. Then
\be
\left|\sum_{b\neq c}m_{cb}u_be^{i\theta_{cb}}\frac{x_b}{x_c} 
\right|<\sum_{b\neq c}m_{cb}u_b~.
\ee
But we also know that 
\be
(1-m_{cc})u_c=\sum_{b\neq c}m_{cb}u_b~.
\ee
Therefore, we have 
$e^{i\theta_{cc}}=1$ and for all $b\in\CA$: $|x_b|=|x_c|$, and for all
$(b,b')\in(\CA-\{c\})^2$: $e^{i\theta_{cb}}\frac{x_b}{x_c}= 
e^{i\theta_{cb'}}\frac{x_b'}{x_c}$. But since for all $b\in\CA$ $|x_b|=|x_c|$ 
every symbol in $\CA$ can play the role of $c$. Since for all $c\in\CA$
\be
(1-m_{cc})=\sum_{b\neq c}m_{cb}u_be^{i\theta_{cb}}\frac{x_b}{x_c}=
\sum_{b\neq c}m_{cb}u_b~,
\ee
we simply have $\forall (a,b)\in\CA$: $e^{i\theta_{ab}}\frac{x_b}{x_a}=1$.
Denoting $x_a=e^{i\theta_a}$ we prove the expected result. 
The converse proposition is immediate.
\end{proof}
\med
\begin{proof}[Proof of Lemma~\ref{lemconc} and~\ref{lemconv}]
We call $\tilde{\CK}$ the set of real tuples such that $\lb(s_1,s_2)\le 1$. The set
$\CK$ is the topological border of $\tilde{\CK}$ and since $\lb(s_1,s_2)$ decreases when
$s_1$ or $s_2$ decrease, it is the upper border. We will show that
$\tilde{\CK}$ is a convex set and thus its upper border is concave.
Let $(x_1,x_2)$ and $(y_1,y_2)$ be two elements of $\tilde{\CK}$ and $\alpha$
and $\beta$ two non negative real numbers such that $\alpha+\beta=1$.
We want to prove that $(\alpha x_1+\beta y_1,\alpha x_2+\beta y_2)\in\tilde{\CK}$.

By construction
$$
\bP(\alpha x_1+\beta y_1,\alpha x_2+\beta y_2)=\bP(\alpha x_1,\alpha x_2)
\star\bP(\beta y_1,\beta y_2)
$$
where $\star$ denotes the Schur product.
For $(s_1,s_2)\in\tilde{\CK}$ let $\bun(s_1,s_2)$ the right main
eigenvector of $\bP(s_1,s_2)$, {\it i.e.}
$$\bP(s_1,s_2)\bun(s_1,s_2)=\lb(s_1,s_2)\bun(s_1,s_2).$$
We know that $\lb(s_1,s_2)\le 1$ therefore
$$
\bP(s_1,s_2)\bun(s_1,s_2)\le\bun(s_1,s_2)
$$
coefficientwise. Let $\bun(s_1,s_2)^{\star\alpha}$ denotes
the vector $\bun(s_1,s_2)$ with all its coefficients raised to
power $\alpha$. We want to give an estimate of
$$\bP(\alpha x_1,\alpha x_2)\star\bP(\beta y_1,\beta y_2)
$$
applied to the vector
$$
\bun(x_1,x_2)^{\star\alpha}\star\bun(y_1,y_2)^{\star\beta}.
$$
 Let $a\in\CA$ the coefficient of the vector
$$\bP(\alpha x_1,\alpha x_2)\star\bP(\beta y_1,\beta y_2)
\bun(x_1,x_2)^{\star\alpha}\star\bun(y_1,y_2)^{\star\beta}
$$
corresponding to symbol $a$ is equal to
$$
\sum_{b\in\CA}u_b(x_1,x_2)^{\alpha}P_1(a|b)^{-\alpha x_1}
P_2(a|b)^{-\alpha x_2}u_b(y_1,y_2)^{\beta}P_1(a|b)^{-\beta y_1}
$$
$$P_2(a|b)^{-\beta y_2}
\sum_{b\in\CA}u_b(x_1,x_2)^{\alpha}P_1(a|b)^{-\alpha x_1}
P_2(a|b)^{-\alpha x_2}u_b(y_1,y_2)^{\beta}P_1(a|b)^{-\beta y_1}P_2(a|b)^{-\beta y_2}.
$$
Using H{\"o}lder inequality, the above quantity is smaller than
\be
\left(\sum_{b\in\CA}u_b(x_1,x_2)P_1(a|b)^{-x_1}P_2(a|b)^{-x_2}\right)^\alpha
\left(\sum_{b'\in\CA}u_b(y_1,y_2)P_1(a|b')^{-y_1}P_2(a|b')^{-y_2}\right)^\beta.
\ee
The above terms are 
respectively 
$\lb(x_1,x_2) u_a(x_1,x_2)$ and $\lb(y_1,y_2) u_a(y_1,y_2)$.
Therefore the vector
$$\bP(\alpha x_1,\alpha x_2)\star\bP(\beta y_1,\beta y_2)
\bun(x_1,x_2)^{\star\alpha}\star\bun(y_1,y_2)^{\star\beta}
$$
is coefficientwise smaller than
$$\lb^{\alpha}(x_1,x_2)\lb^\beta(y_1,y_2)\bun(x_1,x_2)^{\star\alpha}
\star\bun(y_1,y_2)^{\star\beta}.
$$
Since $\lb^{\alpha}(x_1,x_2)\lb(^\beta(y_1,y_2)\le 1$ by
Perron-Frobenius the main eigenvalue of
$\bP(\alpha x_1+\beta y_1,\alpha x_2+\beta y_2)$ is smaller than or equal to 1,
consequently $(\alpha x_1+\beta y_1,\alpha x_2+\beta y_2)\in\bar{\CK}$.

The H{\"o}lder inequality is an equality if and only if the
vectors $(u_a(x_1,x_2)\bP(x_1,x_2))_{a\in\CA}$ and
$(u_a(y_1,y_2)\bP(y_1,y_2))_{a\in\CA}$ are colinear, which happens
when $\bP(x_1,x_2)$ and $\bP(y_1,y_2)$ are conjugate, which is
equivalent to the fact that $\bP_1$ and $\bP_2$ are conjugate
(on the coefficients which are non zero).
\end{proof}
\med
\begin{proof}[ Proof of Lemma~\ref{lem8}]
Consider the matrix $\frac{1}{\lb(x_k,y_k)}\bP(x_k,y_k)$. Since the 
coefficients of this matrix are bounded, there is no loss in generality 
to consider the sequence of matrices converging to a matrix $\bM$. 
The matrix $\bM$ and matrix $\bQ=\bP(c_1,c_2)$, as defined 
in Lemma~\ref{lem-conj}, are imaginary conjugate {\it i.e.} 
the coefficients of $\bM$ are of the form 
\be
e^{i(\theta_a-\theta_b)}P_1(a|b)^{-c_1}P_2(a|b)^{-c_2}
\ee
for some vector of real numbers $\theta_a$. Therefore, $\bM$ and 
$\bP(c_1,c_2)$ have the same spectrum. The 
spectrum of $\frac{1}{\lb(x_k,y_k)}\bP(x_k,y_k)$ converges to the spectrum 
of $\bM$. 
Furthermore, the right eigenvector $\bun(x_k,y_k)$ converges 
to the vector $e^{i\theta_a}u_a(c_1,c_2)$ and the left eigenvector 
$\bzeta(x_k,y_k)$ converges to $e^{-i\theta_a}\zeta_a(c_1,c_2)$. 

For any pair of complex numbers $(s_1,s_2)$ we have the identity
\be
\frac{1}{\lb(x_k,y_k)}\bP(x_k+s_1,y_k+s_2)=\frac{1}{\lb(x_k,y_k)}
\bP(x_k,y_k)*\bP(s_1,s_2)~.
\ee
Thus $\frac{1}{\lb(x_k,y_k)}\bP(x_k+s_1,y_k+s_2)$ converges to 
$\bM * \bP(s_1,s_2)$ and is 
conjugate to $\bP(c_1+s_1,c_2+s_2)$. 
Since the eigen spectrum of $\frac{1}{\lb(x_k,y_k)}\bP(x_k+s_1,y_k+s_2)$ 
converges to the eigen spectrum of  $\bP(c_1+s_1,c_2+s_2)$,  
thus we have $\lb(x_k+s_1,y_k+s_2)\to\lb(c_1+s_1,c_2+s_2)$. We also have 
$|\lb(x_k+s_1,y_k+s_2)|>|\lb_2(x_k+s_1,y_k+s_2)|$ when $k$ is large 
enough with $(s_1,s_2)$ in the complex neighborhood $\CU^2$ which implies 
the analyticity of $\lb(x_k+s_1,y_k+s_2)$.
Thus by Ascoli theorem the derivatives converge, too.  
\end{proof}


\begin{thebibliography}{99}

\bibitem{fgd}
P. Flajolet, X. Gourdon, and P. Dumas,
Mellin Transforms and Asymptotics: Harmonic sums,
{\it Theoretical Computer Science}, 144, 3--58, 1995.

\bibitem{bh12} V. Becher and P. A. Heiber, 
A better complexity of finite sequences, 
Abstracts of the 8th {\it Int. Conf. on Computability and Complexity in 
Analysis} and 6th {\it Int. Conf. on Computability, Complexity, 
and Randomness} , 
Cape Town, South Africa, January 31,  February 4, 2011, p. 7.

\bibitem{fs-book}
P. Flajolet and R. Sedgewick,
{\it Analytic Combinatorics}, Cambridge University Press, Cambridge, 2008.

\bibitem{IYZ02}
Ilie, L., Yu, S., and Zhang, K.
\newblock Repetition Complexity of Words
\newblock In {\em Proc. COCOON\/} 320--329, 2002.

\bibitem{jacquet}
P. Jacquet,
Common words between two random strings,
\newblock {\it IEEE Intl. Symposium on Information Theory},
1495-1499, 2007.


\bibitem{js94}
P. Jacquet, and W. Szpankowski,
Autocorrelation on Words and Its Applications. Analysis of Suffix Trees by
String-Ruler Approach,
{\it J. Combinatorial Theory Ser. A}, 66, 237--269, 1994.


\bibitem{js98}
P. Jacquet, and W. Szpankowski,
Analytical DePoissonization and Its Applications, {\it Theoretical
Computer Science}, 201, 1--62, 1998.

\bibitem{js12}
P. Jacquet and W. Szpankowski,
Joint String Complexity for Markov Sources,
{\it 23rd International Meeting on Probabilistic, Combinatorial and
Asymptotic Methods for the Analysis of Algorithms}, AofA'12,
{\it DMTCS Proc.}, 303-322, Montreal, 2012.

\bibitem{js-book}
P. Jacquet, and W. Szpankowski,
{\it Analytic Pattern Matching: From DNA to Twitter},
Cambridge University Press, Cambridge, 2015.

\bibitem{jst01}
P. Jacquet, W. Szpankowski, and J. Tang,
Average Profile of the Lempel-Ziv Parsing Scheme for a Markovian Source,
{\it Algorithmica}, 31, 318-360, 2001.

\bibitem{js16}
P. Jacquet and W. Szpankowski,
Average Size of a Suffix Tree for Markov Sources,
{\it  27th International Meeting on Probabilistic, Combinatorial and
Asymptotic Methods for the Analysis of Algorithms}, AofA'16, Krakow, 2016.


\bibitem{jls04}
S. Janson, S. Lonardi and W. Szpankowski,
On Average Sequence Complexity,
{\it Theoretical Computer Science}, 326, 213-227, 2004.

\bibitem{hj}
R.~A.~Horn and C.~R.~Johnson,
{\it Matrix Analysis}, Cambridge University Press, Cambridge, 1985.

\bibitem{ralph13}
K. Leckey, R. Neininger, and W. Szpankowski,
Towards More Realistic Probabilistic Models for Data Structures:
The External Path Length in Tries under the Markov Model,
{\it  SIAM-ACM Symposium on Discrete Algorithms} (SODA 2013),  877-886,
New Orleans, 2013.


\bibitem{Li93}
 Li, M., and Vitanyi, P.
\newblock {\em Introduction to Kolmogorov Complexity and its Applications.}
\newblock Springer-Verlag, Berlin, Aug. 1993.

\bibitem{ms13}
N. Merhav and W. Szpankowski,
Average Redundancy of the Shannon Code for Markov Sources,
{\it IEEE Trans. Information Theory}, 59, 7186-7193, 2013.


\bibitem{niederreiter}
Niederreiter, H.,
\newblock Some computable complexity measures for binary sequences,
\newblock In {\em Sequences and Their Applications}, Eds. C. Ding,
T. Hellseth and H. Niederreiter
\newblock Springer Verlag,  67-78, 1999.

\bibitem{fw05}
J. Fayolle, M. Ward,
\newblock Analysis of the average depth in a suffix tree under a Markov model
\newblock DMTCS Proceedings of AofA 2005.

\bibitem{pittel85}
B. Pittel,
Asymptotic Growth of a Class of Random Trees,
{\it Annals of Probability}, 18, 414--427, 1985.


\bibitem{rs98}
M. R\'{e}gnier and W. Szpankowski,
On pattern frequency occurrences in a Markovian sequence,
{\it Algorithmica}, 22, 631-649, 1998.

\bibitem{spa-book}
W. Szpankowski,
{\it Analysis of Algorithms on Sequences},
John Wiley, New York, 2001.

\bibitem{ziv}
J. Ziv, 
On classification with empirically observed statistics and universal data
compression, 
{\it IEEE Trans. Information Theory}, 34, 278-286, 1988.


\end{thebibliography}
\end{document}